\documentclass[acmsmall,screen,nonacm,timestamp]{acmart}\settopmatter{printccs=false,printacmref=false}

\synctex=1

\usepackage{etoolbox}\newtoggle{techreport}
\toggletrue{techreport}

\iftoggle{techreport}{\renewcommand\footnotetextcopyrightpermission[1]{}
}{}

\setcopyright{cc}
\setcctype{by}
\acmDOI{10.1145/3763069}
\acmYear{2025}
\acmJournal{PACMPL}
\acmVolume{9}
\acmNumber{OOPSLA2}
\acmArticle{291}
\acmMonth{10}
\received{2025-03-26}
\received[accepted]{2025-08-12}

\bibliographystyle{ACM-Reference-Format}

\usepackage{booktabs}   \usepackage{subcaption} 

\usepackage[mathcal]{eucal}

\usepackage[textsize=footnotesize,colorinlistoftodos,disable]{todonotes}

\newcommand{\note}[1]{\todo[disable,color=green!40]{{#1}}}

\makeatletter
\definecolor{todocitecolor}{rgb}{0,0.8,0.8}

\define@key{todonotes}{cite}[]{\setkeys{todonotes}{disable,prepend,caption={Cite},color=todocitecolor}}\define@key{todonotes}{coq}[]{\setkeys{todonotes}{disable,prepend,caption={Coq},color=yellow}}\define@key{todonotes}{nit}[]{\setkeys{todonotes}{disable,prepend,caption={nit},color=green}}\define@key{todonotes}{review}[]{\setkeys{todonotes}{disable,prepend,caption={Reviews requested},color=cyan}}\define@key{todonotes}{imp}[]{\setkeys{todonotes}{disable,color=red}}

\makeatother

\usepackage{siunitx}
\usepackage{mathtools}
\usepackage{mdframed}
\usepackage{tabularx} \usepackage{ifthen}
\usepackage[english]{babel}
\usepackage[utf8]{inputenc}
\usepackage{changepage}
\usepackage{amsmath}
\usepackage{thmtools}
\usepackage{thm-restate}
\usepackage{stmaryrd}\usepackage{graphicx, xcolor}
\usepackage{hyperref}
\usepackage{cleveref}
\usepackage{listings}
\usepackage{tikz}\usetikzlibrary{arrows,positioning,decorations.pathmorphing,arrows.meta,bending}
\usepackage[inline]{enumitem}
\usepackage{array}
\usepackage{stackengine}
\usepackage{svg}
\usepackage{fontawesome}
\usepackage{adjustbox}

\usepackage{wrapfig} 

\usepackage{xspace}

\graphicspath{{./img/}}

\crefname{lstlisting}{listing}{listings}
\Crefname{lstlisting}{Listing}{Listings}

\lstset
{
    basicstyle=\scriptsize\ttfamily,
    numbers=left,
    numberstyle=\tiny\color{gray},
    stepnumber=1,
    showstringspaces=false,
    tabsize=1,
    breaklines=true,
    breakatwhitespace=false,
    frame=single,
    xleftmargin=2.5em,
    framexleftmargin=2em,
    linewidth=.99\textwidth,
    escapechar=\%
}

\AtEndPreamble{\theoremstyle{acmdefinition}

\newtheorem{criterion}[theorem]{Criterion}

\crefname{criterion}{criterion}{criteria}
\Crefname{criterion}{Criterion}{Criteria}
\crefname{property}{property}{properties}
\Crefname{property}{Property}{Properties}
}

\usepackage{deduction}
\usepackage{dlstalk}

\definecolor{dkbrown}{rgb}{0.4,0.4,0}
\definecolor{dkred}{rgb}{0.6,0,0}
\definecolor{dkgreen}{rgb}{0,0.3,0}
\definecolor{ltgreen}{rgb}{0,0.5,0.1}
\definecolor{dkblue}{rgb}{0,0,0.6}
\definecolor{ltblue}{rgb}{0,0.4,0.4}
\definecolor{dkviolet}{rgb}{0.3,0,0.5}

\lstdefinelanguage{Coq}{texcl=false,
    backgroundcolor={},
morekeywords=[1]{Section, Module, End, Require, Import, Export, Variable,
        Variables, Parameter, Parameters, Axiom, Hypothesis, Hypotheses,
        Notation, Local, Tactic, Reserved, Scope, Open, Close, Bind, Delimit,
        Definition, Let, Ltac, Ltac2, Fixpoint, CoFixpoint, Add, Morphism,
        Relation, Implicit, Arguments, Unset, Contextual, Strict, Prenex,
        Implicits, Inductive, CoInductive, Record, Structure, Canonical,
        Coercion, Context, Class, Global, Instance, Program, Infix, Theorem,
        Lemma, Corollary, Proposition, Fact, Remark, Example, Proof, Goal, Save,
        Qed, Defined, Hint, Resolve, Rewrite, View, Search, Show, Print,
        Printing, All, Eval, Check, Projections, inside, outside, Def},
morekeywords=[2]{forall, exists, exists2, fun, fix, cofix, struct,
        match, with, end, as, in, return, let, if, is, then, else, for, of,
        nosimpl, when},
morekeywords=[3]{Type, Prop, Set, true, false, option, bool, nat, list},
morekeywords=[4]{pose, set, move, case, elim, apply, clear, hnf, intro,
        intros, generalize, rename, pattern, after, destruct, induction, using,
        refine, inversion, injection, inversion\_clear, rewrite, congr, unlock,
        compute, ring, field, fourier, replace, fold, unfold, change,
        cutrewrite, simpl, have, suff, wlog, suffices, without, loss, nat_norm,
        assert, consider, have, cut, trivial, revert, bool_congr, nat_congr,
        symmetry, transitivity, auto, split, left, right, autorewrite},
morekeywords=[5]{by, done, now, exact, reflexivity, tauto, romega, omega,
        assumption, solve, contradiction, discriminate},
morekeywords=[6]{do, last, first, try, idtac, repeat},
morecomment=[s]{(*}{*)},
    morecomment=[s][\ttfamily\color{dkbrown}]{(**}{**)},
    morecomment=[s][\ttfamily\small\color{gray}]{by}{.},
aboveskip=3pt,belowskip=3pt,showstringspaces=false,
morestring=[b]",
tabsize=3,
extendedchars=false,
sensitive=true,
breaklines=false,
basicstyle=\scriptsize,
captionpos=b,
columns=fixed,
    basewidth=3.5pt,
keepspaces=true,
identifierstyle={\ttfamily\color{black}},
keywordstyle=[1]{\ttfamily\color{dkviolet}},
keywordstyle=[2]{\ttfamily\color{dkblue}},
keywordstyle=[3]{\ttfamily\color{ltblue}},
keywordstyle=[4]{\ttfamily\color{dkblue}},
keywordstyle=[5]{\ttfamily\color{dkgreen}\bfseries},
stringstyle=\ttfamily,
commentstyle={\ttfamily\color{ltgreen}},
literate=
    {\\In}{{\(\in\)}}1
    {\\in}{{\(\in\)}}1
    {\\notin}{{\(\not\in\)}}1
    {forall}{{{\color{dkgreen}\(\forall\)}}}1
    {exists}{{{\color{dkred}\(\exists\)}}}1
    {fun}{{{\color{dkblue}\(\lambda\)}}}1
    {\ nat}{{\ \(\mathbb{N}\)}}2
    {\ bool}{{\ \(\mathbb{B}\)}}2
    {<-}{{\(\leftarrow\)}}2
    {=>}{{\(\Rightarrow\)}}2
    {==}{{\texttt{==}\;}}2
    {==>}{{\texttt{==>}\;}}3
    {:>}{{\texttt{:>}}}2
    {:=}{{\(\coloneq\)}}2
    {->}{{\(\rightarrow\)}}2
    {<->}{{\(\leftrightarrow\)}}2
    {<==}{{\(\leq\)}}1
    {<>}{{\(\neq\)}}1
    {<=}{{\(\leq\)}}1
    {|-}{{\(\vdash\)}}1
    {;}{{\texttt{;}}}1
    {:}{{\texttt{:}}}1
    {\#}{{\(^\star\)}}1
    {\\o}{{\(\circ\)}}1
{\/\\}{{\(\wedge\)}}1
    {\\\/}{{\(\vee\)}}1
    {False}{{\(\bot\)}}1
    {True}{{\(\top\)}}1
    {++}{{\texttt{++}}}2
    {[}{{\texttt{[}}}1
    {]}{{\texttt{]}}}1
{)=>}{{\()\!\!\!\Rightarrow\)}}2
    {=(}{{\(=\!\!\!(\)}}2
    {]=>}{{\(]\!\!\!\Rightarrow\)}}2
    {=[}{{\(=\!\!\![\)}}2
    {>->}{{\texttt{>->}\;}}3
{\@\@}{{\(@\)}}1
{n0}{{\texttt{n}\(_0\)}}2
    {n1}{{\texttt{n}\(_1\)}}2
    {n2}{{\texttt{n}\(_2\)}}2
    {N0}{{\texttt{N}\(_0\)}}2
    {N1}{{\texttt{N}\(_1\)}}2
    {N2}{{\texttt{N}\(_2\)}}2
    {i0}{{\texttt{i}\(_0\)}}2
    {i1}{{\texttt{i}\(_1\)}}2
    {i2}{{\texttt{i}\(_2\)}}2
    {h0}{{\texttt{h}\(_0\)}}2
    {h1}{{\texttt{h}\(_1\)}}2
    {h2}{{\texttt{h}\(_2\)}}2
{''}{}0
    {`}{\texttt{\textasciigrave}}1
    {\\mapsto}{{\(\mapsto\;\)}}1
    {\\hline}{{\rule{\linewidth}{0.5pt}}}1
}[keywords,comments,strings]

\newcommand{\change}[3]{#3}
\newcommand{\changeOurs}[3]{#3}
\newcommand{\changeNoMargin}[1]{#1}
\newcommand{\changeOursNoMargin}[1]{#1}
\newcommand{\changeCameraReady}[2]{#2}

\makeatletter
\@ifclasswith{acmart}{review}{\makeatletter
  \patchcmd{\@addmarginpar}{\ifodd\c@page}{\ifodd\c@page\@tempcnta\m@ne}{}{}
  \makeatother
  \reversemarginpar
\newcounter{change}
  \setcounter{change}{0}
\renewcommand{\change}[3]{\refstepcounter{change}\todo[disable,prepend,size=\tiny,color=red!30]{\label{#1}\textbf{(\thechange)} #2}{\color{red}#3}}\renewcommand{\changeOurs}[3]{\refstepcounter{change}\todo[disable,prepend,size=\tiny,color=blue!30]{\label{#1}\textbf{(\thechange)} #2}{\color{blue}#3}}\renewcommand{\changeCameraReady}[2]{\todo[prepend,size=\tiny,color=green]{#1}{\color{green}#2}}\renewcommand{\changeNoMargin}[1]{{\color{red}{#1}}}\renewcommand{\changeOursNoMargin}[1]{{\color{blue}{#1}}}}
\makeatother

\crefname{definition}{Def.{}}{Def.{}}
\Crefname{definition}{Def.{}}{Def.{}}
\crefname{section}{\S}{Sections}
\Crefname{section}{\S}{Sections}
\crefname{figure}{Fig.{}}{Fig.{}}
\Crefname{figure}{Fig.{}}{Fig.{}}
\crefname{proposition}{Prop.{}}{Prop.{}}
\Crefname{proposition}{Prop.{}}{Prop.{}}

\begin{document}

\newcommand{\MyTitle}{Correct Black-Box Monitors for Distributed Deadlock Detection}
\iftoggle{techreport}{\title[\MyTitle\xspace(Technical Report)]{\MyTitle{}: Formalisation and Implementation \\(Technical Report)}{}}{\title{\MyTitle{}: Formalisation and Implementation}}

\author{Radosław Jan Rowicki}
\email{rjro@dtu.dk}
\orcid{0009-0003-0758-722X}
\affiliation{\institution{Technical University of Denmark}
  \city{Kongens Lyngby}
  \country{Denmark}
}

\author{Adrian Francalanza}
\affiliation{\institution{University of Malta}
  \city{Msida}
  \country{Malta}}
\email{adrian.francalanza@um.edu.mt}
\orcid{0000-0003-3829-7391}

\author{Alceste Scalas}
\email{alcsc@dtu.dk}
\orcid{0000-0002-1153-6164}
\affiliation{\institution{Technical University of Denmark}
  \city{Kongens Lyngby}
  \country{Denmark}
}

\newcommand{\toolname}{\texttt{DDMon}\xspace}\newcommand{\Toolname}{\texttt{DDMon}\xspace}

\begin{abstract}
  Many software applications rely on concurrent and distributed (micro)services
that interact via message-passing and various forms of remote procedure calls
(RPC). As these systems organically evolve and grow in scale and complexity, the risk of
introducing deadlocks increases and their impact may worsen: even if only a few services deadlock, many
other services may block \changeCameraReady{Clarify}{while} awaiting responses from the deadlocked ones. As a
result, the ``core'' of the deadlock can be obfuscated by its consequences
on the rest of the system, and diagnosing and fixing the problem can be challenging.

In this work we tackle the challenge by proposing \emph{distributed black-box
monitors} that are deployed alongside each service and detect deadlocks by
only observing the incoming and outgoing messages\changeCameraReady{Mention probes}{, and exchanging \emph{probes}
with other monitors.} We present a formal model that captures popular RPC-based application styles
(e.g., \texttt{gen\_server}s in Erlang/OTP), and a distributed
black-box monitoring algorithm that we prove sound and complete (i.e.,
identifies deadlocked services \changeCameraReady{Lang}{with neither} false positives nor false negatives). We
implement our results in a tool called \Toolname for the monitoring of
Erlang/OTP applications, and we evaluate its performance.

This is the first work that formalises, proves the correctness, and implements
distributed black-box monitors for deadlock detection. Our results are mechanised in Coq. \Toolname is the companion artifact of this paper.

\iftoggle{techreport}{\medskip \noindent \emph{\textbf{NOTE:} This technical report is the extended version of a paper with
      the same title accepted at OOPSLA 2025: \url{https://doi.org/10.1145/3763069}}}{}

 \end{abstract}

\begin{CCSXML}
<ccs2012>
   <concept>
       <concept_id>10003752.10003790.10002990</concept_id>
       <concept_desc>Theory of computation~Logic and verification</concept_desc>
       <concept_significance>500</concept_significance>
       </concept>
   <concept>
       <concept_id>10003752.10003753.10003761.10003764</concept_id>
       <concept_desc>Theory of computation~Process calculi</concept_desc>
       <concept_significance>500</concept_significance>
       </concept>
   <concept>
       <concept_id>10003752.10003809.10010172</concept_id>
       <concept_desc>Theory of computation~Distributed algorithms</concept_desc>
       <concept_significance>500</concept_significance>
       </concept>
   <concept>
       <concept_id>10003752.10010124.10010138.10010142</concept_id>
       <concept_desc>Theory of computation~Program verification</concept_desc>
       <concept_significance>300</concept_significance>
       </concept>
   <concept>
       <concept_id>10011007.10010940.10010992.10010998.10010999</concept_id>
       <concept_desc>Software and its engineering~Software verification</concept_desc>
       <concept_significance>500</concept_significance>
       </concept>
 </ccs2012>
\end{CCSXML}

\ccsdesc[500]{Theory of computation~Logic and verification}
\ccsdesc[500]{Theory of computation~Process calculi}
\ccsdesc[500]{Theory of computation~Distributed algorithms}
\ccsdesc[300]{Theory of computation~Program verification}
\ccsdesc[300]{Software and its engineering~Software verification}
 
\keywords{distributed system, process calculus, deadlock, monitoring, operational correspondence,
  proof}

\maketitle

\section{Introduction}\label{sec:intro}

Modern software applications are often concurrent and distributed in nature, as part of their implementations often includes ensembles of \emph{(micro)services} that perform specific tasks, run concurrently across a network, and exchange information via \emph{message passing}.
A design pattern prevalent to such systems is that of \emph{remote procedure calls} (RPC),
in which a client sends a request to some service (which acts as an RPC server) and
awaits a response.
In order to compute such a response, the RPC server itself
might require further RPC interactions, thus acting as a client to other
services.
Various development frameworks support this style of
software organisation.
For instance, languages running on the Open Telecom Platform (OTP) like Erlang and Elixir support services based on \emph{behaviours} \changeOurs{change:introduction:1}{Define ``behaviour''}{(i.e., structures defining callback functions that a software module must implement)} representing generic servers and state machines~\cite{erlang-book,erlang-doc-gen-server,erlang-doc-gen-statem},
whereas
the Akka/Pekko frameworks support the development of RPC-based services as actors~\cite{PekkoGrpc}.
Some frameworks even assist programmers in exposing selected objects and APIs to web-based RPC via OpenAPI.\footnote{\url{https://www.openapis.org}, \url{https://openapi-generator.tech}}

\changeOurs{change:introduction:2}{Reword}{
(Micro)service-oriented systems are often preferred to
monolithic systems, due to their flexibility, scalability, and modularity~\cite{DBLP:journals/jss/AksakalliCCT21}.
By adhering to agreed-upon RPC APIs, services can encapsulate their internal logic, allowing them to evolve independently. 
This decoupling fosters both modularity in design and greater adaptability.
As a result, RPC-based service architectures often evolve organically, with RPC calls and services being added, modified, or removed as the application grows and requirements change~\cite{DBLP:journals/jss/AssuncaoKMS23,DBLP:journals/ife/NeriSZB20}.
For example, a commonly used computation can be refactored into a dedicated RPC call, making it reusable across services.
Likewise, frequently recurring interaction patterns involving sequences of requests and responses across different services can be abstracted behind a single RPC call --- potentially coordinating further RPC interactions 
in the background.
}

\begin{wrapfigure}{R}{0.21\textwidth}
  \vspace{-2pt}
  \hspace{-10pt}\scalebox{0.9}{\begin{tikzpicture}[->,>=stealth',shorten >=1pt,auto,node distance=1.2cm,
  semithick, state/.style={circle, draw, minimum size=20pt},every
  label/.style={align=left},every node/.style={align=left}]

\node (start) {};
  \node[state] (p1) [right of=start] {$S_1$};
  \node[state] (p3) [above right of=p1] {$S_3$};
  \node[state] (p2) [below right of=p3] {$S_2$};

\path (start) edge[above] node {1} (p1)
  (p1) edge[below] node {2} (p2)
  (p2) edge[above right] node {3} (p3)
  (p3) edge[above left] node {4} (p1);
\end{tikzpicture}
} \hspace{-5pt}
  \vspace{-12pt}
\end{wrapfigure}
\paragraph{\changeCameraReady{Restored paragraph title}{Deadlocks in RPC-Based Systems.}}RPC-based systems are prone to \emph{deadlocks}.
Concretely, a deadlock may occur when a service \(\Ser_1\) needs an externally-computed result to serve a request: e.g., \(\Ser_1\) may send an RPC request to some service \(\Ser_2\), which in turn sends further RPC requests to other services, inducing a chain of dependencies where some service \(\Ser_n\) sends an RPC request back to \(\Ser_1\) itself, which is still waiting for \(\Ser_2\)'s response.
Thus, the services \(\Ser_1,\ldots,\Ser_n\) end up blocked in a dependency
cycle: see the depiction above, where the arrows represent requests,
and the numbers represent the order in which they are performed.
These deadlocks are more easily introduced when services are independently maintained and evolve over time.
To make matters worse, these deadlocks can be very hard to identify, \changeOurs{change:introduction:3}{Reword}{understand},
and fix.
For instance, the user of an RPC system may only see a subset of the services involved; high amounts of
traffic could make the system temporarily unresponsive and some requests may be aborted after a timeout, making it easy to mistake a real deadlock for transient performance issues.
\changeOurs{change:introduction:4}{More natural phrasing}{Moreover, defects such as deadlocks often arise in an unpredictable manner under specific conditions, and can be difficult to reproduce}
(e.g.~dependent on the contents and/or scheduling of RPC requests; we show one such example in \Cref{fig:deadlock-envelope,fig:deadlock-envelope-code}).
\changeOurs{change:introduction:5}{Style improvement}{Even worse,} other services may become blocked \changeCameraReady{Clarify}{while} awaiting responses from the deadlocked ones, thereby making it even harder to identify the real source of the deadlock.

In principle, the presence of deadlocks in a system could be detected via static
analysis. Several solutions have emerged, e.g., in the realm of process
calculi~\cite{DBLP:journals/mscs/CristescuH16,DBLP:journals/jlap/DardhaP22,DBLP:journals/toplas/Kobayashi98,10.1145/2603088.2603116,DBLP:conf/concur/Kobayashi06,DBLP:journals/iandc/0001L17,NgCCStaticDeadlock2016}.
These solutions are however limited.
Theoretically,  the problem of deadlock-freedom is
generally undecidable, and any sound decidable static analysis must have false
positives, thereby rejecting systems that are deadlock-free.
Practically, static analysis methods require access to the entire source code for inspection, which is occasionally not possible (e.g.~closed-source components).
Furthermore, many distributed systems use a variety of technologies \changeOurs{change:introduction:6}{Clarify}{and programming languages}, some of which are hard to
statically analyse accurately (e.g.~Erlang or Elixir, mentioned above).

\paragraph{The Challenge: Deadlock Detection in RPC Systems via Distributed Black-Box Outline Monitors.}
Runtime monitoring for deadlock detection is an alternative to static analysis that avoids the aforementioned limitations.
In this work, we focus on the problem of designing
and developing monitoring processes that, when deployed \changeOurs{change:introduction:7}{Style improvement}{in RPC-based distributed systems}, can accurately identify
deadlocks at runtime. More precisely, we have
the following requirements for our monitors:

\begin{enumerate}
\item\label{item:req:distrib} they must be deployed in a \emph{distributed}
  fashion, \changeOurs{change:introduction:8}{Ditto}{alongside} each service, without introducing any centralised components
  (which 
\changeOurs{change:introduction:9}{Ditto}{can become bottlenecks}
  in large-scale systems);
\item \label{item:req:black-box-outline} they must be \emph{black-box} and
  \emph{outline}, i.e., detect deadlocks by just observing the incoming and
  outgoing messages of the service they monitor, without relying on
  implementation details of the service internals, or invasive
  modifications (\emph{inlining}) in the service executable;\note{Should we say that our tool is ``morally'' black-box?}
\item\label{item:req:transparent} they must be \emph{transparent}, i.e., the
  presence or absence of monitors must not incorrectly alter the execution of the services;
\item\label{item:req:precise} they must be \emph{precise} in their verdicts,
  i.e., report a deadlock \emph{if and only if} a deadlock occurs
  in the system being monitored. Hence, the monitors must have 
  \changeOurs{change:introduction:10}{Clarify}{no false
  positives (i.e., report non-existent deadlocks) \changeCameraReady{Lang (nor->)}{or} false negatives (i.e., miss the detection of actual deadlocks).}
\end{enumerate}

Requirements~(\ref{item:req:transparent}) and~(\ref{item:req:precise}) are standard
notions of \emph{monitoring correctness} in the field of runtime
monitoring~\cite{Fran-21}.
Several algorithms have been presented in the literature for distributed
deadlock detection, starting in the early 1980s with the seminal work
of~\cite{ChandyMisraHaas83,MitchellMerrit84}
\changeOurs{change:introduction:11}{Style}{and occasionally} include proofs
of correctness reminiscent of our requirement~(\ref{item:req:precise}).
Although these algorithms satisfy requirement~(\ref{item:req:distrib}) (distribution),
their deadlock detection logic is presented as an \emph{inlined} part of a
potentially-deadlocking process, instead of being deployed as a separate process
that monitors the incoming/outgoing messages of an already-existing process.
Consequently, \changeOurs{change:introduction:12}{Style}{any formalisation of such algorithms (when available) cannot
satisfy requirement~(\ref{item:req:black-box-outline}) (black-box outline
monitors) and does not permit one to state and prove
requirement~(\ref{item:req:transparent}) (transparency).}
Still, their correctness results suggest that such algorithms are a good starting point towards
designing and developing distributed, black-box, outline monitors that
satisfy all our requirements~(\ref{item:req:distrib})--(\ref{item:req:precise}).

\change{change:tradeof}{Mention that our requirements are a design choice}{
Of course, the selection of requirements~(\ref{item:req:distrib}) and (\ref{item:req:black-box-outline}) above is a design choice based on trade-offs, and different approaches to deadlock monitoring lead to different trade-offs. For example, a centralised deadlock supervisor may be feasible for small systems (despite not being scalable) and it might take better
advantage of its immediate view of the global state. Offline deadlock detection (via log analysis) may be simpler to implement, but does not allow for immediate intervention upon deadlock detection. Non-black-box (inlined) monitors may better observe the system behaviour, at the cost of being more invasive and dependent on the service internals. In contrast to that, our requirements target large systems which may contain closed-source services which can only be verified at runtime and with weak assumptions (i.e., adherence to an RPC protocol).}

\paragraph{Contributions and Outline of the Paper.}
This work contributes to the field of runtime monitoring in distributed deadlock detection as follows:

\begin{enumerate}
\item\label{item:contrib:general-correctness} In \Cref{sec:correctness} we
  establish general criteria of correctness for deadlock detection monitors in
  distributed systems, based on standard notions of transparency, soundness, and
  completeness in runtime verification literature.
\item\label{item:contrib:srpc-net} \changeOurs{change:contrib:ref1}{Add reference to relevant section}{In \Cref{sec:formal}}
  we provide a general formal model of networks
  of RPC services, focusing on the class of services (that we dub \emph{SRPC},
  for \emph{Single-threaded RPC}) which may result in deadlocks.
\item\label{item:contrib:transparent-instr} \changeOurs{change:contrib:ref2}{Ditto}{In \Cref{sec:monitoring}}
  we formalise a method for the
  \emph{distributed, black-box, outline instrumentation} of such networks with a
  generic class of \emph{proxy monitors}, and prove that this general
  instrumentation guarantees transparency by construction.
\item\label{item:contrib:monitor-algo} \changeOurs{change:contrib:ref3}{Ditto}{In \Cref{sec:algorithm}}
  we formalise a concrete monitoring
  algorithm (inspired by~\cite{ChandyMisraHaas83,MitchellMerrit84}) usable in the
  generic proxy monitors of contribution~(\ref{item:contrib:transparent-instr})
  above, and \changeOurs{change:contrib:ref4}{Ditto}{in \Cref{sec:proof}} we prove that this algorithm detects deadlocks in a sound and
  complete way (i.e., without false positives \changeCameraReady{Lang (nor->)}{and} false negatives).
Despite the initial inspiration, the proofs of correctness of our algorithm
  are quite challenging and radically different from~\cite{ChandyMisraHaas83,MitchellMerrit84}, since we need to formalise and prove many technical results (e.g.~about
  the \emph{instrumentation} of a monitor-free network) and invariants that were
  absent or under-specified in previous work: \change{change:contrib:cex}{Mention subtle counterexample}{in \Cref{sec:related:edge} we discuss a counterexample that arises if such technical results and subtle invariants are overlooked.}
\item\label{item:contrib:impl} \changeOurs{change:contrib:ref5}{Ditto}{In \Cref{sec:tool}}
  we present \Toolname, \changeCameraReady{Moved}{the companion artifact of this work:} a prototype tool that
  applies our distributed black-box monitor formalisation and algorithm to
  systems based on Erlang/OTP and its widely-used
  \texttt{gen\_server} behaviour. We evaluate the performance of \Toolname with different detection strategies.
\end{enumerate}
We discuss related work in \Cref{sec:related} and conclude in \Cref{sec:conclusion}.

To the best of our knowledge, this is the first work that formalises, proves the
correctness, and implements distributed black-box outline monitors for deadlock
detection. Our formalisation and theoretical results are mechanised in
\change{change:coq-version}{Specify Coq version}{Coq 8.20}; the mechanisation is available in~\cite{rowicki_2025_16909482}, and the
mechanised results are marked throughout this paper with the symbol \coqed. \iftoggle{techreport}{}{Additional materials are available in a technical report~\cite{TECHREPORT}.}

 \section{Correctness Criteria for Deadlock Detection Monitors in Distributed Systems}
\label{sec:correctness}

In this section we establish \change{change:correctness:define-title}{Clarify ``correct monitoring'' in the title}{the meaning of ``correct monitoring'' in the title of this paper: we define}
general criteria for assessing the correctness of deadlock detection
monitors in distributed systems, based on widely-accepted standards for runtime
verification~\cite{DBLP:series/lncs/BartocciFFR18}. Some results akin to the \emph{detection soundness}
and \emph{completeness} \changeCameraReady{Connect to content}{defined below} can be found in earlier work on distributed deadlock detection
algorithms~\cite{ChandyMisraHaas83,MitchellMerrit84} but they are not presented in \changeOurs{}{style}{a
monitoring setup}, i.e., as processes that can reach \emph{verdicts} when \emph{instrumented} on
a system, and act \emph{transparently}. 
This section systematises these concepts and aims to serve as a
blueprint for research in the field of deadlock monitoring. It also serves as an overview of our
approach and results (presented in the next sections).

\paragraph{Networks and Deadlocks.}In distributed systems, deadlocks span over multiple agents communicating across a network.
Therefore, we need a formal model of a network \(\Net\) equipped with semantics describing
executions as a series of transitions of the form \(\Net_0 \trans{\NAct} \Net_1\); depending on the
formalisation style, the transition \(\trans{\NAct}\) can represent reduction semantics or LTS
semantics~\cite{RMKeller1976}. We write \(\Net_0 \trans{\ppath} \Net_1\) for a sequence of transitions \(\ppath\),
which we call a \emph{path}.
We also need to define when \(\Net\) \emph{has a deadlock}, i.e., a set of agents that are stuck waiting
for \changeOurs{}{Ditto}{one another's} inputs: the specific formalisation depends on the specifics of the agents.
\emph{(We formalise our networks, agents (called \emph{services}), and their deadlocks in \Cref{sec:formal}.)}

\paragraph{Monitors and Instrumentation.}Monitors are deployed by \emph{instrumenting} them over a network. Therefore, we need some form of
\emph{instrumentation function \(\instr\)} that turns an unmonitored network \(\Net\) into a
monitored network \(\MNet\). Dually, the \emph{deinstrumentation} \(\deinstr(\MNet)\) ``strips off'' the monitors of \(\MNet\) and
returns an unmonitored network.
Instrumented networks have transitions \(\MNet_0 \trans{\MAct} \MNet_1\), extended over paths as
\(\MNet_0 \trans{\mpath} \MNet_1\); notice that \(\MAct\) and \(\mpath\) may differ from the
unmonitored system's \(\Act\) and \(\ppath\) (e.g., \(\MAct\) and \(\mpath\) may include additional
information or messages).
\change{change:instr-mid-state}{Mention that instrumentation can be added at any state of the network}{
In principle, an instrumentation function \(\instr\) can be applied to any state of a network --- but in practice, the deployment of monitors is modelled by instrumenting a network in its initial state.}
The details of instrumentation determine how, and to what extent, the monitors can observe the
network behaviour.
\emph{(We formalise our instrumentation of \emph{black-box outline proxy monitors} in
  \Cref{sec:monitoring}.)}

\paragraph{Transparency.}A highly desirable property of monitor instrumentation is \emph{transparency}
(Criterion~\ref{crit:transparency} below), ensuring that the addition of monitors does not
improperly change the behaviour of the system. Specifically, we formulate transparency as an
\emph{operational correspondence}~\cite{DBLP:journals/iandc/Gorla10} requiring that all reachable
states of a network \(\Net\) are preserved after instrumentation (completeness), and all states reachable by
an instrumented network \changeCameraReady{Fix, was $\MNet$ before}{\(\instr(\Net)\)} are related to a reachable state of \(\Net\) (soundness). This ensures
that instrumentation retains all possible deadlocks of \(\Net\), and does not introduce spurious
ones. \emph{(We prove our transparency results in \Cref{lem:net-transp-compl,lem:net-transp-sound}; recall
  that \(\deinstr(\MNet)\) is \(\MNet\) with monitors ``stripped off.'')}

\begin{criterion}[Monitor Instrumentation Transparency]
  \label{crit:transparency}\label{crit:transparency-completeness}\label{crit:transparency-soundness}A monitor instrumentation \(\instr\) is \emph{transparent} for a network
  \(\Net_0\) \changeCameraReady{More precise}{if and only if} the following clauses hold: \begin{description}
  \item[Completeness:]For any \(\Net_1\) and path \(\ppath\),
    \(\Net_0 \trans{\ppath} \Net_1\) implies \(\exists \mpath,\MNet_1\) such that \(\instr(\Net_0) \trans{\mpath} \MNet_1\) and \(\deinstr(\MNet_1) = \Net_1\).
  \item[Soundness:]For any monitored network \(\MNet_1\) and monitored path \(\mpath_0\), \(\instr(\Net_0) \trans{\mpath_0} \MNet_1\) implies that \(\exists \MNet_2, \Net_2, \mpath_1, \ppath\) such that \(\MNet_1 \trans{\mpath_1} \MNet_2\) and \(\deinstr(\MNet_2) = \Net_2\) and
    \(\Net_0 \trans{\ppath} \Net_2\).
  \end{description}
\end{criterion}

\change{change:transp-sim}{Describe how transparency is related to similarity}{
Intuitively, transparency is achieved if a network \(\Net\) and its instrumented version
\(\instr(\Net)\) are mutually weakly similar; however, the operational
correspondence above allows for more flexibility in how the states of \(\Net\) and \(\instr(\Net)\) are related
(as operational correspondences are typically weaker than simulation relations~\cite{DBLP:journals/corr/PetersG15}).}
Notice that the
``soundness'' clause of \Cref{crit:transparency} allows the instrumented system to reach an
intermediate configuration \(\MNet_1\) which, after some further steps \(\mpath_1\), aligns with an
unmonitored network state \(\Net_2\). Depending on the instrumentation details, transparency may be
stronger (e.g., with \(\MNet_1 = \MNet_2\) and \(\mpath_1\) always empty).

\paragraph{Preciseness.}\Cref{crit:detection-preciseness} below establishes that deadlock detection monitors must
be \emph{precise} in their verdicts, i.e., they must be able to report all possible deadlocks of the
original unmonitored network (\emph{completeness}), and all the deadlocks they report must be real
(\emph{soundness}). \emph{(We prove the preciseness of our approach in \Cref{thm:deadlock-detection-preciseness}; recall
  that \(\deinstr(\MNet)\) is \(\MNet\) with monitors ``stripped off.'')}

\begin{criterion}[Preciseness of Deadlock Detection]
  \label{crit:detection-preciseness}\label{crit:detection-completeness}\label{crit:detection-soundness}An instrumentation \(\instr\) achieves \emph{precise deadlock detection} for a network \(\Net_0\) \changeCameraReady{More precise}{if and only if},
  for any monitored network \(\MNet_1\) and monitored path \(\mpath\) such that \(\instr(\Net_0) \trans{\mpath} \MNet_1\), the following clauses hold:
  \begin{description}
  \item[Completeness:]If there is a deadlock in \(\deinstr(\MNet_1)\), then there are \(\MNet_2, \mpath_1\) such that
    \(\MNet_1 \trans{\mpath_1} \MNet_2\) and \(\MNet_2\) reports a deadlock.\item[Soundness:]If \(\MNet_1\) reports a deadlock, then there is a deadlock in \(\deinstr(\MNet_1)\).
  \end{description}
\end{criterion}

Note that \Cref{crit:detection-preciseness} does not require deadlocks to be reported immediately:
the ``completeness'' clause allows the monitored system to take some additional steps \(\mpath_1\)
after a deadlock occurs; similarly, the ``soundness'' clause allows a deadlock to happen in a state
that precedes \(\MNet_1\) (where the deadlock is reported). Also note that \Cref{crit:detection-preciseness} is only effective in combination with
\Cref{crit:transparency} (transparency): e.g., a non-transparent instrumentation \(\instr(\Net_0)\)
that blocks all transitions of \(\Net_0\) would trivially satisfy
\Cref{crit:detection-preciseness} for any not-yet-deadlocked \(\Net_0\).
Finally, note that the ``completeness'' of \Cref{crit:detection-preciseness} can be 
strengthened by showing that, under suitable \emph{fairness}
assumptions~\cite{Glabbeek_2019}, all deadlocks \emph{will} be eventually
reported. \emph{(We prove this stronger result in \Cref{thm:detect-completeness-upgrade}.)}

 \section{A Formal Model of Networks of RPC-Based Services}\label{sec:formal}

In this section, we formalise networks of \emph{services} which interact
in a style that we dub \emph{Single-threaded Remote Procedure Call (SRPC)},
\changeOurs{}{clarity}{where requests are processed one at a time}. 
SRPC captures common patterns found
in OTP-based systems programmed in Erlang or Elixir, via the widely-used
\texttt{gen\_server} and \texttt{gen\_statem} \emph{behaviours}~\cite{erlang-book}. Similar patterns can be found in Actor-based systems based e.g.~on
Akka/Pekko~\cite{PekkoGrpc}.
We formalise SRPC services in \Cref{sec:formal:serv-syntax-semantics}, we
compose them into networks in \Cref{sec:formal:networks}, and formalise
deadlocks in \Cref{sec:formal:deadlocks}. In \iftoggle{techreport}{\Cref{app:replicated}}{\cite[Appendix B]{TECHREPORT}} we show how
our SRPC-based model can encode patterns like service replication.

\subsection{Services: Syntax and Semantics}
\label{sec:formal:serv-syntax-semantics}

\Cref{def:syntax-service} below introduces the syntax of services.

\begin{definition}[Services]
  \label{def:syntax-service}\label{fig:syntax-service}

  A \emph{service} \(\Ser\) is defined according to the following grammar.

  \noindent \begin{minipage}{0.45\linewidth}
    \centering
    {\small\begin{bnf}
  \bnfProd{Communication tag}{\Chan}\bnfDef \qtag \bnfSep \rtag \bnfSep \ctag \\
  \bnfProd{Name}{\Name}\bnfDef \texttt{n0} \bnfSep \texttt{n1} \bnfSep \ldots \\
  \bnfProd{Client}{\NameE}\bnfDef \Name \bnfSep \nameext \\
\end{bnf}
 }
  \end{minipage}
  \hfill \begin{minipage}{0.5\linewidth}
    {\small\begin{bnf}
  \bnfProd{Process}{\Proc}\bnfDef \ldots
  \\
  \bnfProd{Queue}{\Que}\bnfDef \quenil \bnfSep \queel{\Name}{\Chan}\bnfSep \queapp{\Que}{\Que}\\
  \bnfProd{Service}{\Ser}\bnfDef \pqued{\IQue}{\Proc}{\OQue}\end{bnf}
 }
  \end{minipage}
  \vspace{-2mm}\end{definition}

By \Cref{def:syntax-service}, a service is modelled as a triplet \(\pqued{\IQue}{\Proc}{\OQue}\)
which includes an \emph{input queue} \(\IQue\) \changeOurs{}{style}{storing incoming messages},
an \emph{output queue} \(\OQue\) containing outgoing messages,
and a \emph{process} \(\Proc\) describing the program logic. 
The input/output queues model asynchronous communication with non-blocking send
operations. 
They may contain messages of the form \(\queel{\Name}{\Chan}\), where \(\Name\) is the \emph{name} of the service that is either \changeOurs{}{typo}{the} sender or
recipient (depending on whether the message is in the input or output queue,
respectively), and \(\Chan\) \changeOurs{}{typo}{is} either \(\qtag\) if the message is a
\emph{query}, \(\rtag\) if the message is a \emph{response}, or \(\ctag\) if it is a \emph{cast}.
\change{change:concat}{Clarify the ++ operator for queues}{The symbol \(\queappsym\) denotes standard queue concatenation.}

Our formalisation is streamlined to focus our technical development:
\begin{itemize}
\item Since our results do not depend on the concrete syntax of \(\Proc\) (because
  the process internals are not observable by our black-box monitors
  introduced in \Cref{sec:monitoring}), in \Cref{def:syntax-service} above we
  leave the syntax of the service process \(\Proc\) unspecified, and we identify
  the category of \emph{SRPC services} by only requiring that \(\Proc\)'s
  behaviour is compatible with an \emph{abstract SRPC process} (\Cref{def:srpc-service}
  below). We will
  sometimes use pseudo-code to represent service processes
  (see~\Cref{ex:srpc-process}).
\item In \Cref{def:syntax-service} the data carried by messages is omitted, and
  we only show the sender/recipient name and communication tag (which are needed
  for message dispatching and for services to distinguish between incoming
  queries and responses).
\item \emph{Casts} (marked with the tag \(\ctag\)) are queries that cannot receive a
  response. They allow for ``fire-and-forget'' RPC calls, and they are part of
  the Erlang/OTP \texttt{gen\_server} specification. As shown later (\Cref{fig:semantics-net}), the cast tag \(\ctag\) is used
  only for sending, while delivered casts are turned into queries \(\qtag\) from
  an unspecified sender $\nameext$.
\end{itemize}

\begin{definition}[Abstract and concrete SRPC processes, and SRPC services]
  \label{def:abstract-srpc-proc}\label{def:concrete-srpc-proc}\label{fig:syntax-gen-srpc-state}\label{def:srpc-service}\label{def:simulation}The states of an \defn{abstract SRPC process $\GenProc$} are the following,
  where \(\NameE\) is either a client name \(\Name\) or \(\nameext\)
  (meaning ``no name''):
{\small\begin{bnf}
  \bnfProd{}{\GenProc} \bnfDef \srpcf \hfill\quad\text{\footnotesize(awaiting a query from any client)}\bnfSepN \srpcw{\NameE_c}\hfill\quad\text{\footnotesize(actively serving client \(\NameE_c\))}
  \bnfSepN \srpcl{\NameE_c}{\Name_s}\hfill\quad\text{\footnotesize(awaiting a response from \(\Name_s\) while serving client \(\NameE_c\))}\end{bnf}
 }

  The LTS semantics of \(\GenProc\) is defined by the rules in \Cref{fig:semantics-abstract-proc},
  using the following transition labels:
  {\small\begin{bnf}
  \bnfProd{Process action}{\ProcAct}\bnfDef \recv{\Name}{\Chan}\hfill\quad\text{\footnotesize(receive a message with tag \(\Chan\) from a service named \(\Name\))}\bnfSepN \send{\Name}{\Chan}\hfill\quad\text{\footnotesize(send a message with tag \(\Chan\) to a service named \(\Name\))}\bnfSepN
  \changeOursNoMargin{\tau}\hfill\quad\text{\changeOursNoMargin{\footnotesize(perform an internal computation)}}\end{bnf}
 }

  \changeOurs{change:formal:tau}{Add $\tau$ to $\ProcAct$'s grammar}{}\change{change:formal:simulation}{Define simulation relation}{Let \(\srpcsym\) denote the standard \emph{simulation relation}~\cite{Milner:89:book,Sangiorgi2001TheP}, i.e., the
    union of all relations \(\mathcal{R}\) such that, whenever \((P,Q) \in \mathcal{R}\)
    and $P \trans{\ProcAct} P'$, then $\exists Q'$ such that $Q \trans{\ProcAct} Q'$ and \((P',Q') \in \mathcal{R}\).}

  We say that \(\Proc\) is a \defn{concrete SRPC process} when all the following conditions hold:
  \begin{enumerate}
  \item\label{it:srpc:sim} \(\issrpc{\Proc}{\GenProc}\) --- i.e., \(\Proc\) is simulated by some state \(\GenProc\) of the
      generic SRPC process.\item\label{it:srpc:loop} \(\Proc\) does not terminate --- \change{change:terminate}{Explain termination}{i.e., every $\Proc'$ reachable from $\Proc$ always exposes further transitions.}
  \item\label{it:srpc:ind} \(\Proc\) does not diverge --- \change{change:diverge}{Clarify ``divergence''}{i.e., if \(\issrpcw{\Proc}{\NameE_c}\), then \(\Proc\) must transition to some
      \(\issrpcf{\Proc'}\) after a finite number of transitions.}
\end{enumerate}

  Given a service \(\Ser = \pqued{\IQue}{\Proc}{\OQue}\) (with the syntax from
  \Cref{fig:syntax-service}), we say that \(\Ser\) is an \defn{\emph{SRPC service}}
  if \(\Proc\) is a concrete SRPC process.\end{definition}

By \Cref{def:abstract-srpc-proc}, the abstract SRPC process \(\GenProc\) has the 3 main
computational states depicted in \Cref{fig-srpc-diag}:

\begin{itemize}
\item In state \(\srpcf\), the process awaits a query (i.e., a request) or a
  cast (i.e., a request with sender \(\NameE_c\) set to \(\bot\)) from the input queue; \item In state \(\srpcw{\NameE_c}\), the process is preparing the response
  for its client \(\NameE_c\) (when \(\NameE_c \neq \bot\)), or handling a cast (when \(\NameE_c =
  \bot\)). Only in this state the process is allowed to send messages and perform internal actions
  (\(\tau\));
\item In state \(\srpcl{\NameE_c}{\Name_s}\), the process is awaiting a response
  from \(\Name_s\).
\end{itemize}

We assign an abstract SRPC state \(\GenProc\) to a concrete SRPC process
\(\Proc\) when \(\Proc\) is simulated by \(\GenProc\). Therefore, the syntax of \(\Proc\) is irrelevant, as long as its
queries and responses (with a suitable LTS semantics) are emitted as a refinement of \(\GenProc\).
\Cref{it:srpc:ind} requires that every SRPC process eventually becomes \(\srpcf\) --- otherwise,
a looping process that emits e.g.~an infinite stream of \(\tau\)-actions could
be assigned the state \(\srpcw{\Name_c}\) for \emph{every} \(\Name_c\).\footnote{This requirement does not prevent SRPC processes from being persistently active: a
  process in the \(\srpcw{\nameext}\) state may e.g.~send a cast to itself and
  immediately receive it after becoming \(\srpcf\), thus continuing its operation.}

\begin{figure}
  \providecommand{\rulescale}{0.82}
\begin{deduction}
  \RuleFit[\rulescale]{\RuleSrpcReadyQuery}{}{\srpcf \trans{\recvq{\NameE_c}}\srpcw{\NameE_c}}&&
  \RuleFit[\rulescale]{\RuleSrpcWorkTau}{}{\srpcw{\NameE_c}\trans{\tau}\srpcw{\NameE_c}}&&
  \RuleFit[\rulescale]{\RuleSrpcWorkReply}{\Name_c \neq \bot
  }{\srpcw{\Name_c}\trans{\sendr{\Name_c}}\srpcf }&&
  \RuleFit[\rulescale]{\RuleSrpcWorkReturn}{}{\srpcw{\bot}\trans{\tau}\srpcf }\end{deduction}
\begin{deduction}
  \RuleFit[\rulescale]{\RuleSrpcWorkCast}{}{\srpcw{\NameE_c}\trans{\sendc{\Name}}\srpcw{\NameE_c}}&&
  \RuleFit[\rulescale]{\RuleSrpcWorkQuery}{}{\srpcw{\NameE_c}\trans{\sendq{\Name_s}}\srpcl{\NameE_c}{\Name_s}}&&
  \RuleFit[\rulescale]{\RuleSrpcLock}{}{\srpcl{\NameE_c}{\Name_s}\trans{\recvr{\Name_s}}\srpcw{\NameE_c}}\end{deduction}
 \vspace{-1em}
  \caption{LTS semantics of the abstract SRPC process \(\GenProc\) (\Cref{def:abstract-srpc-proc}).}\label{fig:semantics-abstract-proc}\end{figure}

\begin{figure}[t]
  \begin{minipage}[c]{0.31\textwidth}
    \vspace{-1mm}\scalebox{0.9}{\begin{tikzpicture}[->,>=stealth', shorten >=2pt, auto, node distance=2cm, semithick,srpc/.style={minimum size=2pt, node distance=1.5cm},every label/.style={align=right},every node/.style={align=center}]

  \node[srpc] (f) {\(\srpcf\)};
  \node[srpc] [below of=f](w) {\(\srpcw{\NameE_c}\)};
  \node[srpc] [below of=w](l) {\(\srpcl{\NameE_c}{\Name_s}\)};
\path (f) edge[bend right] node[left ] {\scriptsize\(\recvq{\NameE_c}\)} (w)(w) edge[bend right] node[right] {\scriptsize \(\begin{cases}
      \sendr{\Name_c} & (\text{if}~~\NameE_c \neq \bot)\\
      \tau & (\text{if}~~\NameE_c = \bot)\\
    \end{cases}\)} (f)(w) edge[loop left,  distance=0.7cm] node[right] {\scriptsize\(\tau\)} (w)(w) edge[loop right, distance=1.3cm] node[left ] {\scriptsize\(\sendc{\Name}\)} (w)(w) edge[bend right] node[left ] {\scriptsize\(\sendq{\Name_s}\)} (l)(l) edge[bend right] node[right] {\scriptsize\(\recvr{\Name_s}\)} (w);
\end{tikzpicture}
 \hspace{-5pt}}
\vspace{-3mm}\captionof{figure}{LTS of the abstract SRPC process, based on \Cref{fig:semantics-abstract-proc}.}\label{fig-srpc-diag}\end{minipage}
  \hfill \begin{minipage}[c]{0.66\textwidth}
    \begin{pseudocode}[|P|l|]{Pseudocode of a concrete SRPC process & Abstract SRPC state}
      \psdef{proxy}{server}		 	& \\
      ~~\psrecvqs{client}{x}	& \(\srpcf\) \\
      ~~\pssendq{server}{x}		& \(\srpcw{\texttt{client}}\) \\
      ~~\psrecvrs{server}{y}	& \(\srpcl{\texttt{client}}{\texttt{server}}\) \\
      ~~\psif{client} != \psnil		& \(\srpcw{\texttt{client}}\) \\
      ~~~~\pssendr{client}{y}		& \(\srpcw{\texttt{client}}\) \\
      ~~proxy(server)					& \(\srpcf\) \\
    \end{pseudocode}
    \vspace{-2mm}\captionof{figure}{Example of a concrete SRPC process. The right column shows SRPC state at each line.}
    \label{fig:example-proxy}
  \end{minipage}
\end{figure}

\begin{example}[An SRPC process]
  \label{ex:srpc-process}\Cref{fig:example-proxy} shows the pseudocode of a concrete SRPC process
  usable as control logic for an SRPC service.
The process implements a simple proxy to another service (here
  called \texttt{server}) by forwarding queries and responses from any client. The
  right column describes the abstract SRPC state corresponding to each line
  during the process execution.

  On line 2, the operation \texttt{\psrecvq} blocks until it receives a query from any other service, and
  returns a tuple containing a reference to the client \note{Is ``reference'' clear?}and a payload \texttt{x}. On line 3, the operation \texttt{\psopsend\_\psQ} sends a query to \texttt{server}, with a reference to the
  sender (\texttt{\psself}) and the forwarded payload \texttt{x}. On line 4, the operation \texttt{\psoprecvfrom} blocks until a response is received from
  \texttt{server}. The \texttt{if} statement on line 5 checks whether the client is defined. If that is the case, line 6
  uses \texttt{\psopsend\_\psR} to forward the response \texttt{y} from \texttt{server} to
  \texttt{client}. Finally, line 7 loops.

  Notice that when the client is undefined (\(\bot\)), the \texttt{if} statement on line 5 makes the process skip
  line 6 and immediately loop, thus directly switching into the \(\srpcf\) state. This switch occurs as a
  \(\tau\)-action enabled by \cref{\RuleSrpcWorkReturn}.
\end{example}

\begin{figure}[t]
\providecommand{\rulescale}{0.85}
\begin{deduction}
  \RuleFit[\rulescale]{\RuleQueFind}{}{\queapp{\queel{\Name}{\Chan}}{\Que}\trans{\qspope{\Name}{\Chan}}\Que }&&
  \RuleFit[\rulescale]{\RuleSerIn}{\IQue_1 = \queapp{\queel{\Name}{\Chan}}{\IQue_0}}{\pqued{\IQue_0}{\Proc}{\OQue}\trans{\recv{\Name}{\Chan}}\pqued{\IQue_1}{\Proc}{\OQue}}\hspace{10pt}\RuleFit[\rulescale]{\RuleSerOut}{\OQue_0 = \queapp{\queel{\Name}{\Chan}}{\OQue_1}\quad }{\pqued{\IQue}{\Proc}{\OQue_0}\trans{\send{\Name}{\Chan}}\pqued{\IQue}{\Proc}{\OQue_1}}\hspace{10pt}\RuleFit[\rulescale]{\RuleSerTauTau}{\Proc_0 \trans{\tau} \Proc_1\quad }{\pqued{\IQue}{\Proc_0}{\OQue}\trans{\tautau}\pqued{\IQue}{\Proc_1}{\OQue}}
\end{deduction}
\begin{deduction}
  \RuleFit[\rulescale]{\RuleQueSeek}{\changeNoMargin{\queel{\Name}{\Chan} \!\neq\! \queel{\Name'}{\Chan'}}\;\;\Que \trans{\qspope{\Name'}{\Chan'}} \Que'}{\queapp{\queel{\Name}{\Chan}}{\Que}\trans{\qspope{\Name'}{\Chan'}}
    \queapp{\queel{\Name}{\Chan}}{\Que'}}&&
  \RuleFit[\rulescale]{\RuleSerTauIn}{\IQue_0 \trans{\qspope{\Name}{\Chan}} \IQue_1\quad \Proc_0 \trans{\recv{\Name}{\Chan}} \Proc_1\quad }{\pqued{\IQue_0}{\Proc_0}{\OQue}\trans{\taurecv{\Name}{\Chan}}\pqued{\IQue_1}{\Proc_1}{\OQue}}\hspace{10pt}\RuleFit[\rulescale]{\RuleSerTauOut}{\Proc_0 \trans{\send{\Name}{\Chan}} \Proc_1\quad \OQue_1 = \queapp{\queel{\Name}{\Chan}}{\OQue_0}\quad }{\pqued{\IQue}{\Proc_0}{\OQue_0}\trans{\tausend{\Name}{\Chan}}\pqued{\IQue}{\Proc_1}{\OQue_1}}\end{deduction}
   \vspace{-1em}
  \captionof{figure}{LTS semantics of services (rules \textbf{\texttt{SER-*}}) and selective queues (rules \textbf{\texttt{QUE-*}}).}\label{fig:semantics-service}
\end{figure}

\begin{definition}[Semantics of services]
  \label{def:semantics-service}\label{def:semantics-que}The LTS semantics of a service \(\Ser\) is inductively defined by the rules in
  \Cref{fig:semantics-service}, using the included semantics of message queues, and the following
  labels:
  {\small\begin{bnf}
  \bnfProd{Queue action}{\QueAct}\bnfDef \qspope{\Name}{\Chan}\hfill\quad\text{\footnotesize(dequeue any message with tag \(\Chan\) from a service named \(\Name\))}\\
  \bnfProd{Service action}{\SerAct}\bnfDef \recv{\Name}{\Chan}\hfill\quad\text{\footnotesize(receive a message with tag \(\Chan\) from a service named \(\Name\))}\bnfSepN \send{\Name}{\Chan}\hfill\quad\text{\footnotesize(send a message with tag \(\Chan\) to a service named \(\Name\))}\bnfSepN \taurecv{\Name}{\Chan}\hfill\quad\text{\footnotesize(internally add to the output queue a message with tag \(\Chan\) to a service \(\Name\))}\bnfSepN \tausend{\Name}{\Chan}\hfill\quad\text{\footnotesize(internally pick from the input queue a message with tag \(\Chan\) from a service \(\Name\))}\bnfSepN \tautau \hfill\quad\text{\footnotesize(internal computation)}\end{bnf}
 }
\end{definition}

In \Cref{def:semantics-service}, the semantic rules of message queues (\ref{\RuleQueFind} and
\ref{\RuleQueSeek}) together with the concatenation operator (\(\queappsym\)) accommodate both input
and output queues in the service semantics (\Cref{fig:semantics-service}). Input queues expose a selective FIFO behaviour: the insertion of a message \(\queel{\Name}{\Chan}\)
into queue \(\Que\) always appends the message to the end
(\(\queapp{\Que}{\queel{\Name}{\Chan}}\)), while message reception can be performed either by
selecting the oldest message in the queue (by pattern matching against
\(\queapp{\queel{\Name'}{\Chan'}}{\Que'}\)), or the oldest message with a specific sender and tag
(transition \(\qspope{\Chan}{\Name}\) by \cref{\RuleQueFind,\RuleQueSeek}). This models e.g.~how processes in Erlang and Elixir can selectively receive inputs from their
mailbox.

\subsection{Networks of Services}
\label{sec:formal:networks}

We now formalise a network as a composition of named services.

\begin{definition}[Networks]
  \label{def:network}A \defn{network \(\Net\)} is a function\todo[nit]{Changed from ``mapping'' to make it clear it's
    not partial} from a set of names to services (from \Cref{def:syntax-service}). We write \(\Net(\Name)\) to denote the service named \(\Name\) within \(\Net\).
  We write \(\NetType\) for the set of all networks.
A network \(\Net\) has the LTS semantics in \Cref{fig:semantics-net}, using
  the following labels:
  {\small\begin{bnf}
  \bnfProd{Network action}{\Act}\bnfDef \nettau{\Name}{\ProcAct}\hfill\quad\text{\footnotesize(internal action performed by service \(\Name\))}\bnfSepN \netcomm{\Name_0}{\Name_1}{\Chan}\hfill\quad\text{\footnotesize(communication of a message with tag $\Chan$ from service \(\Name_0\) to \(\Name_1\))}\end{bnf}
 }

  \noindent We write \(\ppath\) to denote a \defn{path}, i.e., a sequence
  of network transition labels \(\Act_1 \pathseq \Act_2 \pathseq \ldots \pathseq
  \Act_n\). We write \(\Net \trans{\ppath} \Net'\) if either \(\ppath\) is empty
  and \(\Net' = \Net\), or \(\ppath = \Act \pathseq \ppath'\) and \(\exists \Net'':
  \Net \trans{\Act} \Net'' \trans{\ppath'} \Net'\).
\end{definition}

By \Cref{def:network} a network can progress by either allowing a service to perform an internal
action, or letting two services exchange a message. Observe that, by \cref{\RuleSerIn} in
\Cref{fig:semantics-service}, every service always has transitions with label
\(\recv{\Name}{\Chan}\) for any \(\Name\) and \(\Chan\); hence, every service can always append
to its input queue any incoming message from any sender. Casts (i.e., messages with the \(\ctag\)
tag) are delivered by \cref{\RuleNetCast} as queries from anonymous sender (\(\recvc\)), \change{change:cast-query}{Clarify the missing ``receive cast'' action}{which by \Cref{def:abstract-srpc-proc} cannot be responded to. A consequence of this design is
  that the action \(\recv{\Name}{\ctag}\) (``receive a cast from $\Name$'') never occurs at runtime,
  as \cref{\RuleNetCast} ``rewrites'' it into \(\recvc\) (``receive a query from anonymous sender $\nameext$''): this lets us treat queries and
  casts uniformly in the \(\srpcwname\) state, as shown in \Cref{ex:srpc-process}.}

\begin{figure}[tbp]
  \begin{minipage}{0.63\linewidth}
    \begin{minipage}{\linewidth}
      \centering
      \providecommand{\rulescale}{0.82}
\begin{deduction}
  \RuleFit[\rulescale]{\RuleNetTau}{\Net(\Name) \trans{\tauof{\ProcAct}} \Ser }{\Net \trans{\nettau{\Name}{\ProcAct}} \Net \subst{\Name}{\Ser}}
  &&
  \RuleTif[\rulescale]{\RuleNetCast}{{\Net (\Name_0) \trans{\sendc{\Name_1}} \Ser_0}\quad {\Net \subst{\Name_0}{\Ser_0} (\Name_1) \trans{\recvc} \Ser_1}}{\Net \trans{\netcomm{\Name_0}{\Name_1}{\ctag}} \Net \subst{\Name_0}{\Ser_0}\subst{\Name_1}{\Ser_1}}
\end{deduction}
\begin{deduction}
  &&
  \RuleTif[\rulescale]{\RuleNetCom}{\Chan \neq \ctag\quad {\Net (\Name_0) \trans{\send{\Name_1}{\Chan}} \Ser_0}\quad {\Net \subst{\Name_0}{\Ser_0} (\Name_1) \trans{\recv{\Name_0}{\Chan}} \Ser_1}}{\Net \trans{\netcomm{\Name_0}{\Name_1}{\Chan}} \Net \subst{\Name_0}{\Ser_0}\subst{\Name_1}{\Ser_1}}
  &&
\end{deduction}
       \vspace{-1em}
      \captionof{figure}{LTS semantics of networks.}\label{fig:semantics-net}
    \end{minipage}\smallskip\\
    \begin{minipage}{\linewidth}
      \centering
      \scalebox{0.75}{\begin{tikzpicture}[->,>=stealth', shorten >=2pt,auto,node distance=2.5cm, semithick,ser/.style={circle, draw, minimum size=10pt, node distance=3cm},every label/.style={align=left},every node/.style={align=center}]

\node (start) {};
  \node[ser] (server) {\(\strut{\texttt{server}}\)};
  \node[ser] (proxy) [right of=server] {\(\strut{\texttt{proxy}}\)};

\path
  (server) edge[bend left] node[above] {1} (proxy)
  (proxy) edge[bend left] node[below] {2} (server)
  ;
\end{tikzpicture}
 }
\vspace{-1em}
      \captionof{figure}{A simple deadlock in which a \texttt{server} indirectly calls itself through a
      \texttt{proxy} (from \Cref{fig:example-proxy}).
      }\label{fig:deadlock-loop}\end{minipage}
  \end{minipage}
  \hfill
  \begin{minipage}{0.35\linewidth}
    \centering
    \scalebox{0.75}{\begin{tikzpicture}[->,>=stealth', shorten >=2pt, auto, node distance=2cm, semithick,ser/.style={circle, draw, minimum size=10pt, node distance=2cm},init/.style={node distance=1.5cm, minimum size=10pt, color=gray},initedge/.style={decorate, decoration={snake, amplitude=0.5mm, post length=3mm}, gray},every label/.style={align=left},every node/.style={align=center}]

  \node[ser] (s2) {\(E_2\)};
  \node[ser] (s1) [left of=s2] {\(E_1\)};
  \node[ser] (s3) [right of=s2]  {\(E_3\)};

  \node[init] (i1) [below of=s1] {\(\nameext\)};
  \node[init] (i2) [below of=s2] {\(\nameext\)};
  \node[init] (i3) [below of=s3] {\(\nameext\)};

  \node[ser] (p2) [above of=s1] {\(P_2\)};
  \node[ser] (p3) [above of=s2] {\(P_3\)};
  \node[ser] (p1) [above of=s3] {\(P_1\)};

\path
  (i1) edge[initedge] node[near start] {1} (s1)
  (i2) edge[initedge] node[near start] {2} (s2)
  (i3) edge[initedge] node[near start] {3} (s3)

  (s1) edge node[near start] {4} (p2)
  (s2) edge node[near start, right] {5} (p3)
  (s3) edge node[near start, right] {6} (p1)

  (p2) edge node[near start, above right] {7} (s2)
  (p3) edge node[near start, above=5pt] {8} (s3)
  (p1) edge node[near start, right=10pt] {9} (s1)
  ;
\end{tikzpicture}
 }
    \captionof{figure}{Non-deterministic deadlock from \Cref{ex:deadlock-envelope},
      caused by the code in \Cref{fig:deadlock-envelope-code} forming a
      non-trivial dependency cycle. Arrows indicate directions in which queries
      (identified and ordered by numbers) are sent.\label{fig:deadlock-envelope}}\end{minipage}
\end{figure}

\subsection{Locks and Deadlocks}
\label{sec:formal:deadlocks}

In message-passing systems, an agent is \changeOurs{}{style}{generally 
considered “locked” when it cannot progress because it is waiting to receive a specific message}.
In this work we consider a service locked specifically when it is awaiting a response from another service,
as defined below.

\begin{definition}[Lock]\label{def:lock}
  Consider an SRPC service \(\Ser = \pqued{\IQue}{\Proc}{\OQue}\). We say that \defn{\(\Ser\) is locked on name \(\Name_s\)} when the following
  clauses hold:
  \begin{enumerate}
  \item\label{item:def:lock:proc-locked} \(\issrpcl{\Proc}{\NameE_c}{\Name_s}\) \quad\footnotesize{(i.e., the process in
    \(\Ser\) is serving client \(\NameE_c\) and awaits a response from \(\Name_s\))}
  \item\label{item:def:lock:no-response} \(\IQue \notrans{\qspopr{\Name_s}}\) \quad\footnotesize{(i.e., there is no response from \(\Name_s\) in \(\Ser\)'s input queue)}
  \item\label{item:def:lock:no-out} \(\OQue \notrans{\Act}\) \footnotesize{\quad(i.e., the output queue of \(\Ser\) is empty)}
  \end{enumerate}
\end{definition}

Clause~\ref{item:def:lock:proc-locked} of \Cref{def:lock} requires the service
process to await a response from the server \(\Name_s\) (as in \Cref{def:abstract-srpc-proc}).
Moreover, by clause~\ref{item:def:lock:no-out}, the output queue of the service must be empty:
hence, the service may \emph{not} be considered locked as soon as its process becomes locked ---
but the service \emph{may} become locked if its output queue becomes empty before the awaited response
arrives on the input queue (by clause~\ref{item:def:lock:no-response}).
Note that \changeCameraReady{Clarify}{an SRPC service $\Ser$ locked on a service named \(\Name_s\)} can still add incoming messages to its input queue (by \cref{\RuleSerIn} in \Cref{fig:semantics-service}),
but cannot produce outgoing messages until it receives a response \changeCameraReady{Clarify}{from \(\Name_s\).}
Using \Cref{def:lock} we can now adapt the definition of deadlock from \cite{ChandyMisraHaas83}.

\begin{definition}[Deadlock]\label{def:deadlock}
  A non-empty set of service names \defn{\(\mathcal{D}\) is a deadlocked set in \(\Net\)} when for every
  \(\Name \in \mathcal{D}\), the service \(\Net(\Name)\) is locked on a member of \(\mathcal{D}\). \note{This definition used to say ``exclusively locked'' but it may be confusing, as services can
    clearly be locked on one name at a time.}
We say that a \defn{network \(\Net\) has a deadlock} if there is a deadlocked
  set in \(\Net\). We say that a \defn{service named \(\Name\) is deadlocked in \(\Net\)}
  when \(\Name\) belongs to a deadlocked set in \(\Net\).
\end{definition}

Note that \Cref{def:deadlock} captures all ``localised deadlocks,''
\emph{i.e.}, it does not require the entire system to be deadlocked. As
one would expect, deadlocks defined this way are persistent, by
\Cref{lem:deadlock-persistent}.

\begin{lemma}[Persistence of deadlocks]
  \label{lem:deadlock-persistent}
  If\, \(\mathcal{D}\) is a deadlocked set in \(\Net\) and \(\Net \trans{\ppath} \Net'\), then
  \(\mathcal{D}\) is also a deadlocked set in \(\Net'\).
\end{lemma}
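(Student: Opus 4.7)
The plan is to reduce the claim to a single-step statement and then do case analysis on the transition label. Formally, I would first show: if $\mathcal{D}$ is deadlocked in $\Net$ and $\Net \trans{\Act} \Net'$, then $\mathcal{D}$ is deadlocked in $\Net'$. The full lemma then follows by an easy induction on the length of $\ppath$, with the empty-path base case trivial. Throughout, I write a ``locked'' service according to the three clauses of \Cref{def:lock}: (i) its process is in state $\srpcl{\NameE_c}{\Name_s}$; (ii) no response from $\Name_s$ sits in its input queue; (iii) its output queue is empty.

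For the single-step case I would split on $\Act$. If $\Act = \nettau{\Name}{\ProcAct}$ with $\Name \notin \mathcal{D}$, or $\Act = \netcomm{\Name_0}{\Name_1}{\Chan}$ with $\Name_0, \Name_1 \notin \mathcal{D}$, then every service in $\mathcal{D}$ is unchanged and remains locked on the same name. If $\Act = \nettau{\Name}{\ProcAct}$ with $\Name \in \mathcal{D}$, inspection of \Cref{fig:semantics-abstract-proc} shows that from $\srpcl{\NameE_c}{\Name_s}$ the only enabled process action is $\recvr{\Name_s}$; hence the only service-level $\tau$-action available is $\taurecv{\Name_s}{\rtag}$, which by \cref{\RuleSerTauIn,\RuleQueFind,\RuleQueSeek} requires a response from $\Name_s$ already in the input queue --- contradicting clause~(ii). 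The case $\Act = \netcomm{\Name_0}{\Name_1}{\Chan}$ with $\Name_0 \in \mathcal{D}$ is likewise impossible: sending requires $\Net(\Name_0) \trans{\send{\Name_1}{\Chan}} \Ser_0$ via \cref{\RuleSerOut}, which needs a non-empty output queue, contradicting clause~(iii).

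The remaining (and most delicate) case is $\Act = \netcomm{\Name_0}{\Name_1}{\Chan}$ with $\Name_1 \in \mathcal{D}$ and, by the preceding case, $\Name_0 \notin \mathcal{D}$. Here $\Name_1$'s process and output queue are untouched, so clauses~(i) and~(iii) are immediate; the only question is clause~(ii), since $\Name_1$'s input queue gains the message $\queel{\Name_0}{\Chan}$ (or, for $\Chan = \ctag$, a query with anonymous sender $\nameext$ per \cref{\RuleNetCast}). For clause~(ii) to fail, the new entry would have to be a response from the specific $\Name_s$ on which $\Name_1$ is locked, i.e., $\Chan = \rtag$ and $\Name_0 = \Name_s$. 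But $\Name_s \in \mathcal{D}$ by hypothesis whereas $\Name_0 \notin \mathcal{D}$, contradiction; the cast sub-case is excluded since $\nameext \ne \Name_s$ and $\qtag \ne \rtag$. Hence every service of $\mathcal{D}$ is still locked on the same member of $\mathcal{D}$ in $\Net'$.

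The main obstacle is the last case: one must notice that preservation of clause~(ii) is not automatic from ``$\Name_1$ only grows its input queue,'' and that it relies on a small but crucial global argument --- namely, that the awaited server $\Name_s$, being itself in $\mathcal{D}$, has an empty output queue and therefore cannot be the actual sender $\Name_0$. The other cases are routine once the SRPC LTS is read off \Cref{fig:semantics-abstract-proc}.
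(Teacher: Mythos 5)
Your proposal is correct and follows essentially the same route as the paper: reduce to a single-step lemma showing each service of $\mathcal{D}$ stays locked on the same name, then induct over the path, with the key observation being that the awaited server is itself in $\mathcal{D}$ and hence (having an empty output queue) cannot deliver the unlocking response. Your case analysis is simply a more detailed spelling-out of the paper's one-line argument.
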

\begin{proof}\renewcommand{\qedsymbol}{\coqed}
  We first consider the case \(\Net \trans{\NAct} \Net'\) (with a single action)
  and show that if a service \(\Name \in \mathcal{D}\) is locked on \(\Name'\) in
  \(\Net\), then \(\Name\) is also locked on \(\Name'\) in \(\Net'\). This comes from the fact that the
  only way for \(\Name\) to unlock is to receive a response from \(\Name'\) --- but this cannot happen
  because \(\Name'\) is locked in \(\Net\) as well, hence \(\Name'\) cannot send. We then extend this result to a path \(\mpath\) by induction over \(\ppath\),
  and the statement holds by \Cref{def:deadlock}.
\end{proof}

\begin{example}[A simple deadlock]\label{ex:deadlock-proxy}Recall the proxy program in \Cref{fig:example-proxy}. If the proxy receives a
  query from its own \texttt{server} which is an SRPC service, then
  the server will remain locked until the proxy sends back a response
  (line 6).
However, before that happens, the proxy sends a query to the \texttt{server}
  itself (line 3) and becomes locked on it (line 4). At this point, both services expect
  a response from each other, creating the deadlock depicted in
  \Cref{fig:deadlock-loop}.
\end{example}

\begin{figure}[t]
  \newcommand{\exampleEntry}{E}
\newcommand{\examplePing}{P}

\begin{minipage}[t]{0.66\textwidth}
  \begin{pseudocode}[|P|l|]{Endpoint code & SRPC state}
    \psdef{endpoint}{} & \\
    ~~\psrecvqs{client}{msg} & \(\srpcf\) \\
    ~~\psif{msg == ping} & \\
    ~~~~\pssendr{client}{pong} & \(\srpcw{\texttt{client}}\) \\
    ~~\pselse & \\
    ~~~~\pssendq{msg}{ping} & \(\srpcw{\texttt{client}}\) \\
    ~~~~\psrecvrs{msg}{x} & \(\srpcl{\texttt{client}}{\texttt{msg}}\) \\
    ~~~~\psif{client} != \psnil & \(\srpcw{\texttt{client}}\) \\
    ~~~~~~\pssendr{client}{x} & \(\srpcw{\texttt{client}}\) \\
    ~endpoint() & \\ \end{pseudocode}
\end{minipage}\hfill \begin{minipage}[t]{0.33\textwidth}
  \begin{pseudocode}{System init code}
    \pslet{e1}{\psspawn{endpoint()}}\\
    \pslet{e2}{\psspawn{endpoint()}}\\
    \pslet{e3}{\psspawn{endpoint()}}\\
    \pslet{p1}{\psspawn{proxy(e2)}}\\
    \pslet{p2}{\psspawn{proxy(e3)}}\\
    \pslet{p3}{\psspawn{proxy(e1)}}\\
    \\
    \pssendq{e1}{p1}\\
    \pssendq{e2}{p2}\\
    \!\!\!\pssendq{e3}{p3}\\ \end{pseudocode}
\end{minipage}
 \vspace{-2mm}\caption{Code for \Cref{ex:deadlock-envelope}, possibly leading the deadlock
    in \Cref{fig:deadlock-envelope}. (The code of \texttt{proxy} is in
    \Cref{fig:example-proxy}.)\label{fig:deadlock-envelope-code}}\end{figure}

\begin{example}[A more complex, non-deterministic deadlock]
  \label{ex:deadlock-envelope}
  In \Cref{fig:deadlock-envelope-code,fig:deadlock-envelope}, several parallel
  queries may create a deadlock, depending on how they are delivered to the
  target service.
\Cref{fig:deadlock-envelope-code} (left) defines an ``endpoint''\note{Find a more
  descriptive name} process which handles queries by either immediately
  responding to \texttt{ping}s, or pinging another service specified in the query
  payload.
\Cref{fig:deadlock-envelope-code} (right) initialises a system 
  by spawning three ``endpoint'' services (\(E_1,E_2,E_3\))
  and three proxy services (\(P_1,P_2,P_3\), with the process code from
  \Cref{fig:example-proxy}): these proxies forward anything they receive to the
  respective endpoint service. On lines 7--9, a query is sent to each endpoint service, asking it to ping the
  proxy of another endpoint in a cyclic way.
In this system a deadlock can occur depending on scheduling: if queries are
  delivered in the order in \Cref{fig:deadlock-envelope}, the system
  deadlocks --- but if the queries are delivered in the order 1,4,7,2,5,8,3,6,9 and
  responses are sent as soon as possible, then every query receives a
  response, without deadlocks.
\end{example}

 \section{A Model of Generic Distributed Black-Box Outline Monitors}\label{sec:monitoring}

Generally speaking, a \emph{monitor} is a process that observes a system under
scrutiny and potentially reaches a \emph{verdict} reporting correct or
incorrect behaviours.
In this section, we introduce our model of \emph{black-box outline monitors}
that are deployed alongside each service \(\Ser\) in a network \(\Net\), and
observe the incoming and outgoing messages of \(\Ser\) without depending on the
internals of \(\Ser\): this design allows our monitor model to be adaptable to
different application scenarios. In this work, we specify each monitor to act as a \emph{proxy} that intercepts, examines, and
forwards the incoming and outgoing messages of the overseen service. \note{Mention that we discuss alternative designs in \Cref{sec:conclusion}?}Moreover, we allow our monitors to communicate directly with each other: this is
necessary for reaching deadlock verdicts (since a deadlock may involve an
arbitrary number of services across the network) while maintaining the
\emph{distributed} nature of the system (i.e., our monitors do not require any
centralised components).

We formalise monitored services and networks
(\Cref{sec:monitoring:services-nets}) and monitor instrumentation
(\Cref{sec:monitoring:instr}), proving that they guarantee our correctness
\Cref{crit:transparency}, i.e., \emph{transparency}
(\Cref{lem:net-transp-compl,lem:net-transp-sound}). Note that these results are independent from the specific deadlock detection
monitoring algorithm in use; we present and verify our algorithm in
\Cref{sec:algorithm}.

\subsection{Monitored Services and Networks}\label{sec:monitoring:services-nets}

\Cref{def:syntax-mon} models a \emph{monitored service}
by ``wrapping'' a service with a proxy monitor with its own message queue. \Cref{def:syntax-mon} mentions a \emph{monitor state \(\MState\)}, \emph{probe
\(\Probe\)}, and \emph{monitor algorithm function \(\handle\)}: we leave them unspecified for now (we instantiate them for our
deadlock detection algorithm in \Cref{sec:algorithm}).

\begin{definition}[Monitored services]
  \label{def:syntax-mon}\label{def:semantics-mon}A \defn{monitored service \(\MSer\)} is defined by the following grammar.

  \noindent \begin{minipage}{\linewidth}
    \begin{bnf}
  \bnfProd{Monitor queue element\;}{\MQueElem}\bnfDef \recv{\Name}{\Chan} \bnfSep \send{\Name}{\Chan} \quad\hfill{\text{\footnotesize{(message with tag \(\Chan\) from/to the service named \(\Name\))}}}\bnfSepN \mrecv{\Name}{\Chan}{\Probe}
  \bnfSep \msend{\Name}{\Chan}{\Probe} \quad\hfill{\text{\footnotesize{(probe \(\Probe\) from/to the monitor of the service named \(\Name\))}}}\\
  \bnfProd{Monitor queue\;}{\MQue}\bnfDef \quenil \bnfSep \MQueElem \bnfSep \queapp{\MQue}{\MQue}\\
  \bnfProd{Monitored service\;}{\MSer}\bnfDef \mqued{\MQue}{\MState}{\Ser}\end{bnf}
   \end{minipage}

  The LTS semantics of a monitored service \(\MSer\) is inductively defined in \Cref{fig:semantics-mon}
  using the labels:
  {\small\begin{bnf}
  \bnfProd{Monitored service action\;}{\MSerAct}\bnfDef \SerAct \quad\hfill{\text{\footnotesize{(visble or internal service action, from \Cref{def:semantics-service})}}}\bnfSepN \mrecv{\Name}{\Chan}{\Probe}
  \bnfSep \msend{\Name}{\Chan}{\Probe} \quad\hfill{\text{\footnotesize{(probe \(\Probe\) from/to the monitor of the service named \(\Name\))}}}\bnfSepN \mtaurecv{\Name}{\Chan}\bnfSep \mtausend{\Name}{\Chan} \quad\hfill{\text{\footnotesize{(internal dequeuing/enqueuing via the monitor queue)}}}\bnfSepN \mtaumrecv{\Name}{\Chan}{\Probe} \quad\hfill{\text{\footnotesize{(internal dequeuing of an incoming probe \(\Probe\) from \(\Name\))}}}\end{bnf}
 }
\end{definition}

By \Cref{def:syntax-mon}, a monitored service is a triplet
\(\mqued{\MQue}{\MState}{\Ser}\) comprising a service \(\Ser\)
(from \Cref{def:syntax-service}), a monitor queue \(\MQue\), and a monitor state
\(\MState\). The queue \(\MQue\) is a list of elements which can be either incoming or outgoing
messages, from/to either the monitor \(\MState\) or the service \(\Ser\);
specifically, \(\MState\) may send/receive probes \(\Probe\) to/from other
monitors.
\Cref{fig:example-mon-service} contains a depiction of the semantics of a
monitored service, according to the rules in \Cref{fig:semantics-mon}. 
The intuition is that a monitor \(\MState\) reacts to messages in its input queue
\(\MQue\) and processes them in FIFO order, updating its state and possibly
emitting probes, according to a \emph{monitor algorithm function} \(\handle\) \change{change:of-type}{Write ``of type'' instead of \(:\)}{of type} \(\MStateType \times \MQueElemType \to \MStateType \times \MQueType\) --- where \(\MStateType\), \(\MQueElemType\), and \(\MQueType\) are the sets
of all monitor states, monitor queue elements, and monitor queues, respectively. We assume that \(\handle\) is total, i.e., any call \(\handle(\MState, \MQueElem)\) returns some monitor state \(\MState'\) and queue \(\MQue'\);
\changeCameraReady{Restrict added queue to probe sends}{ moreover, the returned queue $\MQue'$ may only contain outgoing probes (\(\msend{\Name}{\Chan}{\Probe}\)).}
\Cref{\RuleMonIn,\RuleMonMonIn} in \Cref{fig:semantics-mon} say that
incoming messages and probes are appended to the monitor queue. 
By \cref{\RuleMonTauIn,\RuleMonTauMonIn,\RuleMonOut}, if the element at the front
of the monitor queue is either an incoming service message
(\(\recv{\Name}{\Chan}\)), or an incoming probe
(\(\mrecv{\Name}{\Chan}{\Probe}\)), or an outgoing service message
(\(\send{\Name}{\Chan}\)), then that \changeOurs{}{clarity}{element is forwarded as needed while also 
being passed to the monitor function \(\handle\) together with the current monitor
state \(\MState_0\).}
\changeOursNoMargin{Subsequently, \(\handle\) returns} the updated monitor state \(\MState_1\) and a
(possibly empty) list of outgoing probes \(\MQue'\), which are added at the front of the monitor
queue (hence, the probes in \(\MQue'\) pre-empt previously-queued elements). \Cref{\RuleMonMonOut} forwards to the network an outgoing probe if it is at the top of the monitor
queue. \Cref{\RuleMonTau} allows the monitored service to perform internal actions, while
\cref{\RuleMonTauOut} moves an output message from the service into the monitor queue.

\begin{figure}[tbp]
  \centering
  \noindent
  \begin{minipage}{\textwidth}
    \providecommand{\rulescale}{0.8}
\begin{deduction}
  &&
  \Rule[\rulescale]{\RuleMonTauIn}{(\MState_1, \MQue') = \handle(\MState_0, \recv{\Name}{\Chan})\quad \Ser_0 \trans{\recv{\Name}{\Chan}} \Ser_1\quad }{\mqued{\queapp{\recv{\Name}{\Chan}}{\MQue_1}}{\MState_0}{\Ser_0}\trans{\mtaurecv{\Name}{\Chan}}\mqued{\queapp{\MQue'}{\MQue_1}}{\MState_1}{\Ser_1}}
  &&
  \Rule[\rulescale]{\RuleMonOut}{\MQue_0 = \queapp{\send{\Name}{\Chan}}{\MQue_1}\quad (\MState_1, \MQue') = \handle(\MState_0, \send{\Name}{\Chan})\quad }{\mqued{\MQue_0}{\MState_0}{\Ser}\trans{\send{\Name}{\Chan}}\mqued{\queapp{\MQue'}{\MQue_1}}{\MState_1}{\Ser}}
  &&
\end{deduction}
 \end{minipage}\smallskip\\\smallskip
  \begin{minipage}{0.66\textwidth}
    \providecommand{\rulescale}{0.8}
\begin{deduction}
  &&
  \Rule[\rulescale]{\RuleMonIn}{\MQue_1 = \queapp{\MQue_0}{\recv{\Name}{\Chan}}}{\mqued{\MQue_0}{\MState}{\Ser}\trans{\recv{\Name}{\Chan}}\mqued{\MQue_1}{\MState}{\Ser}}\hspace{5pt}
  \Rule[\rulescale]{\RuleMonMonOut}{\MQue_0 = \queapp{\msend{\Name}{\Chan}{\Probe}}{\MQue_1}\quad }{\mqued{\MQue_0}{\MState}{\Ser}\trans{\msend{\Name}{\Chan}{\Probe}}\mqued{\MQue_1}{\MState}{\Ser}}&&
\end{deduction}
\begin{deduction}
  \RuleFit[\rulescale]{\RuleMonMonIn}{\MQue_1 = \queapp{\MQue_0}{\mrecv{\Name}{\Chan}{\Probe}}}{\mqued{\MQue_0}{\MState}{\Ser}\trans{\mrecv{\Name}{\Chan}{\Probe}}\mqued{\MQue_1}{\MState}{\Ser}}\hspace{5pt}
  \RuleTif[\rulescale]{\RuleMonTauMonIn}{\MQue_0 = \queapp{\mrecv{\Name}{\Chan}{\Probe}}{\MQue_1}\quad (\MState_1, \MQue') = \handle(\MState_0, \mrecv{\Name}{\Chan}{\Probe})\quad }{\mqued{\MQue_0}{\MState_0}{\Ser}\trans{\mtaumrecv{\Name}{\Chan}{\Probe}}\mqued{\queapp{\MQue'}{\MQue_1}}{\MState_1}{\Ser}\hspace{5pt}}\end{deduction}
\begin{deduction}
  &&
  \RuleFit[\rulescale]{\RuleMonTau}{\Ser_0 \trans{\tauof{\SerAct}} \Ser_1\quad }{\mqued{\MQue}{\MState}{\Ser_0}\trans{\tauof{\SerAct}}\mqued{\MQue}{\MState}{\Ser_1}}\hspace{5pt}\RuleFit[\rulescale]{\RuleMonTauOut}{\MQue_1 = \queapp{\MQue_0}{\send{\Name}{\Chan}}\quad \Ser_0 \trans{\send{\Name}{\Chan}} \Ser_1}{\mqued{\MQue_0}{\MState}{\Ser_0}\trans{\mtausend{\Name}{\Chan}}\mqued{\MQue_1}{\MState}{\Ser_1}}&&\end{deduction}
 \vspace{-1em}\caption{LTS semantics of monitored services. In \cref{\RuleMonTau}, \(\SerAct\)
      is either \(\recv{\Name}{\Chan}\), \(\send{\Name}{\Chan}\), or \(\tau\) (from~\Cref{def:semantics-service}).}\label{fig:semantics-mon}\end{minipage}\hfill
  \begin{minipage}{0.32\textwidth}
    \vspace{5mm}\scalebox{0.8}{\hspace{-5pt}\begin{tikzpicture}[->,>=stealth',auto,node distance=2.5cm, semithick,mon/.style={node distance=3cm},que/.style={node distance=0.75cm},ique/.style={node distance=1cm},proc/.style={node distance=1cm},oque/.style={node distance=1cm},every label/.style={align=left},every node/.style={align=center}]

\node (start) [] {};\node (init) [above of=start, node distance=0.7cm] {};\node (end) [below of=start, node distance=0.7cm] {};\node[mon]  (m) [right of=start, node distance=1.7cm] {\(\strut{\mquedsepl \MState \mquedsepr} \)};\node[que]  (q) [left of=m] {\(\strut{\mquedl \MQue}\)};\node[ique] (i) [right of=m] {\(\strut{\pquedl \IQue}\)};\node[proc] (p) [right of=i] {\(\strut{\pquedsepl \Proc \pquedsepr }\)};\node[oque] (o) [right of=p] {\(\strut{\OQue \pquedr \mquedr}\)};

\path (init) edge[out=0, in=110, above] node {\(\recvq{\Name}\)} (q)(q) edge[out=60, in=110, dotted] node[above] {} (m)(q) edge[out=60, in=120, line width=1.8pt] node[above] {\footnotesize{\(\mtaurecvq{\Name}\)}} (i)(i) edge[out=60, in=110, dashed] node[above] {\footnotesize{\(\taurecvq{\Name}\)}} (p)(p) edge[out=60, in=110, dashed] node[above right] {\footnotesize{\(\tausendr{\Name}\)}} (o)(o) edge[out=230, in=310] node[above] {\footnotesize{\(\mtausendr{\Name}\)}} (q)(q) edge[out=270, in=250, dotted] node[below] {} (m)(q) edge[out=270, in=0, line width=1.8pt] node[below=2pt] {\(\sendr{\Name}\)} (end);
\end{tikzpicture}
 }
\caption{Visualisation of the communications in a monitored service that
      immediately responds to an incoming query.}
    \label{fig:example-mon-service}\end{minipage}
\end{figure}

\begin{example}
  The arrows in \Cref{fig:example-mon-service} outline how message exchanges occur (in clockwise order) as described by
rules~\ref{\RuleMonIn}, \ref{\RuleMonTauIn}, \ref{\RuleMonTau} (twice), \ref{\RuleMonTauOut}, and
  \ref{\RuleMonOut} in \Cref{fig:semantics-mon}. \Cref{\RuleMonTau} is applied twice (dashed arrows) following \cref{\RuleSerTauIn,\RuleSerTauOut} in
  \Cref{def:semantics-service}. The depiction illustrates the black-box design of the monitor: in fact, \(\MState\) can only
  indirectly observe the transitions drawn with solid arrows, because they result in an incoming or
  outgoing message being added to the monitor queue. Then, the monitor forwards those messages toward the overseen service or to the network (thicker
  arrows) while updating its own state (dotted arrows).\end{example}

We now formalise a monitored network as a composition of named monitored services, with syntax and
semantics similar to \Cref{def:network}.

\begin{definition}[Monitored networks]
  \label{def:mon-network}\label{def:mon-network-semantics}A \defn{monitored network} \(\MNet\) is a function from a set of names to monitored services (from \Cref{def:syntax-mon}). We write \(\MNet(\Name)\) to denote the monitored service named \(\Name\) within \(\MNet\). We write \(\MNetType\) for the set of all monitored networks.
A monitored network \(\MNet\) has the LTS semantics in \Cref{fig:semantics-mon}, using
  the following labels:
  {\small\begin{bnf}
  \bnfProd{Monitored network action}{\MAct}\bnfDef \Act \hfill\quad\text{\footnotesize(visible or internal network action, by \Cref{def:network})}\bnfSepN \mnettau{\Name}{\ProcAct}\hfill\quad\text{\footnotesize(internal action performed by monitor of service \(\Name\))}\bnfSepN \mnetcomm{\Name_0}{\Name_1}{\Probe}\hfill\quad\text{\footnotesize(communication of a probe \(\Probe\) from the monitor of \(\Name_0\) to \(\Name_1\))}\end{bnf}
 }

  \noindent We write \(\mpath\) to denote a \defn{monitored path}, i.e., a sequence
  of monitored network actions \(\MAct_1 \cdot \MAct_2 \cdot \ldots \cdot
  \MAct_n\). We write \(\MNet \trans{\mpath} \MNet'\) if either \(\mpath\) is empty
  and \(\MNet' = \MNet\), or \(\mpath = \MAct \cdot \mpath'\) and \(\exists \MNet'':
  \MNet \trans{\MAct} \MNet'' \trans{\mpath'} \MNet'\).
\end{definition}

\begin{figure}
  \providecommand{\rulescale}{0.8}
\begin{deduction}
  \Rule[\rulescale]{\RuleMNetSCom}{{\MNet (\Name_0) \trans{\send{\Name_1}{\Chan}} \MSer_0}\quad {\MNet \subst{\Name_0}{\MSer_0} (\Name_1) \trans{\recv{\Name_0}{\Chan}} \MSer_1}\quad \Chan \neq \ctag }{\Net \trans{\netcomm{\Name_0}{\Name_1}{\Chan}}\Net \subst{\Name_0}{\Ser_0}\subst{\Name_1}{\Ser_1}}&&
  \Rule[\rulescale]{\RuleMNetSCast}{{\MNet (\Name_0) \trans{\sendc{\Name_1}} \MSer_0}\quad {\MNet \subst{\Name_0}{\MSer_0} (\Name_1) \trans{\recvc} \MSer_1}}{\Net \trans{\netcomm{\Name_0}{\Name_1}{\Chan}}\Net \subst{\Name_0}{\Ser_0}\subst{\Name_1}{\Ser_1}\quad }\end{deduction}
\begin{deduction}
  \RuleTif[\rulescale]{\RuleMNetCom}{{\MNet (\Name_0) \trans{\msend{\Name_1}{\Chan}{\Probe}} \MSer_0}\quad {\MNet \subst{\Name_0}{\MSer_0} (\Name_1) \trans{\mrecv{\Name_0}{\Chan}{\Probe}} \MSer_1}}{\MNet \trans{\mnetcomm{\Name_0}{\Name_1}{{\Probe}}}\MNet \subst{\Name_0}{\MSer_0}\subst{\Name_1}{\MSer_1}\quad }&&
  \Rule[\rulescale]{\RuleMNetSTau}{\MNet(\Name) \trans{\tauof{\ProcAct}} \MSer }{\MNet \trans{\nettau{\Name}{\ProcAct}} \MNet \subst{\Name}{\MSer}}&&
  \Rule[\rulescale]{\RuleMNetTau}{\MNet(\Name) \trans{\mtauof{\MSerAct}} \MSer }{\MNet \trans{\mnettau{\Name}{\MSerAct}} \MNet \subst{\Name}{\MSer}}\quad \end{deduction}
 \caption{LTS semantics of monitored networks.}\label{fig:semantics-mnet}\end{figure}

\subsection{Instrumentation of Services and Networks, and Transparency}
\label{sec:monitoring:instr}
\label{sec:monitoring:transparency}

In \Cref{def:instr} below we formalise a class of functions that transform an
unmonitored network \(\Net\) into a monitored one, by instrumenting each service of \(\Net\) with a monitor.

\begin{definition}[Monitor instrumentation]
  \label{def:instr}A function \(\instr\) of type \(\NetType \to \MNetType\) is a \defn{monitor instrumentation}
  iff for any network \(\Net\) and name \(\Name \in \dom(\Net)\):
  \begin{enumerate}
  \item \(\Net(\Name) = \Ser\) implies \(\left(\instr(\Net)\right)(\Name) = \mqued{\MQue}{\MState}{\Ser}\) where
    \(\MQue\) is empty or only contains probes; and
  \item \(\left(\instr(\Net)\right)(\Name) = \mqued{\MQue}{\MState}{\Ser}\) implies \(\Net(\Name) = \Ser\) and
    \(\MQue\) is empty or only contains probes.
  \end{enumerate}
\end{definition}

\Cref{def:instr} requires that when \(\instr\) instruments a network \(\Net\),
the resulting monitored network does not contain any additional service messages
\(\Act\) in the monitor \changeOurs{}{style}{queues, since these} messages could disrupt the network.
Dually to instrumentation, in \Cref{def:deinstr} below we introduce a function to
\emph{deinstrument} a monitored network by ``stripping'' its monitors.

\begin{definition}[Monitor deinstrumentation]
  \label{def:deinstr}
  \label{def:strip}
  We define the \defn{deinstrumentation} \(\deinstr\) as the function of type \(\MNetType \to \NetType\)
  such that for any \(\MNet\) and \(\Name \in \dom(\MNet)\), we have \((\deinstr(\MNet))(\Name) = \strip{\MNet(\Name)}\), where the
  \defn{monitored service stripping operation \(\strip{\MSer}\)} is defined as follows:

  \smallskip \centerline{\(
    \begin{array}{rcl}
      \strip{\mquedp{\MQue}{\MState}{\IQue}{\Proc}{\OQue}}
      &=&\pqued{\queapp{\IQue}{\filterI{\MQue}}~}{~\Proc~}{~\queapp{\filterO{\MQue}}{\OQue}}
    \end{array}
  \)}\smallskip 

  \noindent where \(\filterI{\MQue}\) and \(\filterO{\MQue}\) are the longest subsequences
  from \(\MQue\) that only contain incoming and outgoing service messages, respectively.\note{Is it clear? The precise definition may not be more illuminating}\end{definition}

Intuitively, the deinstrumentation in \Cref{def:deinstr} strips all monitors and
monitor queues, and moves all incoming and outgoing service messages contained
in monitor queues into the input or output queue of the service. As a result,
the deinstrumented system can keep executing as if the monitors were never
instrumented; \change{change:retraction}{Mentioned that deinstrumentation is a retraction}{notably, deinstrumentation is the left-inverse (retraction) of any instrumentation}, as per \Cref{lem:strip-instr} below. Note that the \emph{vice versa} is not generally true: i.e., for arbitrary
instrumentation \(\instr\) and monitored network \(\MNet\) we may have \(\instr(\deinstr(\MNet)) \neq \MNet\),
because \(\MNet\) may contain probes that are stripped by \(\deinstr\) and instantiated differently by \(\instr\).

\begin{proposition}
  \label{lem:strip-instr}
  For any instrumentation \(\instr\), we have \(\deinstr(\instr(\Net)) = \Net\).
\end{proposition}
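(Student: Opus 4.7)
The plan is to proceed by functional extensionality: since a network is a function from names to services (Definition~\ref{def:network}), it suffices to show that $\dom(\deinstr(\instr(\Net))) = \dom(\Net)$ and that $(\deinstr(\instr(\Net)))(\Name) = \Net(\Name)$ for every $\Name$ in the domain. Domain equality follows directly from the two clauses of Definition~\ref{def:instr}: clause~1 guarantees that every $\Name \in \dom(\Net)$ produces a monitored service in $\instr(\Net)$, and clause~2 guarantees that every $\Name$ in $\dom(\instr(\Net))$ comes from $\dom(\Net)$. Since $\deinstr$ preserves domains by construction, the two domains coincide.

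For the pointwise equality, I would fix an arbitrary $\Name$ and let $\Net(\Name) = \Ser = \pqued{\IQue}{\Proc}{\OQue}$. By clause~1 of Definition~\ref{def:instr}, $(\instr(\Net))(\Name) = \mqued{\MQue}{\MState}{\Ser}$ for some monitor state $\MState$ and some queue $\MQue$ that is either empty or contains only probes. Unfolding Definition~\ref{def:deinstr}, we obtain
\[
  (\deinstr(\instr(\Net)))(\Name) \;=\; \strip{\mquedp{\MQue}{\MState}{\IQue}{\Proc}{\OQue}} \;=\; \pqued{\queapp{\IQue}{\filterI{\MQue}}}{\Proc}{\queapp{\filterO{\MQue}}{\OQue}}.
\]

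The crux is then the observation that, because $\MQue$ contains only probes (elements of the form $\mrecv{\Name}{\Chan}{\Probe}$ or $\msend{\Name}{\Chan}{\Probe}$) and no raw service messages of the form $\recv{\Name}{\Chan}$ or $\send{\Name}{\Chan}$, the filtered subsequences $\filterI{\MQue}$ and $\filterO{\MQue}$ are both $\quenil$. Thus the displayed expression collapses to $\pqued{\IQue}{\Proc}{\OQue} = \Ser = \Net(\Name)$, which concludes the argument.

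There is no genuine obstacle here: the proposition is essentially a sanity check on the compatibility of Definitions~\ref{def:instr} and~\ref{def:deinstr}, and it hinges entirely on the ``probes-only'' restriction in Definition~\ref{def:instr}. The only mildly delicate point is noting why the converse $\instr(\deinstr(\MNet)) = \MNet$ fails in general (as the paper already remarks, $\MNet$ may carry probes that $\deinstr$ discards and $\instr$ cannot reconstruct), which clarifies that deinstrumentation is only a left inverse (a retraction) of instrumentation, not a two-sided inverse.
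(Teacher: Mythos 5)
Your proof is correct and matches the intended argument: the paper states \Cref{lem:strip-instr} without an explicit proof precisely because it follows by direct unfolding of \Cref{def:instr,def:deinstr}, with the key observation being exactly the one you make --- the probes-only restriction on \(\MQue\) forces \(\filterI{\MQue} = \filterO{\MQue} = \quenil\), so the stripped service collapses to \(\pqued{\IQue}{\Proc}{\OQue}\). Your remarks on domain preservation and on why the converse \(\instr(\deinstr(\MNet)) = \MNet\) fails are consistent with the paper's own discussion of \(\deinstr\) as a retraction.
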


We can now prove that our generic black-box instrumentation/deinstrumentation in
\Cref{def:instr,def:deinstr} satisfy \Cref{crit:transparency}: they ensure
transparency for \emph{any} network, and also for any specific instance of
the monitoring algorithm \(\handle\), monitoring state \(\MState\), and probe
format \(\Probe\) (which we left unspecified in \Cref{def:syntax-mon}).
This result is shown in \Cref{lem:net-transp-compl,lem:net-transp-sound} below.

We first introduce some additional technical tools. In
\Cref{def:stripping-path} we extend the idea of ``monitor stripping'' by
removing all monitor actions from a monitored path \(\mpath\). 

\begin{definition}[Stripping of monitor actions]
  \label{def:stripping-path}
  The \defn{stripping of a monitored path \(\strip{\mpath}\)} is:
  \begin{equation*}
    \begin{array}{r@{\hspace{5pt}}c@{\hspace{5pt}}l@{\hspace{20pt}}r@{\hspace{5pt}}c@{\hspace{5pt}}l@{\hspace{20pt}}r@{\hspace{5pt}}c@{\hspace{5pt}}l}
      \strip{(\nettau{\Name}{\ProcAct} \pathseq \mpath)} &=& \nettau{\Name}{\ProcAct} \pathseq (\strip{\mpath})
      & \strip{(\mnettau{\Name}{\MSerAct} \pathseq \mpath)} &=& \strip{\mpath}
      & \strip{\pathnil} &=& \pathnil
      \\
      \strip{(\netcomm{\Name_0}{\Name_1}{\Chan} \pathseq \mpath)} &=& \netcomm{\Name_0}{\Name_1}{\Chan} \pathseq (\strip{\mpath})
      & \strip{(\mnetcomm{\Name_0}{\Name_1}{\Probe} \pathseq \mpath)} &=& \strip{\mpath}
    \end{array}
  \end{equation*}
\end{definition}

\begin{theorem}[Black-box instrumentation transparency --- completeness]
  \label{lem:net-transp-compl}For any instrumentation \(\instr\) and network \(\Net\), for any \(\Net'\) and path \(\ppath\),
  \(\Net \trans{\ppath} \Net'\) implies \(\exists \mpath',\MNet''\) such that \(\instr(\Net) \trans{\mpath'} \MNet''\) and \(\deinstr(\MNet'') = \Net'\).
\end{theorem}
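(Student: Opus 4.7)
The plan is to proceed by induction on the length of $\ppath$, reducing the theorem to a single-step simulation lemma that accounts for the message buffering performed by the monitor queues.

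For the base case $\ppath = \pathnil$, we have $\Net' = \Net$, so we take $\mpath' = \pathnil$ and $\MNet'' = \instr(\Net)$; by Proposition~\ref{lem:strip-instr}, $\deinstr(\MNet'') = \Net = \Net'$. For the inductive step, decompose $\ppath = \ppath_0 \pathseq \Act$ with $\Net \trans{\ppath_0} \Net_0 \trans{\Act} \Net'$; the induction hypothesis yields $\mpath_0, \MNet_0$ with $\instr(\Net) \trans{\mpath_0} \MNet_0$ and $\deinstr(\MNet_0) = \Net_0$. It then suffices to exhibit a monitored path $\mpath_1$ with $\MNet_0 \trans{\mpath_1} \MNet''$ and $\deinstr(\MNet'') = \Net'$, since concatenation gives the required $\mpath' = \mpath_0 \pathseq \mpath_1$.

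The single-step lemma I would prove by case analysis on $\Act$. Internal actions $\nettau{\Name}{\tautau}$ lift directly through $\RuleMNetSTau$ composed with $\RuleMonTau$ and $\RuleSerTauTau$. For $\nettau{\Name}{\taurecv{\Name'}{\Chan}}$ and $\nettau{\Name}{\tausend{\Name'}{\Chan}}$, the subtlety is that the consumed or emitted message may, after deinstrumentation, belong to the ``merged'' queues $\queapp{\IQue}{\filterI{\MQue}}$ or $\queapp{\filterO{\MQue}}{\OQue}$ rather than to $\IQue$ or $\OQue$ in $\MNet_0(\Name)$. If the message is already in $\IQue$ (resp.\ $\OQue$) we apply $\RuleMonTau$ with $\RuleSerTauIn$ (resp.\ $\RuleSerTauOut$); otherwise we must first drain $\MQue$ with $\RuleMonTauIn$ / $\RuleMonOut$ until the required message crosses into the service. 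Finally, $\netcomm{\Name_0}{\Name_1}{\Chan}$ is handled by first draining $\MNet_0(\Name_0)$'s monitor queue to expose an outgoing service message at the front, then firing $\RuleMNetSCom$ (or $\RuleMNetSCast$ for $\ctag$): on the receiver side, $\RuleMonIn$ always applies and simply appends to $\MNet_0(\Name_1)$'s monitor queue, whose $\filterI$ projection then accumulates the new message exactly as required by $\deinstr(\MNet'') = \Net'$.

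The main obstacle is the draining argument, because each application of $\handle$ inside $\RuleMonTauIn$ or $\RuleMonOut$ can prepend outgoing probes to the queue, which must be dispatched via $\RuleMonMonOut$ (and absorbed elsewhere via $\RuleMonMonIn$, possibly triggering further $\RuleMonTauMonIn$ steps) before the next buffered service message becomes accessible at the head. I would factor this into an auxiliary \emph{flush lemma}: for any monitored network $\MNet$ and name $\Name$, there is a finite monitored path $\MNet \trans{\mpath} \MNet^{\star}$ after which $\MNet^{\star}(\Name)$'s monitor queue contains only probes, with $\deinstr(\MNet^{\star}) = \deinstr(\MNet)$. Termination hinges on the restriction, stated after Definition~\ref{def:syntax-mon}, that $\handle$ only produces outgoing probes (never new service messages), so the multiset of buffered service messages in the queue of $\Name$ is strictly decreasing under draining; probes pushed elsewhere do not block the flush. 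With the flush lemma established, each case of the single-step simulation reduces to finitely many monitor micro-steps followed by the matching service or network transition, completing the induction.
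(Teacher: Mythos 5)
Your overall strategy --- induction on the path, matching each unmonitored action by a block of monitor forwarding micro-steps --- has the same shape as the paper's argument, but your induction hypothesis is too weak, and the single-step simulation lemma you reduce to is false as stated. Knowing only \(\deinstr(\MNet_0)=\Net_0\) permits an \(\MNet_0\) in which some monitor queue holds an outgoing service message \emph{ahead of} an incoming one, e.g. \(\MQue=\queapp{\send{\Name_a}{\rtag}}{\recv{\Name_b}{\qtag}}\) (reachable: the service emits a response which is moved into the monitor queue by \cref{\RuleMonTauOut}, and a query then arrives by \cref{\RuleMonIn}). If \(\Act\) is the internal step in which the service consumes the query from \(\Name_b\) --- legal in \(\deinstr(\MNet_0)\), since the query sits in the merged input queue --- then on the monitored side \(\recv{\Name_b}{\qtag}\) must first reach the front of the FIFO monitor queue, and the only rule that removes the blocking element \(\send{\Name_a}{\rtag}\) is \cref{\RuleMonOut}, which at network level forces a \cref{\RuleMNetSCom} delivery to \(\Name_a\). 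After that delivery the response lives in \(\Name_a\)'s merged input queue, whereas in \(\Net'\) it is still in \(\Name\)'s output queue, so \emph{no} reachable \(\MNet''\) satisfies \(\deinstr(\MNet'')=\Net'\). The same issue invalidates your flush lemma: a monitor queue cannot be reduced to probes only while preserving \(\deinstr\), because flushing an outgoing service message is an observable communication that advances the deinstrumented network.

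The repair is precisely the strengthening the paper proves as \Cref{lem:net-transp-compl:stronger}: carry through the induction not merely \(\deinstr(\MNet_0)=\Net_0\) but \(\MNet_0=\instr_0(\Net_0)\) for some instrumentation \(\instr_0\) --- i.e.\ after each matched action every monitor queue contains only probes --- together with \(\strip{\mpath_0}=\ppath_0\). Under that invariant a service message enters a monitor queue only transiently, while you are matching the very network action that moves it, the elements ahead of it are probes that can be discharged without affecting \(\deinstr\), and the blocking configuration above never arises. Your termination argument for draining (resting on the restriction that \(\handle\) returns only outgoing probes) is correct and is indeed needed; it is the invariant, not the draining, that must be strengthened.
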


\begin{proof}\renewcommand{\qedsymbol}{\coqed}
  We prove the statement as a consequence of the following stronger result:

  \medskip \hfill \begin{minipage}{0.8\linewidth}
    \begin{lemma}
      \label{lem:net-transp-compl:stronger}
      For any networks \(\Net\), \(\Net'\), path \(\ppath\), and instrumentation \(\instr\),
      we have that \(\Net \trans{\ppath} \Net'\) implies \(\exists \mpath',\instr'\) such that \(\strip{\mpath'} = \ppath\)
      and \(\instr(\Net) \trans{\mpath'} \instr'(\Net')\).\end{lemma}
  \end{minipage}
  \hfill \medskip 

  \noindent Then, from \Cref{lem:net-transp-compl:stronger} we obtain
  \Cref{lem:net-transp-compl} by constructing \(\mpath'\) from \(\ppath\) by
  adding suitable monitor actions (leading from \(\instr(\Net)\) to \(\instr'(\Net')\)), and
  then taking \(\MNet'' = \instr'(\MNet')\), which implies \(\deinstr(\MNet'') =
  \Net'\) (by \Cref{lem:strip-instr}).
\end{proof}

In order to prove the ``soundness'' part of \Cref{crit:transparency}
(\Cref{lem:net-transp-sound} below), in \Cref{def:flushing-path} we say that a
monitored path \(\mpath\) is \emph{monitor-flushing} if it only contains actions
that ``flush'' the monitor queues until they are empty or only contain probes,
thus leading to a state that corresponds to an instrumentation of some
unmonitored network (by \Cref{def:instr}). Note that a monitor-flushing path may
include the forwarding of incoming/outgoing service messages that are already in
the monitors' queues, but does \emph{not} include any execution step involving
the overseen services.

\begin{definition}[Monitor-flushing path]
  \label{def:flushing-path}
  We say that \defn{\(\mpath\) is a monitor-flushing path} if \(\mpath\) only contains actions of the form \(\mnettau{\Name}{\ProcAct}\), \(\mnetcomm{\Name_0}{\Name_1}{\Probe}\),
  or \(\netcomm{\Name_0}{\Name_1}{\Chan}\)
  (from \Cref{def:mon-network-semantics}).\end{definition}

\begin{theorem}[Black-box instrumentation transparency --- soundness]
  \label{lem:net-transp-sound}For any instrumentation \(\instr\) and network \(\Net\), for any monitored network \(\MNet'\) and monitored path \(\mpath\), \(\instr(\Net) \trans{\mpath} \MNet'\) implies \(\exists \MNet'', \Net''', \mpath', \ppath''\) such that \(\MNet' \trans{\mpath'} \MNet''\) and \(\deinstr(\MNet'') = \MNet'''\) and
  \(\Net \trans{\ppath''} \Net'''\).
\end{theorem}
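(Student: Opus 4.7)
The plan is to construct the witness path \(\mpath'\) as a \emph{monitor-flushing path} (Definition~\ref{def:flushing-path}) that drains all pending service messages from the monitor queues of \(\MNet'\), producing a quiescent monitored network \(\MNet''\) whose monitor queues contain only probes. Such an \(\MNet''\) coincides with \(\instr''(\Net''')\) for some instrumentation \(\instr''\) and network \(\Net'''\) by Definition~\ref{def:instr}, so \(\deinstr(\MNet'') = \Net'''\) follows from Proposition~\ref{lem:strip-instr}. The matching unmonitored path \(\ppath''\) from \(\Net\) to \(\Net'''\) is then just \(\strip{\mpath \pathseq \mpath'}\), obtained via Definition~\ref{def:stripping-path}.

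To build \(\mpath'\) I would repeatedly process the fronts of the monitor queues in \(\MNet'\). An outgoing service message at the front is dispatched by the monitor-output rule together with a network-communication rule, moving it into the recipient's monitor queue as an incoming message; an incoming service message at the front is delivered to the service's input queue via the monitor-internal input rule; probes at the fronts are cleared with the corresponding probe-forwarding or internal-probe rules. Each of these actions is either monitor-internal or a pure communication, so \(\mpath'\) is monitor-flushing by construction. For termination, I would use a well-founded measure \(\mu(\MNet) = 2\,n_{\text{out}}(\MNet) + n_{\text{in}}(\MNet)\) counting service messages in all monitor queues, weighting outgoing messages more heavily since dispatching one creates at most one new incoming message elsewhere; the probes that \(\handle\) may prepend are not counted, and \(\mu\) strictly decreases at each flushing step until no service messages remain.

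The core invariant, proved by induction on the length of the combined path, is the converse of Lemma~\ref{lem:net-transp-compl:stronger}: if \(\instr(\Net) \trans{\mpath_0} \MNet_0\) with \(\MNet_0\)'s monitor queues containing only probes, then \(\MNet_0 = \instr_0(\Net_0)\) for some instrumentation \(\instr_0\) and network \(\Net_0\) reachable from \(\Net\) via the path \(\strip{\mpath_0}\). This holds because each monitored communication \(\netcomm{\Name_0}{\Name_1}{\Chan}\) mirrors a network communication from Figure~\ref{fig:semantics-net}, while monitor-internal transitions (probe exchanges, \(\handle\) applications, and queue reshuffling) leave the underlying unmonitored network inert.

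The main obstacle I expect is the termination argument for flushing, since \(\handle\) is an arbitrary total function that may prepend fresh probes to the front of a queue whenever it processes a service message; the construction must interleave probe-clearing with service-message processing in a way that the chosen measure captures and steadily decreases, without allowing \(\handle\) to indefinitely obstruct progress. A secondary concern is verifying that the quiescent \(\MNet''\) actually satisfies Definition~\ref{def:instr} --- which amounts to checking the termination condition --- and that \(\deinstr(\MNet'')\) therefore reconstructs \(\Net'''\) cleanly, with the filtered subsequences \(\filterI{\MQue}\) and \(\filterO{\MQue}\) of Definition~\ref{def:deinstr} both empty at every monitor.
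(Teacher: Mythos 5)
Your proposal matches the paper's proof: the paper factors the theorem through a stronger lemma (Lemma~\ref{lem:net-transp-sound:stronger}) asserting exactly what you construct --- a monitor-flushing path \(\mpath'\) from \(\MNet'\) to some \(\instr'(\Net''')\) with \(\Net \trans{\strip{(\mpath \pathseq \mpath')}} \Net'''\) --- and then concludes via Proposition~\ref{lem:strip-instr}. The additional detail you supply (the weighted termination measure for flushing and the invariant relating probe-only monitor queues to instrumented states) is consistent with what the mechanisation must establish, and you correctly flag the finiteness of probe bursts prepended by \(\handle\) as the delicate point.
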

\begin{proof}\renewcommand{\qedsymbol}{\coqed}
  We prove the statement as a consequence of the following stronger result:

  \medskip \hfill \begin{minipage}{0.8\linewidth}
    \begin{lemma}
      \label{lem:net-transp-sound:stronger}
      For any instrumentation \(\instr\) and network \(\Net\), for any monitored network \(\MNet'\) and monitored path \(\mpath\), \(\instr(\Net) \trans{\mpath} \MNet'\) implies \(\exists \Net''', \mpath', \instr'\) such that \(\mpath'\) is a flushing path, \(\MNet' \trans{\mpath'} \instr'(\Net''')\), and \(\Net \trans{\strip{(\mpath \pathseq \mpath')}} \Net'''\).
    \end{lemma}
  \end{minipage}
  \hfill \medskip 

  \noindent Then, from \Cref{lem:net-transp-sound:stronger} we obtain
  \Cref{lem:net-transp-sound} by 
  taking \(\Net'''\) and \(\MNet'' = \instr'(\Net''')\) (which implies
  \(\Net''' = \deinstr(\MNet'')\), by \Cref{lem:strip-instr}), constructing \(\mpath'\) as a flushing path from \(\MNet'\) to \(\MNet'' = \instr'(\Net''')\), and finally constructing \(\ppath'' = \strip{(\mpath \pathseq \mpath')}\)
  (i.e., stripping monitor actions by \Cref{def:stripping-path}).
\end{proof}

\note{There is more commented-out material here, but it does not seem necessary for the results above and in the next section. Double-check!}
 \section{A Distributed Black-Box Monitoring Algorithm for Deadlock Detection}\label{sec:algorithm}

In \Cref{def:deadlock-algo} below we introduce a deadlock detection
algorithm usable in our generic distributed black-box outline monitors
(\Cref{def:syntax-mon,def:mon-network,def:instr}), by suitably instantiating
their monitor state \(\MState\), the probes \(\Probe\), and the monitoring function \(\handle\).
The basic idea of our algorithm is inspired by~\cite{ChandyMisraHaas83,MitchellMerrit84}
\change{change:edge-chase}{Mention edge chasing earlier}{which estimate the dependency graph of (dead)locked processes via \emph{edge chasing}~\cite{MOSS1985,DBLP:journals/tse/SinhaN85}}: in a network \(\Net\), the monitor of an SRPC service named \(\Name\) aims at inferring whether
\(\Name\) is deadlocked by producing, forwarding, and receiving probes to/from other monitors. If a
monitor \(\MState\) receives a probe that it had \changeOurs{}{style}{itself produced and sent},
then (under some conditions) there is a dependency loop between locked services;
\changeOurs{}{clarity}{consequently, \(\MState\) can decidedly report} a deadlock.
The intuition behind this approach is supported by \Cref{lem:deadlock-cycle} below.

\begin{proposition}[Equivalence between deadlocks and lock-on cycles]
  \label{lem:deadlock-cycle}
  A service named \(\Name\) is deadlocked in \(\Net\) (by \Cref{def:deadlock})
  if and only if \(\Name\) is transitively locked on \changeOurs{}{ditto}{some} service \(\Name'\) that
  is transitively locked on itself (by \Cref{def:lock}).
\end{proposition}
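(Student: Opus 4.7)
The proof plan is to reduce both directions to a structural analysis of the \emph{lock-on graph}, i.e., the directed graph whose nodes are service names and whose edges $\Name \mapsto \Name'$ record that $\Net(\Name)$ is locked on $\Name'$. A key preliminary observation is that by clause~\ref{item:def:lock:proc-locked} of Def.~of lock, whenever $\Net(\Name)$ is locked, its process is in a unique state $\srpcl{\NameE_c}{\Name_s}$, so the name $\Name_s$ on which it is locked is uniquely determined; hence the lock-on graph is actually functional at each locked name, and we may speak of a partial function $f(\Name) = \Name_s$.

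For the forward direction ($\Rightarrow$), assume $\Name$ is deadlocked, so by Def.~of deadlock it belongs to a non-empty deadlocked set $\mathcal{D}$. The defining property of $\mathcal{D}$ guarantees that $f$ restricts to a total function $f : \mathcal{D} \to \mathcal{D}$. Iterating $f$ from $\Name$ yields an infinite sequence $\Name, f(\Name), f^2(\Name),\ldots$ entirely within $\mathcal{D}$. Assuming $\mathcal{D}$ is finite (which follows from networks being functions on a finite name set in the mechanisation), by pigeonhole there exist $i < j$ with $f^i(\Name) = f^j(\Name)$. Taking $\Name' := f^i(\Name)$ witnesses that $\Name$ is transitively locked on $\Name'$ (in $i$ steps) and that $\Name'$ is transitively locked on itself (in $j-i \geq 1$ steps).

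For the backward direction ($\Leftarrow$), suppose $\Name$ is transitively locked on some $\Name'$ via a chain $\Name = \Name_0 \to \Name_1 \to \cdots \to \Name_m = \Name'$, and $\Name'$ is transitively locked on itself via a cycle $\Name' = \Name'_0 \to \Name'_1 \to \cdots \to \Name'_k = \Name'$ of length $k \geq 1$. Define $\mathcal{D} := \{\Name_0,\ldots,\Name_{m-1}\} \cup \{\Name'_0,\ldots,\Name'_{k-1}\}$. This set is non-empty and contains $\Name$. By construction, every element of $\mathcal{D}$ is locked on its successor along the chain or the cycle, which is again an element of $\mathcal{D}$ (note that $\Name_m = \Name' = \Name'_0 \in \mathcal{D}$, and the cycle closes because $\Name'_k = \Name'_0$). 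Hence $\mathcal{D}$ is a deadlocked set containing $\Name$, so $\Name$ is deadlocked.

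The main technical obstacle is the finiteness assumption needed to extract a cycle in the forward direction: without finiteness of $\mathcal{D}$ (or of the network domain) the pigeonhole step fails, and one would need a more delicate argument, e.g.~a well-founded induction on some measure of the lock-on DAG. I expect the Coq mechanisation to dispatch this either by carrying an explicit finiteness hypothesis on networks, or by exhibiting a minimal deadlocked subset of $\mathcal{D}$. The remaining work in both directions is straightforward bookkeeping to verify that the constructed chains and sets satisfy the clauses of Def.~of lock and Def.~of deadlock respectively.
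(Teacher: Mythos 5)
Your proof is correct, and it is the standard graph-theoretic argument one would expect here: the forward direction extracts a cycle from the lock-on successor function restricted to a finite deadlocked set via pigeonhole, and the converse direction explicitly assembles a deadlocked set from the chain and the cycle, which checks out against the definitions of lock and deadlock. The paper states this result as a proposition with no written proof (deferring entirely to the Coq mechanisation), so there is nothing to contrast with; the one hypothesis you rightly flag --- finiteness of the deadlocked set, inherited from the finiteness of the network's name domain in the mechanisation --- is indeed where the forward direction would fail otherwise, and your fallback of taking the orbit of $\Name$ under the lock-on function as a minimal deadlocked set does not remove that need.
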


\begin{definition}[Deadlock detection monitoring algorithm]
  \label{def:deadlock-algo}\label{def:deadlock-algo:probe}\label{def:deadlock-algo:state}\label{def:deadlock-algo:function}Let a \defn{monitor probe \(\Probe\)} be any value that a monitor can generate uniquely in
  the network \change{change:probe-unique}{Mention how unique probes are generated}{(e.g.~by combining its overseen service name with an incrementing counter)}. We will use the function \(\newprobe\) to instantiate a fresh probe.

  We define a \defn{monitor state \(\MState\)} as a record with the following fields:
  \begin{itemize}
  \item \field{probe} --- the probe associated with the current lock, or
    \(\noprobe\) if the monitor believes that the overseen service is not locked on any other service;
  \item \field{waiting} --- set of service names that the monitor considers locked on the overseen service;
  \item \field{alarm} --- boolean flag set to true when a deadlock is detected, and false otherwise.
  \end{itemize}

  For brevity, we will often write \(\netgetfield{\MNet}{\Name}{f}\) to refer to
  the field \field{f} of the monitor state \(\MState\) of the monitored service
  \(\MNet(\Name)\).
\label{def:active-probe}We say that a \defn{probe \(\Probe\) is active in the monitored network
  \(\MNet\)} if there is \(\Name\) such that \(\getfield{\MNet(\Name)}{probe} = \Probe \neq \noprobe\);
  in this case, we also say that \defn{\(\Name\) is the owner of the probe \(\Probe\)}.

  We define the \defn{monitor algorithm function \(\handle\)} as shown in \Cref{fig:algorithm}.
\end{definition}

\begin{figure}
  \centering
  \begin{minipage}[c]{0.68\linewidth}
    {\scriptsize \begin{align*}
  \handle\!\left(\MState, \recvq{\NameE}\right)
  &= \begin{cases}
    \left(\MState,\, \quenil\right)
    & \hspace{-5em}\text{if } \NameE = \bot \;\hfill \ruletag{\RuleAlgCIn}\\[0.5em]
    \left(\MState \withfield{waiting}{\getfield{\MState}{waiting} \cup \{\NameE\}},\, \quenil\right)
    & \hspace{-5em}\text{if } \getfield{\MState}{probe} = \noprobe \;\hfill \ruletag{\RuleAlgQInNotLocked}\\[0.5em]
    \left(\MState \withfield{waiting}{\getfield{\MState}{waiting} \cup \{\NameE\}},\, \msendr{\Name}{\getfield{\MState}{probe}}\right) & \text{else}
                                                                                                                             \;\hfill\ruletag{\RuleAlgQInLocked}\end{cases}\\
\handle\!\left(\MState, \sendq{\Name}\right)
  &= \left(\MState \withfield{probe}{\newprobe},\, \quenil\right)\;\hfill\ruletag{\RuleAlgQOut}\\
\handle\!\left(\MState, \recvr{\Name}\right)
  &= \left(\MState \withfield{probe}{\bot},\, \quenil\right)\;\hfill\ruletag{\RuleAlgRIn}\\
\handle\!\left(\MState, \sendr{\Name}\right)
  &= \left(\MState \withfield{waiting}{\getfield{\MState}{waiting} \setminus \{\Name\}},\, \quenil\right)\;\hfill\ruletag{\RuleAlgROut} \\
\handle\!\left(\MState, \mrecvr{\Name}{\Probe}\right)
  &=
    \begin{cases}
      \left(\MState \withfield{alarm}{\text{true}},\, \quenil\right) & \text{if } \getfield{\MState}{probe} = \Probe
                                                   \;\hfill\ruletag{\RuleAlgProbeInAlarm}\\[0.5em]
      \left(\MState, \quenil\right) & \text{if } \getfield{\MState}{probe} = \noprobe \;\hfill\ruletag{\RuleAlgProbeInNotLocked}\\[0.5em]
      \left(\MState,\, \left[\msendr{\Name'}{\Probe} \keyword{ for } \Name' \in \getfield{\MState}{waiting}\right]\right)
                                                 & \text{otherwise}\;\hfill\ruletag{\RuleAlgProbeInPropagate}
    \end{cases}\\
\handle\!\left(\MState, \sendc{\Name}\right)
  &= \left(\MState,\, \quenil\right)\;\hfill\ruletag{\RuleAlgCOut} \\
\end{align*}
 }
    \hspace{-27mm}
    \vspace{-3em}
    \captionof{figure}{The algorithm function \(\handle\) for our deadlock detection monitoring algorithm (\Cref{def:deadlock-algo}).}\label{fig:algorithm}
  \end{minipage}\hfill
  \begin{minipage}[c]{0.30\linewidth}
    \centering
    \scalebox{0.78}{\hspace{-5pt}\begin{tikzpicture}[->,>=stealth',auto,node distance=2.5cm, semithick,nam/.style={node distance=0.6cm},mon/.style={node distance=2.6cm},que/.style={node distance=0.75cm},ser/.style={node distance=0.75cm},every label/.style={align=left},every node/.style={align=center}]

\node (start) {};
  \node (init) at([shift={(-33pt,-5em)}]start) {};
  \node[mon] (m1) [right of=start, blue, node distance = 1em] {\(\strut\MState_1\)};
  \node[que] (q1) [left of=m1, blue] {\(\strut{\mquedl \MQue_1 \mquedsepl}\)};
  \node[nam] (n1) [left of=q1] {\(\strut{\Name_1}\!:\)};
  \node[ser] (s1) [right of=m1] {\(\strut{\mquedsepr \Ser_1 \mquedr}\)};

  \node[mon] (m3) at([shift={(1.3cm,1.7cm)}]m1) [blue] {\(\strut\MState_3\)};
  \node[que] (q3) [left of=m3, blue] {\(\strut{\mquedl \MQue_3 \mquedsepl}\)};
  \node[nam] (n3) [left of=q3] {\(\strut{\Name_3}\!:\)};
  \node[ser] (s3) [right of=m3] {\(\strut{\mquedsepr \Ser_3 \mquedr}\)};

  \node[mon] (m2) at([shift={(2.6cm,-0.5cm)}]m1) [blue] {\(\strut{\MState_2}\)};
  \node[que] (q2) [left of=m2, blue] {\(\strut{\mquedl \MQue_2 \mquedsepl}\)};
  \node[nam] (n2) [left of=q2] {\(\strut{\Name_2}\!:\)};
  \node[ser] (s2) [right of=m2]  {\(\strut{\mquedsepr \Ser_2 \mquedr}\)};

\path (init) edge[out=90,in=230,below right] node {1} (q1)
  (q1) edge[out=60, in=120, dotted] node[above] {\footnotesize{1.1}} (s1)
  (s1) edge[out=240, in=300, dotted] node[below] {\footnotesize{1.2}} (q1)
  (q1) edge[out=270, in=270] node[below] {2} (q2)
(q2) edge[out=60, in=120, dotted] node[above] {\footnotesize{2.1}} (s2)
  (s2) edge[out=240, in=300, dotted] node[below] {\footnotesize{2.2}} (q2)
  (n2) edge[out=90, in=270] node[near start, right=2pt] {3} (q3)
(q3) edge[out=60, in=120, dotted] node[above] {\footnotesize{3.1}} (s3)
  (s3) edge[out=240, in=300, dotted] node[below] {\footnotesize{3.2}} (q3)
  (n3) edge[out=190, in=90] node[above left] {4} (q1)
(m1) edge[out=120, in=210, red, dashed] node {\textbf{5}} (q3)
  (m3) edge[bend left, red, dashed] node {\textbf{6}} (q2)
  (m2) edge[bend left, red, dashed] node {\textbf{7}} (q1)
  ;
\end{tikzpicture}

 \hspace{-6pt}}
\vspace{-2mm}\captionof{figure}{Steps of a deadlock detection from \Cref{ex:deadlock-algo}.
      Numbers indicate the order of events; black solid arrows represent
      queries, red dashed arrows represent probes, dotted arrows represent
      monitors forwarding/intercepting queries to/from the services they oversee.}\label{fig:example-detection}\end{minipage}
\end{figure}

Recall that, by the monitored service semantics (\cref{fig:semantics-mon},
\Cref{\RuleMonTauIn,\RuleMonTauMonIn,\RuleMonOut}), the monitor algorithm
function \(\handle\) is \changeOurs{}{style}{tasked with processing} every incoming/outgoing service
message and every incoming probe at the front of the monitor queue. 
\changeOurs{}{ditto}{Following this,} \(\handle\) returns an updated monitor state and a (possibly empty) sequence of
outgoing probes towards other monitors; such outgoing probes are added at the
front of the monitor queue, and then sent to their recipients (by
\Cref{\RuleMonMonOut} in \Cref{fig:semantics-mon}). By \Cref{fig:algorithm}, our
algorithm behaves as follows, with \(\MState\) being the monitor state, and
\(\Ser\) the service overseen by the monitor:

\begin{itemize}
\item If \(\handle\) sees an incoming cast \(\recvc\), then it ignores it by \cref{\RuleAlgCIn}.
\item If \(\handle\) sees an incoming query \(\recvq{\Name}\), then
  \cref{\RuleAlgQInNotLocked,\RuleAlgQInLocked} add \(\Name\) to the
  \field{waiting} set of \(\MState\). Moreover, if
  \(\getfield{\MState}{probe}\) is not \(\noprobe\) (i.e., \(\Ser\) appears
  locked on some service), then \cref{\RuleAlgQInNotLocked} sends
  \(\getfield{\MState}{probe}\) to \(\Name\)'s monitor.
\item If \(\handle\) sees an outgoing query \(\sendq{\Name}\), then
  \cref{\RuleAlgQOut} sets \(\getfield{\MState}{probe}\) to a new probe
  \(\Probe\) (meaning that \(\Ser\) appears locked on some service); note that
  the new probe \(\Probe\) is \emph{fresh}, i.e., globally unique in the
  monitored network.
\item If \(\handle\) sees an incoming response \(\recvr{\Name}\), then
  \cref{\RuleAlgRIn} sets \(\getfield{\MState}{probe}\) to
  \(\noprobe\) (meaning that \(\Ser\) appears now unlocked).
\item If \(\handle\) sees an outgoing response \(\sendr{\Name}\), then
  \cref{\RuleAlgROut} removes the recipient \(\Name\) from the
  \field{waiting} set of \(\MState\).
\item If \(\handle\) sees an incoming probe \(\mrecvr{\Name}{\Probe}\), then:
  \begin{itemize}[leftmargin=*]
  \item if \(\Probe\) was previously produced by \(\MState\) (because
    \(\getfield{\MState}{probe} = \Probe\), hence \(\MState\) owns the probe),
    then \cref{\RuleAlgProbeInAlarm} reports a deadlock by setting
    \(\getfield{\MState}{alarm}\) to true;
  \item otherwise, if \(\Ser\) does not seem locked (since
    \(\getfield{\MState}{probe} = \noprobe\)), then
    \cref{\RuleAlgProbeInNotLocked} ignores the probe;
  \item otherwise (i.e., \changeCameraReady{Clarify}{when \(\Ser\) appears locked
      (\(\getfield{\MState}{probe} \neq \noprobe\)), but \(\Ser\) does not own the probe because
      \(\getfield{\MState}{probe} \neq \Probe\)}), \cref{\RuleAlgProbeInPropagate} forwards the
    probe \(\Probe\) to every \(\Name'\) in the \field{waiting} set of \(\MState\).
  \end{itemize}
\item If \(\handle\) sees an outgoing cast \(\sendc{\Name}\), then \(\handle\) ignores it by \cref{\RuleAlgCOut}.
\end{itemize}

\begin{example}[Deadlock detection algorithm in action]
  \label{ex:deadlock-algo}
  \Cref{fig:example-detection} depicts the detection of a deadlock involving
  three monitored services. In order to serve an initial query (arrow number 1),
  the services named \(\Name_1\), \(\Name_2\) and \(\Name_3\) end up querying
  each other (arrows 2, 3, and 4) forming a locked-on cycle. 

  \begin{itemize}
  \item When \(\Name_1\) sends a query to \(\Name_2\) (arrow number 2), the algorithm \(\handle\)
    updates the state \(\MState_1\) setting \(\getfield{\MState_1}{probe} = \Probe \neq \noprobe\)
    (by \cref{\RuleAlgQOut} in \Cref{fig:algorithm}, meaning the monitor believes \(\Ser_1\) is
    locked).
  \item When \(\Name_2\) and \(\Name_3\) receive their queries (arrows 2 and 3), \(\handle\) updates their
    \field{waiting} sets to \(\{\Name_1\}\) and \(\{\Name_2\}\), respectively (by
    \cref{\RuleAlgQInNotLocked}).
  \item When \(\Name_2\) and \(\Name_3\) send their queries (arrows 3 and 4), \(\handle\) sets their
    respective \field{probe}s to values different from \(\noprobe\) (\cref{\RuleAlgQOut}; the
    monitors believe \(\Ser_2\) and \(\Ser_3\) are locked).
  \item After \(\handle\) running in \(\Name_1\) sees an incoming query \(\recvq{\Name_3}\) (arrow
    number 4), it sends \(\getfield{\MState_1}{probe} = \Probe \neq \noprobe\) to the monitor of
    \(\Name_3\) (arrow number 5, by \cref{\RuleAlgQInLocked}).
  \item Then, the monitors of \(\Name_3\) and \(\Name_2\) forward \(\Probe\) to the services in their
    \field{waiting} sets (arrows 6 and 7, by \cref{\RuleAlgProbeInPropagate}).
  \item Finally, the probe \(\Probe\) is received by the monitor of \(\Name_1\), which still has
    \(\getfield{\MState_1}{probe} = \Probe\). Therefore, the function \(\handle\) reports a deadlock (by \cref{\RuleAlgProbeInAlarm}).
  \end{itemize}
\end{example}

 \section{Proving the Preciseness of Our Distributed Deadlock Detection Monitors}\label{sec:proof}

At this stage, \Cref{lem:net-transp-compl,lem:net-transp-sound} ensure that any monitor
instrumentation \(\instr\) is transparent for any network \(\Net\), thus achieving
\Cref{crit:transparency} of monitoring correctness. We now also prove that, if \(\instr\) uses our deadlock detection algorithm
(\Cref{def:deadlock-algo}) and \(\Net\) consists of SRPC services (\Cref{def:srpc-service}), then
\(\instr\) performs sound and complete deadlock detection for \(\Net\), thus achieving
\Cref{crit:detection-preciseness} of monitoring correctness. We present this result in \Cref{sec:proof:detect-precise}
(\Cref{thm:deadlock-detection-preciseness}); to obtain it, we identify and prove
detailed invariants of SRPC networks (\Cref{sec:proof:invariants-srpc}) and monitor states
w.r.t.~the surrounding network
(\Cref{sec:proof:invariants-mon-sound} and \Cref{sec:proof:invariants-mon-compl}).
\change{change:proof:invariants-hard}{Highlight challenge in finding invariants}{Identifying, formalising, and proving such invariants was a major
  challenge of this work, and this is the stage where potential and subtle counterexamples
  to correct deadlock monitoring may be found (\Cref{sec:related:edge} discusses one such cases, for a different monitoring algorithm).}

\subsection{\changeNoMargin{Well-Formedness as an Invariant of SRPC Networks}}
\label{sec:proof:invariants-srpc}

\change{change:proof:wf-srpc-title}{Revise section title for clarity}{}

\change{change:proof:wf-overview}{Move paragraph earlier and clarify}{In this section we formalise a series of \emph{well-formedness} properties for
  SRPC services and networks
  (\Cref{def:well-formed-srpc-serv,def:well-formed-srpc-net} below). Such properties ensure the absence of issues such as: service \(\Name\) sending a response to service \(\Name'\),
  although \(\Name'\) had not previously sent a query to \(\Name\). Crucially, in \Cref{lem:well-formedness-persistent} below we prove that such
  properties are preserved throughout any execution from the moment any
  \emph{arbitrary} SRPC network is initially deployed.
To this end, }in \Cref{def:initial-net}, we formalise a network  \changeOurs{}{style}{in an
initial state (i.e., just deployed)}, and a corresponding monitor instrumentation that is not (yet)
reporting any deadlock while assuming that all services are unlocked.

\begin{definition}[Initial network and instrumentation]
  \label{def:initial-net}\label{def:initial-instr}We say that \defn{\(\Net\) is an initial network} when:

  \begin{enumerate}
\item the input and output queues of each service are empty;
  \item no service is locked on any other service;
  \item \label{item:init-state-busy} for any service in the \(\srpcw{\NameE}\) state, \(\NameE = \nameext\).
  \end{enumerate}

  The \defn{initial deadlock detection monitor instrumentation \(\instr\) for
  \(\Net\)} assigns a monitor to each service in \(\Net\) using the deadlock
  detection algorithm \(\handle\) in \Cref{def:deadlock-algo}, with equal
  monitor states \(\MState\) such that \(\getfield{\MState}{probe} = \noprobe\),
  \(\getfield{\MState}{waiting} = \emptyset\), and \(\getfield{\MState}{alarm}\)
  false.\end{definition}

In the item~\ref{item:init-state-busy} of \Cref{def:initial-net}, we allow services to be initialised
as \(\srpcw{\nameext}\) to model client applications that engage with the network. Such services can
immediately send queries and casts to other services, and thus initiate traffic in the network.
\change{change:proof:client-clarify}{Slight rewording for clarity}{To formalise the SRPC well-formedness invariant we also need the notion}
of \emph{client} (\Cref{def:client}),
i.e., a service that is being served (or is about to be served) by another
service.

\begin{definition}[Client]
  \label{def:client}We say \defn{\(\Name\) is a client of SRPC service \(\Ser\)} \!\(\vphantom{x} = \pqued{\IQue}{\Proc}{\OQue}\) if \emph{either}:
  \begin{itemize}
  \item \(\IQue \trans{\qpopq{\Name}}\) --- i.e., there is an incoming query from \(\Name\); or
  \item \(\issrpcw{\Proc}{\Name}\) or \(\issrpcl{\Proc}{\Name}{\_}\) --- i.e., the
    process is already handling a query of \(\Name\); or
  \item \(\OQue \trans{\qpopr{\Name}}\) --- i.e., there is an outgoing response to \(\Name\).
  \end{itemize}
\end{definition}

By \Cref{def:client}, if service \(\Name'\) is ``in between'' receiving a query from \(\Name\) and sending a response
to \(\Name\), then \(\Name\) is a client of \(\Name'\).
\change{change:proof:wf-intro}{Add intuition of SRPC well-formedness}{We now have all the ingredients to formalise the SRPC well-formedness of services and networks, as per \Cref{def:well-formed-srpc-serv,def:well-formed-srpc-net} below. 
  \noindent Intuitively, SRPC well-formedness means that:

  \begin{itemize}
  \item At all times, every service has at most one query being remotely handled;
  \item Each response is always associated to a past query;
  \item SRPC states remain aligned to the queries and responses in the network. For example, if a query
    is queued somewhere, then its sender is in SRPC \(\srpclname\) state.
  \end{itemize}
}

\begin{definition}[SRPC well-formedness]
  \label{def:well-formed-srpc-serv}Consider the service \(\Ser = \pqued{\IQue}{\Proc}{\OQue}\). We say that
  \defn{\(\Ser\) is well-formed as an SRPC client} if \emph{all} the following
  conditions hold:

  \begin{enumerate}
  \item in \(\IQue\) there is at most one response;
  \item if \(\IQue\) contains a response from \(\Name_s\), then \(\issrpcl{\Proc}{\Name_c}{\Name_s}\);
  \item if \(\OQue\) contains a query to \(\Name_s\), then \(\issrpcl{\Proc}{\Name_c}{\Name_s}\);
  \item if \(\OQue = \OQue_l \queappsym \queelq{\Name_s} \queappsym \OQue_r\), then \(\OQue_r = \quenil\);
  \item if there is a response in \(\IQue\), then there is no query in \(\OQue\);
  \item if there is a query in \(\OQue\), then there is no response in \(\IQue\);
  \item if \(\issrpcl{\Proc}{\Name_c}{\Name_s}\), then \(\OQue\) is either empty or contains \(\queelq{\Name_s}\).
  \end{enumerate}

  \noindent We say that \defn{\(\Ser\) is well-formed as an SRPC server} if \emph{all} the following conditions
  hold:

  \begin{enumerate}
  \item in \(\IQue\) there is at most one query per service name;
  \item in \(\OQue\) there is at most one response per service name;
  \item\label{it:wf:excl} for every service \(\Name_c\), the following are mutually exclusive: (A) in \(\IQue\) there is a query from \(\Name_c\); or (B) \(\issrpcw{\Proc}{\Name}\); or (C) \(\issrpcl{\Proc}{\Name}{\_}\); or (D) in \(\OQue\) there is a response to \(\Name_c\).
  \end{enumerate}

 \noindent We say \defn{\(\Ser\) is a well-formed SRPC service} if \(\Ser\) is
  well-formed as both SRPC client and server.
\end{definition}

By \Cref{def:well-formed-srpc-serv} an SRPC service \(\Ser\) is well-formed as a client
if its SRPC state reflects the contents of its queues and \emph{vice versa}, and
if \(\Ser\) is free from anomalies such as receiving a response before a query is sent.
Well-formedness as a server guarantees that \(\Ser\) has not received a
query from a service \(\Name\) without replying to all \(\Name\)'s
previous queries. \Cref{def:well-formed-srpc-net} extends well-formedness
to networks.

\begin{definition}
  \label{def:well-formed-srpc-net}\label{def:well-formed-srpc-net-mon}We say that \defn{\(\Net\) is a well-formed SRPC network} when:
  \begin{enumerate}
  \item for every \(\Name \in \dom(\Net)\), \(\Net(\Name)\) is a well-formed SRPC service (by
    \Cref{def:well-formed-srpc-serv});
  \item \(\Name\) is a client of \(\Name'\) (by \Cref{def:client}) if and only if \(\Name\) is locked on \(\Name'\).
  \end{enumerate}
  By extension, we say that \defn{\(\MNet\) is a well-formed monitored SRPC
  network} if \(\deinstr(\MNet)\) (from \Cref{def:deinstr}) is a well-formed SRPC
  network.
\end{definition}

\change{change:proof:wf-result-example}{Added for better flow}{We conclude this section by proving that \emph{any} initial SRPC network (with
  or without monitors) is well-formed, and preserves its well-formedness at
  run-time (\Cref{lem:well-formedness-persistent}). We also show an example of
  how well-formedness is preserved throughout an execution
  (\Cref{ex:srpc-well-formed-exec}).}

\begin{theorem}[Well-formedness is an invariant]
  \label{lem:well-formedness-persistent}If \(\Net\) is an initial SRPC network and \(\Net \trans{\ppath} \Net'\),
  then \(\Net'\) is a well-formed SRPC network; moreover, for any instrumentation \(\instr\), if \(\instr(\Net) \trans{\mpath'}
  \MNet''\), then \(\MNet''\) is a well-formed monitored SRPC
  network.
\end{theorem}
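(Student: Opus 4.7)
The plan is to prove \Cref{lem:well-formedness-persistent} by induction on the path length, handling first the unmonitored case and then lifting the result to the monitored case via the deinstrumentation relation.

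For the base case of the unmonitored part, I would show that an initial SRPC network $\Net$ (\Cref{def:initial-net}) satisfies \Cref{def:well-formed-srpc-serv,def:well-formed-srpc-net} vacuously: all input and output queues are empty, and every process is either in state $\srpcf$ or $\srpcw{\nameext}$ (so there is neither a non-$\nameext$ client nor a lock), which makes every well-formedness clause trivially true; the client--lock biconditional holds because no service has a client and no service is locked.

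For the inductive step $\Net' \trans{\NAct} \Net''$ under the hypothesis that $\Net'$ is well-formed, I would case-split on $\NAct$ following \Cref{fig:semantics-net}: the internal service action $\nettau{\Name}{\ProcAct}$ would refine via \Cref{fig:semantics-service,fig:semantics-abstract-proc} into sub-cases for each concrete SRPC transition (query dequeue, response enqueue, $\tausend$ of a query/response/cast, $\taurecv$ of a query/response, internal $\tau$ during $\srpcw{\NameE_c}$, etc.); the regular communication $\netcomm{\Name_0}{\Name_1}{\Chan}$ with $\Chan \neq \ctag$ would combine the sender's \cref{\RuleSerOut} with the recipient's \cref{\RuleSerIn}; and the cast case (\cref{\RuleNetCast}), delivered as $\recvc$, would be analogous but simpler. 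For each sub-case I would verify every numbered clause of \Cref{def:well-formed-srpc-serv} on the affected services, using the IH to rule out configurations forbidden by \Cref{fig:semantics-abstract-proc} (e.g.~a process can only emit a response while in state $\srpcw{\Name_c}$ with $\Name_c \neq \nameext$, so clauses coupling queues with SRPC states remain consistent), and re-establish the client--lock biconditional of \Cref{def:well-formed-srpc-net} by checking that a client is gained or released exactly when a lock is created or released on the corresponding service.

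For the monitored part, I would proceed by induction on $\mpath'$, arguing that each monitored transition in \Cref{fig:semantics-mon,fig:semantics-mnet} either (a) leaves $\deinstr(\MNet)$ unchanged --- which covers all probe-only traffic (\cref{\RuleMNetCom,\RuleMonMonIn,\RuleMonMonOut,\RuleMonTauMonIn}) as well as any reshuffling of service messages between the monitor queue and the underlying service queues that preserves the multiset of pending messages --- or (b) mirrors a legitimate unmonitored transition of $\deinstr(\MNet)$, in which case the unmonitored invariant applies. Since \Cref{lem:strip-instr} gives $\deinstr(\instr(\Net)) = \Net$, which is initial and hence well-formed by the unmonitored part, the induction yields well-formedness of $\deinstr(\MNet'')$. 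The main obstacle I anticipate is the sheer number of cases in the unmonitored inductive step together with the subtlety of keeping all clauses of \Cref{def:well-formed-srpc-serv} and the client--lock biconditional simultaneously in sync after each transition; I also expect to have to show that the well-formedness clauses depend only on multiset-like properties (existence, counts, exclusivity of certain messages) and are therefore insensitive to the queue reorderings introduced by monitor bookkeeping steps.
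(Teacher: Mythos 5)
Your proposal matches the paper's proof: the paper likewise establishes well-formedness of the initial network directly from \Cref{def:initial-net,def:well-formed-srpc-net}, proves single-step preservation by case analysis on the network action, and then closes both parts of the statement by induction on the structure of \(\ppath\) and \(\mpath'\), relating monitored transitions back to unmonitored ones via \Cref{def:deinstr}. Your additional observations (the vacuous base case, and that monitored steps either preserve \(\deinstr(\MNet)\) or mirror an unmonitored transition) are exactly the details the paper's proof sketch leaves implicit.
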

\begin{proof}\renewcommand{\qedsymbol}{\coqed}
The fact that any initial SRPC network \(\Net\) is well-formed follows by
  \Cref{def:initial-net,def:well-formed-srpc-net}. 
We prove that if any \(\Net''\) is SRPC and well-formed and \(\Net'' \trans{\Act} \Net'''\),
  then \(\Net'''\) is also SRPC and well-formed, by case analysis of \(\Act\). Finally, we prove the two parts of the statement by induction on \changeOurs{}{clarity}{the structure of} \(\ppath\) and \(\mpath'\), and
  using \Cref{def:deinstr}.
\end{proof}

\begin{figure}
  \centering {\footnotesize \begin{tabular}{|l|c|c|c|}
  \hline
  Name & Initial SRPC state & Intermediate state & Once deadlocked \\
  \hline
  \(E_1\) & \(\pqued{\quenil}{\srpcw{\nameext}}{\quenil}\)
       & \(\pqued {\quenil}{\srpcl{\nameext}{P_2}}{\queelq{P_2}}\)
                            & \(\pqued {\queelq{P_1}}{\srpcl{\nameext}{P_2}}{\quenil}\) \\
  \hline
  \(E_2\) & \(\pqued{\quenil}{\srpcw{\nameext}}{\quenil}\)
       & \(\pqued {\quenil}{\srpcl{\nameext}{P_3}}{\quenil}\)
                            & \(\pqued {\queelq{P_2}}{\srpcl{\nameext}{P_3}}{\quenil}\) \\
  \hline
  \(E_3\) & \(\pqued{\quenil}{\srpcw{\nameext}}{\quenil}\)
       & \(\pqued {\quenil}{\srpcl{\nameext}{P_1}}{\quenil}\)
                            & \(\pqued {\queelq{P_3}}{\srpcl{\nameext}{P_1}}{\quenil}\) \\
  \hline
  \(P_1\) & \(\pqued{\quenil}{\srpcf}{\quenil}\)
       & \(\pqued {\queelq{E_3}}{\srpcf}{\quenil}\)
                            & \(\pqued {\quenil}{\srpcl{E_3}{E_1}}{\quenil}\) \\
  \hline
  \(P_2\) & \(\pqued{\quenil}{\srpcf}{\quenil}\)
       & \(\pqued {\quenil}{\srpcf}{\quenil}\)
                            & \(\pqued {\quenil}{\srpcl{E_1}{E_2}}{\quenil}\) \\
  \hline
  \(P_3\) & \(\pqued{\quenil}{\srpcf}{\quenil}\)
       & \(\pqued {\quenil}{\srpcw{E_2}}{\quenil}\)
                            & \(\pqued {\quenil}{\srpcl{E_2}{E_3}}{\quenil}\) \\
  \hline

\end{tabular}
 }
\vspace{-2mm}\caption{Some execution states of the network from \Cref{fig:deadlock-envelope}.
   All these configurations are well-formed.}\label{fig:well-formed-envelope}\end{figure}

\begin{example}[Execution of a well-formed SRPC network]
  \label{ex:srpc-well-formed-exec}\Cref{fig:well-formed-envelope} lists the queues and SRPC states of each
  service of the network in \Cref{fig:deadlock-envelope}, in three stages of
  execution: initial, intermediate, and deadlocked. All states are well-formed,
  by \Cref{def:well-formed-srpc-serv,def:well-formed-srpc-net}. Initially,
  \(E_1\), \(E_2\) and \(E_3\) are \(\srpcw{\nameext}\), while \(P_1\), \(P_2\)
  and \(P_3\) are \(\srpcf\) with all queues empty, trivially fulfilling all
  requirements for well-formedness. The second column shows an intermediate
  state after queries marked in the picture with 1,2,3,5,6 were sent: \(E_1\) is
  about to send a query to \(P_2\), meanwhile \(E_2\) and \(E_3\) are already
  locked on \(P_3\) and \(P_1\) respectively. \(P_1\) has just received the
  query from \(E_3\) into its input queue, while \(P_3\) is already processing
  the query from \(E_2\). Finally in the third column, all queries have been
  delivered and a deadlock occurred. Note that parameters of SRPC states always
  correspond to either a query in one of the queues or a parameter of the SRPC
  state of another service.
\end{example}

\subsection{Monitor Knowledge Invariant for Complete Deadlock Detection}
\label{sec:proof:invariants-mon-compl}

To ensure that our algorithm detects all deadlocks, the monitor states must not underestimate when the services they oversee are
locked. 
This is crucial when the monitors receive probes, as ignoring an active probe may cause a missing deadlock report (i.e., a false
negative). \Cref{def:kic} below formalises when monitors have complete
(and possibly over-approximated) knowledge of the network locks, via a
series of implications from the state of the network to the states of the
monitors, ensuring that each deadlock is \changeOursNoMargin{either} already reported or about to be
reported (by \Cref{def:alarm-condition}).

\begin{definition}[Complete lock knowledge]
  \label{def:kic}
  We say that \defn{monitors have complete lock knowledge in \(\MNet\)} if,
  for any \(\Name,\Name' \in \dom(\MNet)\),
  \changeOurs{}{clarity}{\emph{all} of the following conditions hold:}
  \begin{enumerate}
  \item if \(\Name\) is locked in \(\deinstr(\MNet)\), then \(\netgetfield{\MNet}{\Name}{probe} \neq \noprobe\);
  \item if \(\Name\) is locked on \(\Name'\) in \(\deinstr(\MNet)\), then either \(\Name \in \netgetfield{\MNet}{\Name'}{waiting}\) or
    there is an incoming query from \(\Name\) in the monitor queue of \(\Name'\);
  \item if \(\Name\) is transitively locked on itself in \(\deinstr(\MNet)\), then \(\Name\) is
    transitively locked on a service with the \emph{alarm condition}
    (\Cref{def:alarm-condition} below).
  \end{enumerate}
\end{definition}

\begin{definition}[Alarm condition]
  \label{def:alarm-condition}We say that \defn{\(\Name\) has the alarm condition in \(\MNet\)} when \emph{all} of the following conditions hold:
  \begin{enumerate}[label=(\alph*)]
  \item \(\Name\) is transitively locked on some \(\Name_1\) in \(\deinstr(\MNet)\) (which may be \(\Name\) itself);
  \item \(\Name_1\) is locked on some \(\Name_2\) that is also locked in \(\deinstr(\MNet)\) (where
    \(\Name_1,\Name_2\) may coincide);\item\label{item:ac:probes-def} monitors of \(\Name_2\) and all services transitively locked on
    \(\Name_2\) in \(\deinstr(\MNet)\) have their \(\field{probe} \neq \noprobe\);
  \item\label{item:ac:mon-que} \emph{any} of the following conditions holds:
    \begin{enumerate}[label=(\arabic*)]
\item\label{item:ac:alarm} \(\netgetfield{\MNet}{\Name}{alarm} = \text{true}\);
    \item\label{item:ac:fin} The monitor queue of \(\Name\) contains an incoming active probe of
      \(\Name\);
    \item\label{item:ac:send} The monitor queue of \(\Name_2\) contains an outgoing active probe of
      \(\Name\) towards \(\Name_1\);
    \item\label{item:ac:wait} The monitor queue of \(\Name_2\) contains an incoming active probe of
      \(\Name\), and \(\Name_1~\in~\netgetfield{\MNet}{\Name_2}{waiting}\);
    \item\label{item:ac:query} The monitor queue of \(\Name_2\) contains an incoming query from \(\Name_1\)
      followed by an incoming active probe of \(\Name\);
    \item\label{item:ac:close} The monitor queue of \(\Name\) contains an incoming query from \(\Name_1\).
    \end{enumerate}
  \end{enumerate}
\end{definition}

Intuitively, \Cref{def:alarm-condition} holds in a monitored network when an alarm is already reported
(item \ref{item:ac:alarm}) or is inevitable, because the monitors of locked services are ready to
propagate probes (by item~\ref{item:ac:probes-def} and \cref{\RuleAlgProbeInPropagate,\RuleAlgQInLocked} in
\Cref{fig:algorithm}) and the monitor queues' configuration will allow the propagation (by
items~\ref{item:ac:fin} to \ref{item:ac:close}).
This, together with well-formedness (\Cref{def:well-formed-srpc-net}),
guarantees that the active probe owned by the monitor of a deadlocked \(\Name\)
will remain active and will be propagated by all monitors on its way to
\(\Name\). This is proven in \Cref{thm:alarm-condition} below.

\begin{lemma}[Alarm condition leads to alarm]
  \label{thm:alarm-condition}
  If \(\MNet\) is a well-formed monitored SRPC network and \(\Name\) has the alarm
  condition in \(\MNet\) (by \Cref{def:alarm-condition}), then \(\exists \mpath,\MNet'\) such that \(\mpath\) is a monitor-flushing path (by
  \Cref{def:flushing-path}) and \(\MNet \trans{\mpath} \MNet'\) and
  \(\netgetfield{\MNet'}{\Name}{alarm} = \text{true}\).
\end{lemma}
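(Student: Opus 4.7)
The plan is to argue by well-founded induction on a lexicographic measure $(\mu,\rho)$, where $\mu$ is the length of the transitive lock-on path from $\Name$ to the service whose monitor currently carries the in-flight active probe of $\Name$, and $\rho \in \{1,\ldots,6\}$ ranks the six sub-cases of item~(d) of \Cref{def:alarm-condition} in a way compatible with the per-step reductions described below. By \Cref{lem:deadlock-cycle} and well-formedness (\Cref{def:well-formed-srpc-net}), $\Name$ lies on a finite lock cycle, so $\mu$ is well-defined and bounded. Each monitor-flushing step either sets $\Name$'s alarm flag directly, or yields a new configuration that still satisfies the alarm condition with strictly smaller $(\mu,\rho)$; concatenating the per-step flushing paths then gives the required $\mpath$.

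The per-case reductions go as follows. Case~(1) is immediate with $\mpath = \pathnil$. In case~(2), a single monitor-internal step invokes \cref{\RuleAlgProbeInAlarm}, whose guard $\getfield{\MState}{probe} = \Probe$ is exactly the ``active probe of $\Name$'' hypothesis, setting $\field{alarm}$ to true. Case~(6) takes one step to process the leading incoming query from $\Name_1$: since $\Name$ is locked and hence its $\field{probe}$ is non-$\noprobe$ by item~(c), \cref{\RuleAlgQInLocked} fires and enqueues an outgoing probe of $\Name$ toward $\Name_1$, yielding case~(3) with $\Name_2 := \Name$. Case~(5) processes the leading query (via \cref{\RuleAlgQInLocked} or \cref{\RuleAlgQInNotLocked}), adding $\Name_1$ to $\Name_2$'s \field{waiting} set while leaving the following active probe in place, thereby reducing to case~(4). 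Case~(4) takes one monitor-internal step firing \cref{\RuleAlgProbeInPropagate} (applicable because $\Name_2$'s $\field{probe}$ is non-$\noprobe$ by~(c) and differs from the incoming probe by freshness of $\newprobe$), which enqueues an outgoing probe of $\Name$ toward every element of $\Name_2$'s \field{waiting} set, including $\Name_1$; this is case~(3). In case~(3), one $\mnetcomm{\Name_2}{\Name_1}{\Probe}$ step delivers the probe into $\Name_1$'s monitor queue: if $\Name_1 = \Name$ we reduce to case~(2) and finish in one further step; otherwise, well-formedness identifies a unique predecessor $\Name_0$ of $\Name_1$ on the lock cycle, and the new configuration satisfies the alarm condition with $(\Name_0, \Name_1)$ in the roles of $(\Name_1, \Name_2)$---landing in case~(4) if $\Name_0$ is already in $\Name_1$'s \field{waiting} set, and in case~(5) otherwise (the outstanding query from $\Name_0$ to $\Name_1$ must still precede the freshly delivered probe in $\Name_1$'s monitor queue). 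Either way $\mu$ strictly decreases.

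The main obstacle is showing that the global preconditions (a)--(c) of \Cref{def:alarm-condition} survive each per-step reduction. Well-formedness and the lock cycle referenced by~(a)--(b) are preserved by \Cref{lem:well-formedness-persistent}. Item~(c) is preserved because the only algorithm clauses that modify $\field{probe}$ are \cref{\RuleAlgQOut} and \cref{\RuleAlgRIn}, and neither can fire on a monitor-flushing step that only handles probe traffic or queued queries at already-locked services. The uniqueness of the active probe of $\Name$ as it hops around the cycle relies on the globally-fresh values produced by $\newprobe$ in \Cref{def:deadlock-algo}, which prevents any other monitor from coincidentally owning the same probe and wrongly firing \cref{\RuleAlgProbeInAlarm} at the wrong location. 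With these invariants, the well-founded induction closes and yields the required flushing path leading to the alarm at $\Name$.
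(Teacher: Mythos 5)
Your overall strategy---well-founded induction on the distance the active probe still has to travel backwards along the lock chain, with the disjuncts of item~(d) of \Cref{def:alarm-condition} acting as the successive stages of a single hop---is the same construction the paper uses: the flushing path it builds consists of queue-flushing actions followed by successive probe forwardings, measured lexicographically by (hops remaining, flushing steps to the next forward). Your handling of cases (1), (2), (6), (5)$\to$(4), (4)$\to$(3) and the delivery step in (3) is right, and your reason why item~(c) survives flushing (a locked service can have neither an outgoing query nor a relevant incoming response in its monitor queue, so neither \cref{\RuleAlgQOut} nor \cref{\RuleAlgRIn} fires) is the correct one.

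There is, however, a genuine gap at the re-establishment step after each hop. When the probe lands in $\Name_1$'s monitor queue and $\Name_1 \neq \Name$, you assert that either $\Name_0 \in \netgetfield{\MNet}{\Name_1}{waiting}$ (case (4)) or the query from $\Name_0$ still sits in $\Name_1$'s monitor queue ahead of the probe (case (5)). But the lemma's hypotheses do not give you this dichotomy. Well-formedness (\Cref{def:well-formed-srpc-net}) constrains only $\deinstr(\MNet)$: it tells you $\Name_0$ is a \emph{client} of $\Name_1$, i.e., the query is in the monitor queue, \emph{or} already in the service's own input queue, \emph{or} already consumed (the process is in $\srpcl{\Name_0}{\_}$). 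In the latter two cases the query has left the monitor queue, and nothing in the stated hypotheses forces $\Name_0$ into the \field{waiting} set---a well-formed $\MNet$ whose monitors have empty \field{waiting} sets despite their services being mid-request satisfies every hypothesis, yet the probe dies at $\Name_1$ because \cref{\RuleAlgProbeInPropagate} forwards it to an empty set. The missing ingredient is precisely item~2 of complete lock knowledge (\Cref{def:kic}): if $\Name_0$ is locked on $\Name_1$ then either $\Name_0 \in \netgetfield{\MNet}{\Name_1}{waiting}$ or an incoming query from $\Name_0$ is in $\Name_1$'s monitor queue. This is a separate, history-dependent invariant established by \Cref{lem:kic-persistent} and supplied alongside the present lemma in the proof of \Cref{thm:deadlock-detection-preciseness}; your induction must carry it (and its preservation under the flushing steps) as an explicit hypothesis rather than deriving the dichotomy from well-formedness alone.
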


\begin{theorem}[Complete lock knowledge is an invariant]
  \label{lem:kic-persistent}If \(\Net\) is an initial SRPC network, \(\instr\) is the initial deadlock
  detection monitor instrumentation for \(\Net\), and \(\instr(\Net)
  \trans{\mpath} \MNet'\), then the monitors in \(\MNet'\) have complete lock
  knowledge.
\end{theorem}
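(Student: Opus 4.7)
The plan is to proceed by induction on the length of the path $\mpath$, using \Cref{lem:well-formedness-persistent} to ensure that every intermediate $\MNet'$ is a well-formed monitored SRPC network. For the base case $\MNet' = \instr(\Net)$, by \Cref{def:initial-net} no service is locked on any other, so the antecedents of all three clauses of \Cref{def:kic} are vacuously false and the property holds.

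For the inductive step, assuming complete lock knowledge holds at $\MNet'$ and $\MNet' \trans{\MAct} \MNet''$, I would do case analysis on $\MAct$, tracking how each action affects (i) the set of service locks in $\deinstr(\MNet')$, (ii) the $\field{probe}$ and $\field{waiting}$ fields updated by $\handle$ per \Cref{fig:algorithm}, and (iii) whichever witness of the alarm condition was previously satisfied. The majority of cases --- internal service steps, message enqueueings that neither create nor break a lock, and probe propagations that stay within a non-cyclic chain --- preserve the three clauses trivially. The essential events are: an outgoing query via \cref{\RuleAlgQOut}, which installs a fresh $\field{probe}$ at the newly-locked service; an incoming response via \cref{\RuleAlgRIn}, which clears $\field{probe}$ precisely when the service unlocks (here SRPC well-formedness is crucial, because it prevents spurious responses that would reset $\field{probe}$ while a lock persists); and an incoming query at an already-locked service via \cref{\RuleAlgQInLocked}, which updates $\field{waiting}$ and sends the lock's probe back to the caller.

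The decisive subcase is clause (3) of \Cref{def:kic}: when an outgoing query closes a dependency cycle. By the inductive hypothesis (clause (1)), each service in the would-be cycle already carries $\field{probe} \neq \noprobe$, so the cycle-closing \cref{\RuleAlgQOut} together with the eventual delivery of the query to its recipient establishes prerequisites (a)--(c) of \Cref{def:alarm-condition} and witness~\ref{item:ac:close} for the closer. For all subsequent transitions within an already-closed cycle, I would need to show that whichever of the six witnesses \ref{item:ac:alarm}--\ref{item:ac:close} held before is either preserved or transformed into another valid witness: for example, processing an incoming probe via \cref{\RuleAlgProbeInPropagate} morphs witness \ref{item:ac:fin} into one of \ref{item:ac:wait} or \ref{item:ac:send} depending on the monitor state; a subsequent \cref{\RuleMonMonOut} rotates \ref{item:ac:send} into an incoming-probe witness (\ref{item:ac:fin} or \ref{item:ac:query}) at the next service around the cycle; and forwarding a queued query via \cref{\RuleMonTauIn} turns witness \ref{item:ac:query} into witness \ref{item:ac:wait}.

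The main obstacle will be managing this ``dance'' among the six alarm-condition witnesses: each monitor transition may simultaneously consume an in-flight probe at one node, re-emit it at another, and update $\field{probe}$ or $\field{waiting}$. Verifying that at every step some witness \ref{item:ac:alarm}--\ref{item:ac:close} remains active along the deadlock cycle demands a careful case split over the monitor queue shapes admitted by well-formedness, and relies centrally on the fact that the $\field{probe}$ of a probe-owner cannot be overwritten while the owner is still transitively locked --- a consequence of well-formedness together with \cref{\RuleAlgRIn,\RuleAlgQOut}.
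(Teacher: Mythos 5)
Your proposal follows essentially the same route as the paper's proof: establish that the initial instrumented network has complete lock knowledge (vacuously, by \Cref{def:initial-net}), prove a single-step preservation lemma by case analysis on the action under the standing assumption of SRPC well-formedness, and close by induction on \(\mpath\) using \Cref{lem:well-formedness-persistent}. The additional detail you give about how the six alarm-condition witnesses transform into one another is precisely the kind of bookkeeping the paper's (very terse) proof sketch leaves implicit, and correctly identifies where the real work lies.
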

\begin{proof}\renewcommand{\qedsymbol}{\coqed}
We first easily prove that for any initial SRPC network \(\Net\), the monitors
  in \(\instr(\Net)\) (with \(\instr\) being initial for \(\Net)\) have complete
  lock knowledge, by \Cref{def:initial-net,def:kic}. Then, we prove an intermediate result: if \(\MNet''\) is SRPC-well-formed and
  has complete lock knowledge and \(\MNet'' \trans{\MAct} \MNet'''\), then
  \(\MNet'''\) is also SRPC-well-formed and still has complete lock knowledge. Finally, we prove the main statement by induction on \(\mpath\), using
  \Cref{lem:well-formedness-persistent} and the intermediate result above.\end{proof}

\subsection{Monitor Knowledge Invariants for Sound Deadlock Detection}
\label{sec:proof:invariants-mon-sound}

We now address the dual problem w.r.t.~\Cref{sec:proof:invariants-mon-compl}:
to ensure that our algorithm only detects real deadlocks, the monitor states must not overestimate whether the services they oversee are
locked, thus avoiding false deadlock reports (i.e., false positives). \Cref{def:kis} below formalises when monitors have sound (and
possibly under-approximated) knowledge of the network locks, through a series of
implications from the states of the monitors to the state of the network,
ensuring that each alarm (already raised or about to be raised) corresponds to a
real deadlock. We then prove that \Cref{def:kis} is an invariant (in
\Cref{lem:kis-persistent} below).

\begin{definition}[Sound lock knowledge]
  \label{def:kis}
  We say that \defn{monitors have sound lock knowledge in \(\MNet\)} if,
  for any \(\Name,\Name' \in \dom(\MNet)\),
  \changeOurs{}{clarity}{\emph{all} of the following conditions hold:}

  \begin{enumerate}
  \item if \(\netgetfield{\MNet}{\Name}{probe}\!\neq\!\noprobe\) then \(\Name\) is locked in \(\deinstr(\MNet)\) or its monitor queue has an incoming response;
  \item if \(\Name \in \netgetfield{\MNet}{\Name'}{waiting}\), then \(\Name\) is locked on \(\Name'\) in \(\deinstr(\MNet)\);
  \item if \(\Name'\) has an outgoing probe to \(\Name\) in its monitor queue, then \(\Name\) is locked on \(\Name'\) in \(\deinstr(\MNet)\);
  \item if \(\Name\) has the probe \(\netgetfield{\MNet}{\Name'}{probe}\) (either outgoing or incoming) in its monitor queue, then \(\Name\) is transitively
    locked on \(\Name'\) in \(\deinstr(\MNet)\);
  \item\label{item:kis:alarm} if \(\netgetfield{\MNet}{\Name}{alarm} = \text{true}\), then \(\Name\) is transitively locked on itself in \(\deinstr(\MNet)\).
  \end{enumerate}
\end{definition}

\begin{theorem}[Sound lock knowledge is an invariant]
  \label{lem:kis-persistent}If \(\Net\) is an initial SRPC network, \(\instr\) is the initial deadlock
  detection monitor instrumentation for \(\Net\), and \(\instr(\Net)
  \trans{\mpath} \MNet'\), then the monitors in \(\MNet'\) have sound lock
  knowledge.
\end{theorem}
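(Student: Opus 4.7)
The plan is to mirror the structure of the proof of \Cref{lem:kic-persistent}, combining a trivial base case with a single-step preservation lemma and then inducting over \(\mpath\).

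For the base case, observe that if \(\Net\) is an initial SRPC network and \(\instr\) is the corresponding initial deadlock detection instrumentation (\Cref{def:initial-net,def:initial-instr}), then every monitor state \(\MState\) has \(\getfield{\MState}{probe} = \noprobe\), \(\getfield{\MState}{waiting} = \emptyset\), and \(\getfield{\MState}{alarm} = \text{false}\); moreover each monitor queue is empty. Each of the five clauses of \Cref{def:kis} then holds vacuously (the antecedents all fail), so \(\instr(\Net)\) has sound lock knowledge.

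The crux is the following one-step preservation lemma: if \(\MNet''\) is a well-formed monitored SRPC network (in the sense of \Cref{def:well-formed-srpc-net-mon}) with sound lock knowledge and \(\MNet'' \trans{\MAct} \MNet'''\), then \(\MNet'''\) also has sound lock knowledge. I would prove this by case analysis on \(\MAct\) following \Cref{fig:semantics-mnet,fig:semantics-mon}, and then by the sub-case determined by which rule of \(\handle\) (from \Cref{fig:algorithm}) fires. The interesting cases are those that modify a field appearing in \Cref{def:kis}: \cref{\RuleAlgQOut} sets \field{probe} to a freshly generated value, and I must show the corresponding service becomes locked; here freshness guarantees that no spurious occurrences of the new probe exist in any monitor queue, so clauses about probe propagation are preserved vacuously. \Cref{\RuleAlgRIn} clears \field{probe} upon an incoming response, and clauses holding before the step remain true because the overseen service was indeed locked on the sender of the response. \Cref{\RuleAlgQInLocked,\RuleAlgQInNotLocked} update \field{waiting}; the presence of an incoming query from \(\Name\) together with SRPC well-formedness ensures \(\Name\) is a client (by \Cref{def:client}) and hence locked on the overseen service. \Cref{\RuleAlgROut} removes a name from \field{waiting}, which only weakens the antecedent of clause~2. \Cref{\RuleAlgProbeInPropagate} emits outgoing probes to every \(\Name' \in \field{waiting}\); by the induction hypothesis each such \(\Name'\) is locked on the overseen service, and by clause~4 applied to the incoming probe the overseen service is itself transitively locked on the probe's owner, so clause~3 is preserved. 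The critical case is \cref{\RuleAlgProbeInAlarm}: the alarm is raised only when the incoming probe matches \(\getfield{\MState}{probe}\), and clause~4 combined with clause~1 (applied to the probe's owner, which is the overseen service itself) yields that the service is transitively locked on itself, establishing clause~\ref{item:kis:alarm}. Cases corresponding to purely structural transitions (\cref{\RuleMonIn,\RuleMonMonIn,\RuleMonMonOut,\RuleMonTauOut,\RuleMonTau}), to service actions not handled by \(\handle\), and to the network-level rules \cref{\RuleMNetSCom,\RuleMNetSCast,\RuleMNetCom,\RuleMNetSTau,\RuleMNetTau} are handled by carefully tracking how the sets of active probes and pending queries evolve, together with \Cref{lem:well-formedness-persistent} to discharge side conditions about at-most-one queries and responses per peer.

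With the one-step lemma in hand, the theorem follows by induction on \(\mpath\): at each step, \Cref{lem:well-formedness-persistent} supplies the well-formedness hypothesis needed by the preservation lemma, and the inductive hypothesis supplies sound lock knowledge at the predecessor state. The main obstacle is the bookkeeping required in the preservation lemma, in particular ensuring that clause~4 (probes in monitor queues trace back to transitively locked services) is preserved across every queue-manipulating transition, since this is the clause that underpins the soundness of the alarm in \cref{\RuleAlgProbeInAlarm}; this is where freshness of newly generated probes and the uniqueness of active probe owners (an easy consequence of well-formedness and the fact that probes are only created by \cref{\RuleAlgQOut}) play an essential role.
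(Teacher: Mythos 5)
Your proposal matches the paper's proof: the paper likewise establishes the base case from the initial instrumentation, proves a one-step preservation lemma for well-formed monitored SRPC networks with sound lock knowledge, and concludes by induction on \(\mpath\) using \Cref{lem:well-formedness-persistent} (the paper states this only as ``same structure as the proof of \Cref{lem:kic-persistent}, replacing complete with sound lock knowledge''). Your additional case-by-case detail on the \(\handle\) rules is consistent with what that preservation lemma requires.
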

\begin{proof}\renewcommand{\qedsymbol}{\coqed}
  The proof has the same structure of the proof of \Cref{lem:kic-persistent},
  but replacing complete lock knowledge (\Cref{def:kic}) with sound lock
  knowledge (\Cref{def:kis}).
\end{proof}

\subsection{Deadlock Detection Preciseness Result}
\label{sec:proof:detect-precise}

We now finally prove that in any initial SRPC network, our
black-box monitors (\Cref{def:instr,def:mon-network}) with our
deadlock detection algorithm (\Cref{def:deadlock-algo,def:initial-instr}) perform
sound and complete deadlock detection, achieving the correctness
\Cref{crit:detection-preciseness} (in addition to \Cref{crit:transparency},
proven in \Cref{sec:monitoring:transparency}).

\begin{theorem}[Deadlock detection preciseness]
  \label{thm:deadlock-detection-preciseness}
  Let \(\Net\) be an initial SRPC network, and \(\instr\) be the initial deadlock detection instrumentation for \(\Net\).
  Then, for any monitored network \(\MNet'\) and monitored path \(\mpath\) such that \(\instr(\Net) \trans{\mpath} \MNet'\):
  \begin{description}
  \item[Completeness:]If there is a deadlock in $\deinstr(\MNet')$, then there are $\MNet'', \mpath'$ such that
      \(\MNet' \trans{\mpath'} \MNet''\) and $\MNet''$ reports a deadlock.\item[Soundness:]If $\MNet'$ reports a deadlock, then there is a deadlock in $\deinstr(\MNet')$.
  \end{description}
\end{theorem}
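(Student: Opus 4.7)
The plan is to obtain both clauses as short consequences of the invariants already established in \Cref{lem:kis-persistent,lem:kic-persistent}, together with \Cref{lem:deadlock-cycle} and \Cref{thm:alarm-condition}. Since $\Net$ is an initial SRPC network and $\instr$ is the initial deadlock detection instrumentation for $\Net$, the hypotheses of \Cref{lem:well-formedness-persistent,lem:kic-persistent,lem:kis-persistent} are satisfied along every prefix of $\instr(\Net) \trans{\mpath} \MNet'$. Hence $\MNet'$ is a well-formed monitored SRPC network whose monitors simultaneously enjoy sound and complete lock knowledge; this is the only setup that both clauses rely on.

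For \emph{soundness}, I will suppose $\MNet'$ reports a deadlock, i.e., there exists some $\Name \in \dom(\MNet')$ with $\netgetfield{\MNet'}{\Name}{alarm} = \text{true}$. Clause~\ref{item:kis:alarm} of sound lock knowledge (\Cref{def:kis}, preserved by \Cref{lem:kis-persistent}) immediately gives that $\Name$ is transitively locked on itself in $\deinstr(\MNet')$. The ``if'' direction of \Cref{lem:deadlock-cycle} then converts this into the existence of a deadlocked set containing $\Name$ in $\deinstr(\MNet')$, concluding this case with no further reasoning about monitor queues or probes.

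For \emph{completeness}, I will assume $\deinstr(\MNet')$ has a deadlock and fix any $\Name$ in a deadlocked set. By \Cref{lem:deadlock-cycle}, $\Name$ is transitively locked in $\deinstr(\MNet')$ on some $\Name'$ that is in turn transitively locked on itself. Applying the third clause of complete lock knowledge (\Cref{def:kic}, preserved by \Cref{lem:kic-persistent}) at $\Name'$ yields a service $\Name_a$ (transitively locked on $\Name'$, hence inside the same deadlock) that satisfies the alarm condition of \Cref{def:alarm-condition}. Finally, \Cref{thm:alarm-condition} --- whose SRPC well-formedness hypothesis is supplied by \Cref{lem:well-formedness-persistent} --- delivers a monitor-flushing path $\mpath'$ and a state $\MNet''$ such that $\MNet' \trans{\mpath'} \MNet''$ and $\netgetfield{\MNet''}{\Name_a}{alarm} = \text{true}$, so $\MNet''$ reports a deadlock as required.

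The main obstacle is not in assembling this final proof but in having the right invariants already in place: every nontrivial fact about probe propagation, queue ordering, and monitor/service state consistency has been pushed into \Cref{lem:kic-persistent,lem:kis-persistent,thm:alarm-condition}. Given these, the one remaining subtlety to double-check is that the completeness clause demands a \emph{monitor-flushing} continuation $\mpath'$ rather than arbitrary service progress --- which is precisely what \Cref{thm:alarm-condition} produces, and which is essential since a truly deadlocked subnetwork may offer no further service reductions on which to rely.
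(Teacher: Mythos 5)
Your proposal is correct and follows essentially the same route as the paper's proof: establish well-formedness and sound/complete lock knowledge of \(\MNet'\) via \Cref{lem:well-formedness-persistent,lem:kic-persistent,lem:kis-persistent}, derive soundness from item~\ref{item:kis:alarm} of \Cref{def:kis} plus \Cref{lem:deadlock-cycle}, and derive completeness from the third clause of \Cref{def:kic} together with \Cref{lem:deadlock-cycle} and the monitor-flushing path produced by \Cref{thm:alarm-condition}. Your write-up simply makes explicit the intermediate steps that the paper's terse proof leaves implicit.
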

\begin{proof}\renewcommand{\qedsymbol}{\coqed}
  Observe that \(\MNet'\) is a well-formed monitored SRPC network (by \Cref{lem:well-formedness-persistent})
  and its monitors have sound and complete lock knowledge (by \Cref{lem:kis-persistent,lem:kic-persistent}).
  We prove the ``completeness'' result using \Cref{def:kic}, \Cref{lem:deadlock-cycle}, and
  \Cref{thm:alarm-condition}, which constructs \(\mpath'\) as a monitor-flushing
  path going from \(\MNet'\) to a monitored network configuration \(\MNet''\) which reports a
  deadlock. We prove the ``soundness'' result by item~\ref{item:kis:alarm} of \Cref{def:kis}.\end{proof}

Notice that the ``completeness'' of \Cref{crit:detection-preciseness} and
\Cref{thm:deadlock-detection-preciseness} above does not guarantee that every
deadlock \emph{will} be reported: it only ensures that after a deadlock occurs,
\emph{it is always possible} for the monitors to raise an alarm in every
\(\MNet'\) reached by the monitored network; for this to happen, \(\MNet'\) must
follow a monitor-flushing path $\mpath'$ constructed by
\Cref{thm:alarm-condition}. This, in principle, leaves to \(\MNet'\) the
possibility of \emph{never} following that path. However, in \Cref{thm:detect-completeness-upgrade} below we show that,
assuming \emph{fairness of components}~\cite{Glabbeek_2019},
the deadlock \emph{will} be eventually reported.

\begin{theorem}[Eventual deadlock reporting]\label{thm:detect-completeness-upgrade}
Assume that the premises of \Cref{thm:deadlock-detection-preciseness} hold and
  there is a deadlock in $\deinstr(\MNet')$. Also assume that \(\MNet'\) is
  executed with \emph{fairness of components}~\cite{Glabbeek_2019}. Then, the deadlock
  will be eventually reported.
\end{theorem}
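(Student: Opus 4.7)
The plan is to lift the existential reachability of an alarm given by \Cref{thm:deadlock-detection-preciseness} to an inevitability statement, by combining it with the persistence of the deadlock and the fairness-of-components assumption.

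First, I would show that the deadlock in $\deinstr(\MNet')$ persists through every continuation of the execution. Any monitored transition either leaves $\deinstr$ invariant --- for purely monitor-internal actions, which only shuffle messages and probes between a monitor's queue and its overseen service's queues (quantities whose concatenation is preserved by the stripping operation in \Cref{def:deinstr}) --- or it corresponds to a network-level transition of the unmonitored system, under which the deadlock survives by \Cref{lem:deadlock-persistent}. Combining this with \Cref{lem:well-formedness-persistent,lem:kic-persistent,lem:kis-persistent}, every monitored network $\MNet''$ reachable from $\MNet'$ is a well-formed monitored SRPC network whose monitors have sound and complete lock knowledge and whose $\deinstr$ still contains the deadlock. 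By \Cref{def:kic} and \Cref{thm:alarm-condition}, there is consequently a monitor-flushing path $\mpath''$ from $\MNet''$ that reaches a state reporting the deadlock.

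Second, I would close the argument by invoking fairness of components~\cite{Glabbeek_2019}, which guarantees that any action continuously enabled from some point on is eventually performed by its owning component. Each action of $\mpath''$ is either local to a single monitor or consists of a probe exchange between two monitors; none requires any step by a deadlocked service, and each remains enabled as long as the relevant monitor-queue prefix is unaltered. If the fair execution instead performs some other transition first, we reapply the previous paragraph at the new state to obtain a fresh flushing path to an alarm. Defining a measure such as the minimum length of a flushing path from the current state to an alarm-reporting state, I would argue that fairness forces this measure to strictly decrease infinitely often along any fair infinite execution, contradicting well-foundedness of $\mathbb{N}$; hence the alarm must be reported after finitely many transitions.

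The main obstacle will be formalising the interaction between arbitrary external perturbations --- e.g.\ a service outside the deadlocked set injecting fresh queries or casts into the monitor queue of a deadlocked service --- and the existence of a bounded-length flushing path at each reachable state. I expect this will require a small auxiliary invariant showing that each such perturbation introduces only finitely many new monitor-queue elements, and that \Cref{thm:alarm-condition} can be reinvoked at any intermediate state to re-establish a finite flushing path to an alarm. Once this bookkeeping is in place, the van~Glabbeek fairness principle closes the argument in a standard way, since any prefix of enabled monitor actions leading to the alarm is eventually executed modulo finitely many interleaved external actions.
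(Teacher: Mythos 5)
Your overall strategy --- re-establish reachability of an alarm at every state via \Cref{thm:alarm-condition}, then use a well-founded measure plus fairness of components --- is the same as the paper's, and your first paragraph (persistence of the deadlock under $\deinstr$, preservation of well-formedness and of sound/complete lock knowledge) matches the setup. The gap is in the termination argument. Your measure, ``the minimum length of a flushing path from the current state to an alarm-reporting state,'' is never shown to be non-increasing under arbitrary transitions of the monitored network, and your concluding step --- ``fairness forces this measure to strictly decrease infinitely often, contradicting well-foundedness of $\mathbb{N}$'' --- is not a valid inference without monotonicity: a measure over $\mathbb{N}$ can strictly decrease infinitely often along an infinite run and still never reach zero if it also increases infinitely often (e.g.\ when other components enqueue new elements that must be flushed, or when $\handle$ prepends freshly generated outgoing probes to a monitor queue). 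You flag exactly this perturbation problem as ``the main obstacle'' and defer it to ``a small auxiliary invariant,'' but that invariant is where the whole difficulty lives, and bounding the number of new queue elements per perturbation does not by itself restore monotonicity of your distance-to-alarm measure.

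The paper closes this gap by choosing a different, probe-centric measure: a lexicographic pair $(m,k)$ where $m$ counts how many more times the fixed active probe $\Probe$ must be forwarded to reach its owner, and $k$ counts the flushing transitions needed to forward $\Probe$ to the next monitor. It then proves two facts that your proposal lacks: (i) this measure \emph{never increases} under any transition of the network, and (ii) in every reachable state with measure above $(0,0)$ there is a specific monitor (the one holding $\Probe$ in its queue) whose \emph{only} enabled transition strictly decreases the measure. Fairness of components then guarantees that this monitor is eventually scheduled, so the measure descends to $(0,0)$ and the owner of $\Probe$ raises the alarm. If you want to salvage your version, you would need to prove an analogous monotonicity property for your path-length measure, which in effect forces you back to tracking the position of a single persistent probe rather than a generic shortest flushing path.
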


\change{change:strong-compl-proof}{Added proof outline of eventual deadlock reporting}{\begin{proof}\renewcommand{\qedsymbol}{\coqed}
  From the monitor-flushing path constructed by \Cref{thm:alarm-condition} in a deadlocked monitored network (such as $\MNet'$), we define a measure \((m, k)\) where \(m\) is the number of times
  an active probe \(\Probe\) must be forwarded in order to reach its owner
  (i.e., the monitor that created \(\Probe\)), and \(k\) is
  the number of flushing transitions (\Cref{def:flushing-path}) needed to forward \(\Probe\) to the next monitor. We compare measures lexicographically, e.g.~\((1, 9) < (2, 1)\).
We prove that this measure never increases as $\MNet'$ runs; moreover, in every $\MNet''$ reachable from $\MNet'$ 
  there is 
a monitor whose \emph{only} enabled transition decreases the measure
(unless it is already \((0, 0)\)) --- specifically, the monitor with \(\Probe\)
  in its queue reduces the measure by performing a flushing transition. Finally, we prove that if the measure reaches \((0, 0)\), the
  monitor that owns \(\Probe\) raises an alarm.
Assuming fairness of components~\cite{Glabbeek_2019},
  every monitor that reduces the current measure will be eventually scheduled to
  perform its transitions --- hence, the deadlock will be eventually reported. \emph{(Note: this last sentence from ``Assuming\ldots'' is not part of our Coq mechanisation.)}
For more details, see~\iftoggle{techreport}{\Cref{app:proof-completeness-upgraded}}{\cite[Appendix C]{TECHREPORT}}.\end{proof}
}
 \section{\Toolname, a Prototype Monitoring Tool for Distributed Deadlock Detection}\label{sec:tool}

We now present \emph{\toolname}, a proof-of-concept monitoring tool that
realises the distributed black-box instrumentation modelled in \Cref{sec:monitoring} with
our algorithm in \Cref{def:deadlock-algo}. \Toolname instruments monitors in Erlang and Elixir programs based on generic
servers (\texttt{gen\_server}), with minimal user intervention. 
We describe the \toolname usage and implementation in \Cref{sec:tool:implementation},
and evaluate its performance in \Cref{sec:tool:evaluation}.
\Toolname is the companion artifact of this work~\cite{rowicki_2025_16909304} and its latest version is
available at \url{https://github.com/radrow/ddmon}.
We provide additional details about \Toolname in \iftoggle{techreport}{\Cref{app:implementation}}{\cite[Appendix A]{TECHREPORT}}.

\subsection{Usage and Implementation}\label{sec:tool:implementation}

\Toolname acts as a drop-in replacement for the \texttt{gen\_server} behaviour
in the OTP standard library \cite{erlang-doc-gen-server}. The intended use case is that \Toolname can be activated during the testing or
staging of a \texttt{gen\_server}-based application, or even in production
(although the monitoring overhead evaluated in \Cref{sec:tool:evaluation} must
be taken into account). An Elixir programmer can deploy \Toolname monitors in a
\texttt{gen\_server}-based application by defining the following alias, without
changing other code:\footnote{\label{footnote:alternative-instrumentation}To remove the need for the alias, we could instead implement
  \Toolname by extending the system-wide \texttt{gen\_server} behaviour to
  (optionally) include our monitors. Alternatively, \Toolname could perform dynamic code replacement to inject
  monitors in a running system. The approach we chose is more robust, suitable for a prototype,
  and sufficient for our evaluation in \Cref{sec:tool:evaluation}.}

\begin{center}
  \begin{minipage}{0.8\linewidth}
    \begin{lstlisting}[language=Ruby,frame=single,rulecolor=\color{gray},numbers=none,basicstyle=\footnotesize\tt]
alias :ddmon, as: GenServer
\end{lstlisting}
  \end{minipage}
\end{center}
\vspace{-1mm}

The module \toolname offers an API equivalent to \texttt{gen\_server}'s
(\texttt{start}, \texttt{start\_link}, etc.). Behind the scenes, \toolname
launches the original \texttt{gen\_server} as a private side process whose
identity (PID) is never revealed outside --- instead, \toolname returns the
monitor's PID. The result is that the monitor ``wraps'' the original \texttt{gen\_server}
and forwards messages similarly to \Cref{fig:example-mon-service}.
Interacting with the monitor is indistinguishable from interacting with the original service ---
although the monitor is a state machine (based on the \texttt{gen\_statem}
behaviour) that implements our algorithm in \Cref{def:deadlock-algo} and the
message forwarding semantics in \Cref{fig:semantics-mon}.
As in our formal model, monitors exchange probes (using distinct \texttt{cast}s).
In addition to a unique identifier (necessary for lock identification in
\Cref{def:deadlock-algo}), each probe \(\Probe\) also stores the list of
services that propagated \(\Probe\) (by \cref{\RuleAlgProbeInPropagate} in
\Cref{fig:algorithm}): when \Toolname reports a deadlock, that list represents a deadlocked set (\Cref{def:deadlock}).
\emph{(For an example of \Toolname deadlock report, see \iftoggle{techreport}{\Cref{lst:envelope-log-deadlock} on page~\pageref{lst:envelope-log-deadlock}}{\cite[Appendix A.1, Listing 2]{TECHREPORT}}.)}
Notably, \Toolname can be configured to emit probes with a specified delay:
we discuss and evaluate the effect of this feature in \Cref{sec:tool:evaluation}.

\subsection{Evaluation}\label{sec:tool:evaluation}

We now evaluate \Toolname in two ways: we estimate its monitoring overhead
(\Cref{sec:tool:evaluation:overhead}) and the trade-off between such overhead
and the deadlock reporting time (\Cref{sec:tool:evaluation:delay}).

\subsubsection{Monitoring Overhead}
\label{sec:tool:evaluation:overhead}

To evaluate the overhead introduced by \Toolname's distributed monitoring, we
measured the number of messages exchanged in randomly-generated systems of
\texttt{gen\_server}-based services, with sizes ranging from 16 to 11\,000
services.\footnote{To generate such random systems we used a dedicated testing DSL described in
  \iftoggle{techreport}{\Cref{app:implementation:testing-dls}}{\cite[Appendix A.1]{TECHREPORT}}.} Such systems non-deterministically reach deadlocks similar to the one
illustrated in \Cref{fig:deadlock-envelope};
in total, approximately 30\% of the runs resulted in deadlocks. \change{change:synth-bench}{Cited other works using synthetic benchmarks}{We adopt synthetic benchmarks as they have been commonly used for evaluating deadlock detection and monitoring performance, e.g. \cite{DBLP:journals/tse/GligorS80,HSNPBM2013,DBLP:journals/sigmod/YeungHL94,DBLP:conf/fase/AcetoAFI21,DBLP:conf/ecoop/AcetoAFI24}.}

\Cref{fig:benchmark-site} shows the relationship between the number of services
and the number of messages exchanged between services; when monitors are
present, the service-to-service messages include probes. \Cref{fig:benchmark-probes} measures the number of probes only. Finally, \Cref{fig:benchmark-sent} measures \emph{all} messages, including the
internal service-to-monitor and monitor-to-service forwarding (depicted in
\Cref{fig:example-mon-service}).
We measure two strategies for emitting probes:

\begin{enumerate}
  \item \emph{Eager}, i.e., monitors emit probes as soon as
    \cref{\RuleAlgQInLocked} in \Cref{fig:algorithm} is applicable.
  \item \emph{Delayed}, i.e., when the conditions of \cref{\RuleAlgQInLocked}
    hold, the monitor does \emph{not} emit a probe \(\Probe\) immediately.
    Instead, the monitor sets a timeout after which emits \(\Probe\) --- but in
    the meanwhile, if the monitor sees a response that matches
    \cref{\RuleAlgRIn} (which deactivates \(\Probe\)), the monitor cancels the
    emission of \(\Probe\).
\end{enumerate}

\changeCameraReady{Revised remark that delayed probe emission has not been proven correct yet}{
Note that in this work we formally prove the correctness of the ``eager'' probe emission strategy above. The ``delayed'' strategy is an optimisation suggested in
previous work on edge-chasing deadlock detection
algorithms~\cite{ChandyMisraHaas83,DBLP:conf/icpads/KimLS97} and adopted e.g.~in~\cite{LingIEEETransComp2006,DBLP:conf/prdc/IzumiDK10}. We believe that the ``delayed'' strategy does not impact monitoring invariants
    (\Cref{def:kic,def:kis}) and preserves the preciseness of deadlock detection
    (\Cref{thm:deadlock-detection-preciseness})
    --- but we have only validated this claim via testing of \Toolname.}

\Cref{fig:benchmark-site,fig:benchmark-probes} show that our ``delayed'' probe
emission strategy tends to reduce the communication overhead by reducing the
number of emitted probes: the reason is that many probes that would be sent by
the ``eager'' strategy are cancelled by a quick response that unlocks
the recipient before the delay expires. \change{change:overhead}{Clarified that the overhead of monitoring is minimal with delayed probes}{
Therefore, if the delay is sufficiently large, the monitoring overhead in service-to-service
communication is barely noticeable: compare the blue and green lines in \Cref{fig:benchmark-site,fig:benchmark-probes}.} \changeCameraReady{Moved and added footnote}{Moreover, each probe is a very small message.}\footnote{In \Toolname, a probe \(\Probe\) is implemented as a unique integer --- but as mentioned above, \(\Probe\) also carries the list of services that propagated \(\Probe\). This list is used to display the deadlocked set, and it could grow up to the length of locked-on dependencies. If necessary, this list can be safely removed to improve performance, as it is irrelevant to the deadlock detection logic.} \Cref{fig:benchmark-sent} shows how the forwarding of every message between
monitor and service (as depicted in \Cref{fig:example-mon-service}) results in
tripling the number of messages sent in total --- although the actual cost of
each forwarding should be small: \change{change:overhead:in-memory}{Clarify cost of message forwarding between monitor and service}{each monitor performs minimal computations, and if a monitor and its overseen service are deployed on the same Erlang VM,
  then their communication is very efficient (usually purely in-RAM without I/O). }

\newcommand{\FIGbenchscale}{0.33}
\begin{figure*}[tbp]
  \centering
  \begin{subfigure}[t]{\FIGbenchscale\textwidth}
    \centering
    \includegraphics[width=\linewidth]{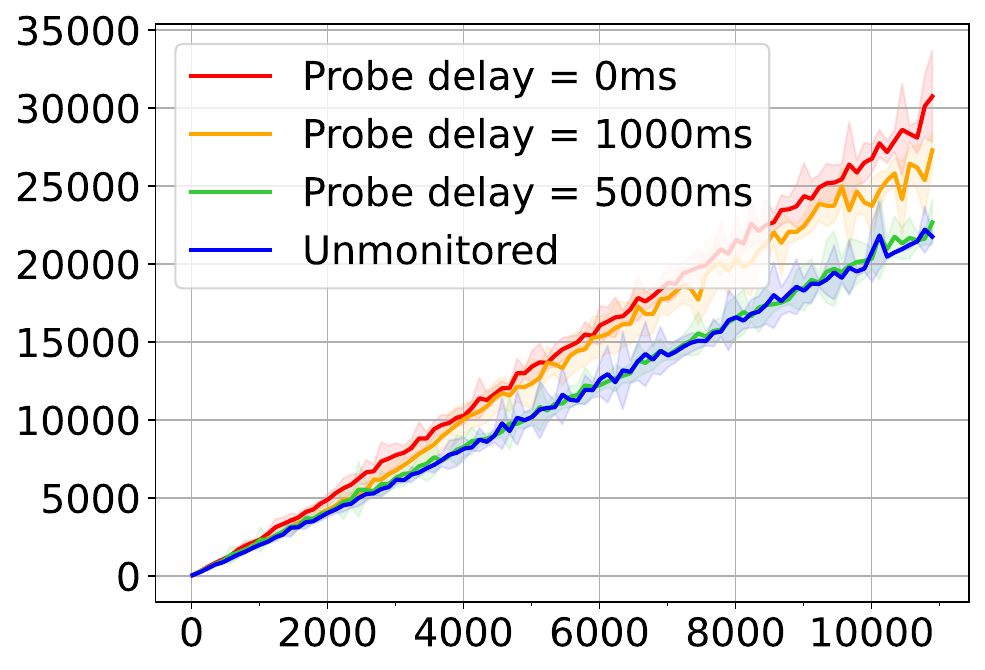}
\caption{Messages between services}\label{fig:benchmark-site}
  \end{subfigure}
  \hfill
  \begin{subfigure}[t]{\FIGbenchscale\textwidth}
    \centering
    \includegraphics[width=\linewidth]{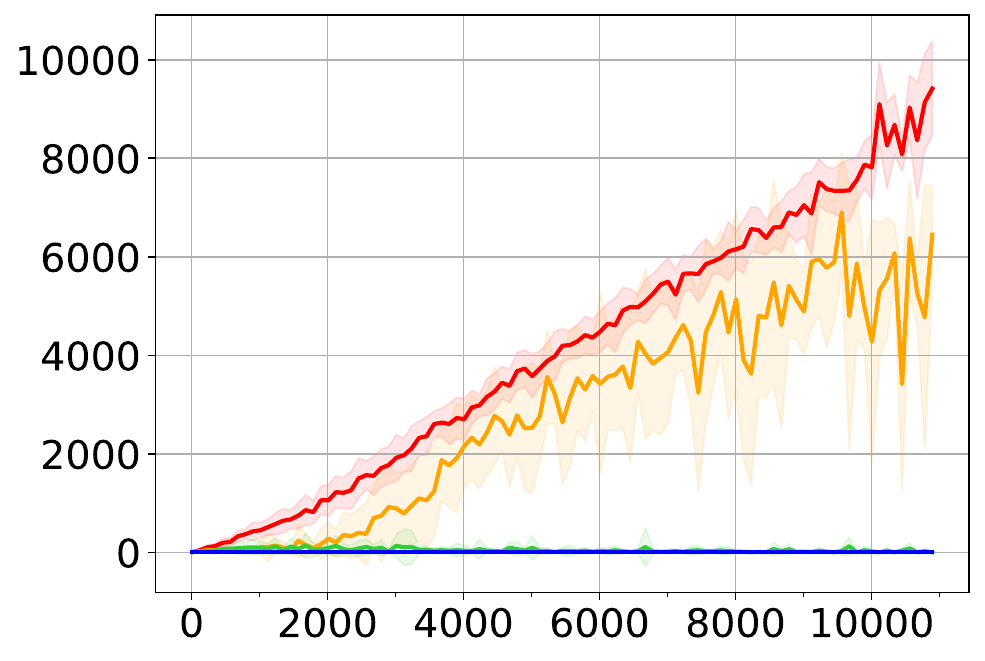}
\caption{Probes only}\label{fig:benchmark-probes}
  \end{subfigure}\hfill
  \begin{subfigure}[t]{\FIGbenchscale\textwidth}
    \centering
    \includegraphics[width=\linewidth]{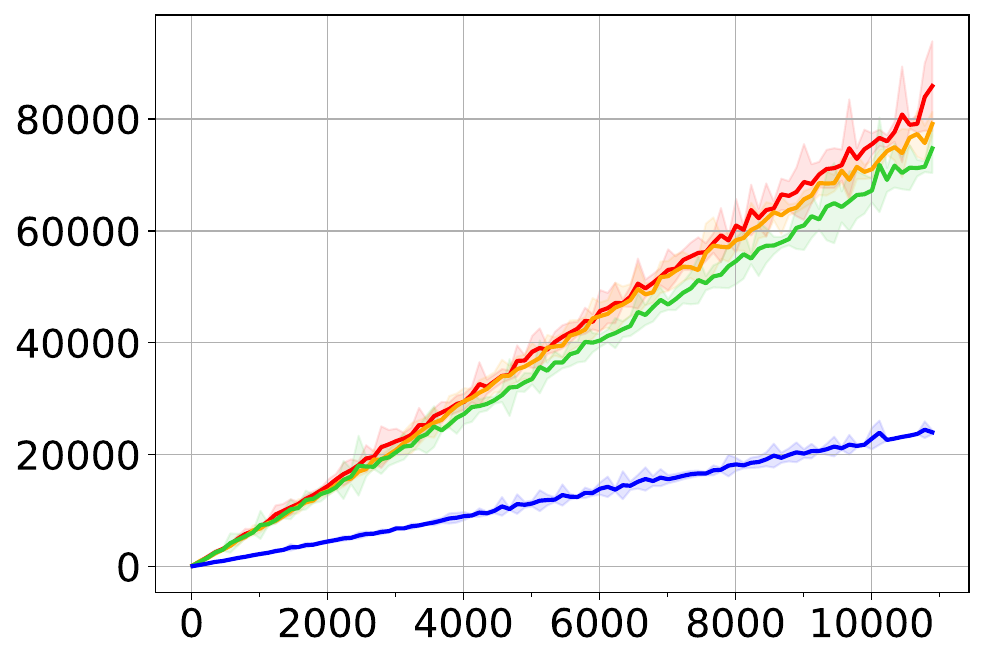}
\caption{All messages}\label{fig:benchmark-sent}
  \end{subfigure}
  \vspace{-2.5mm}\caption{Number of message exchanges (measured on the $y$-axis) vs.~number of
    services in the network ($x$-axis). {\scriptsize(Averages of 20 runs for network sizes < 8500, 10 runs for sizes < 10k, and 5 runs for sizes $\geq$ 10k; standard deviation outlined in the background)}}\label{fig:benchmark}
\end{figure*}

\subsubsection{Monitoring Overhead vs.~Deadlock Reporting Time}
\label{sec:tool:evaluation:delay}

We \changeOursNoMargin{also assess} how delaying the probe emission (via the ``delayed'' strategy
discussed above) impacts the system overhead dynamics, and when deadlocks
are reported.
\Cref{fig:timeseries} compares three executions of monitored systems. In
\Cref{fig:timeseries-delay-0}, the eager emission of probes maintains a
persistent overhead.
In \Cref{fig:timeseries-delay-1000,fig:timeseries-delay-5000}, the probe
emission is delayed by \si{1}{s} and \si{5}{s}: this reduces the probe overhead to around zero
while the system runs without significant locking; when locks last more than the
probe delay, the probe emission increases (this is especially visible in
\Cref{fig:timeseries-delay-1000}). In \Cref{fig:timeseries-delay-5000} a
deadlock occurs after $\sim$\si{3}{s} (noticeable because queries and responses
flatline), but the monitors only raise an alarm at \si{8}{s}.

\todo[nit]{Opportunity to gain space: squeeze plots}

\newcommand{\FIGtimeseriesscale}{0.33}
\begin{figure*}[tbp]
  \centering
  \begin{subfigure}[t]{\FIGtimeseriesscale\textwidth}
    \centering
    \includegraphics[width=\linewidth]{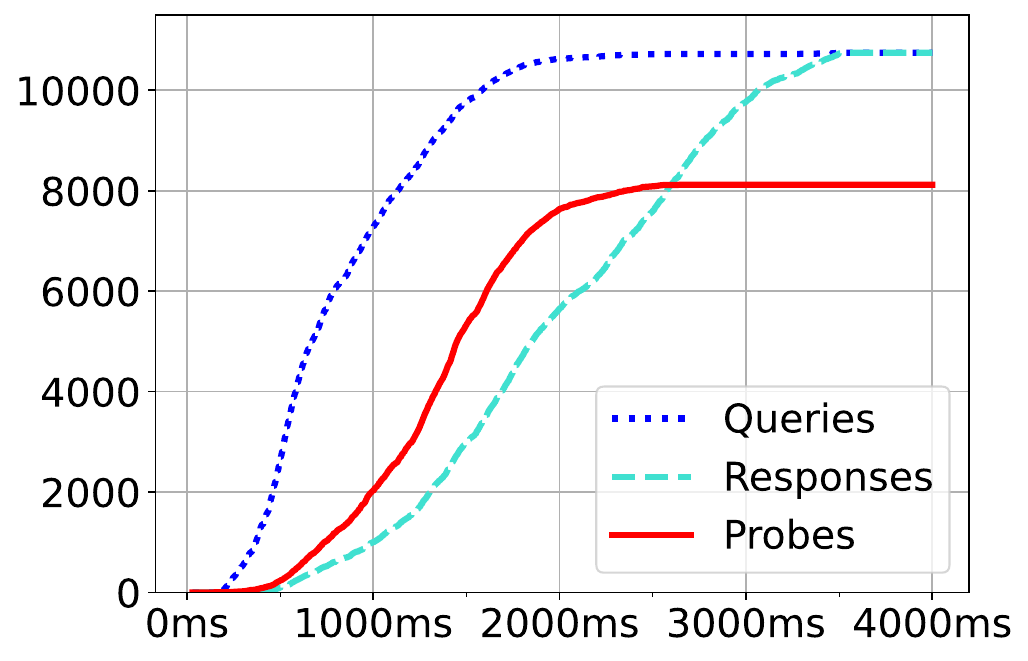}
    \caption{Probes emitted eagerly (red)}\label{fig:timeseries-delay-0}
  \end{subfigure}\hfill
  \begin{subfigure}[t]{\FIGtimeseriesscale\textwidth}
    \centering
    \includegraphics[width=\linewidth]{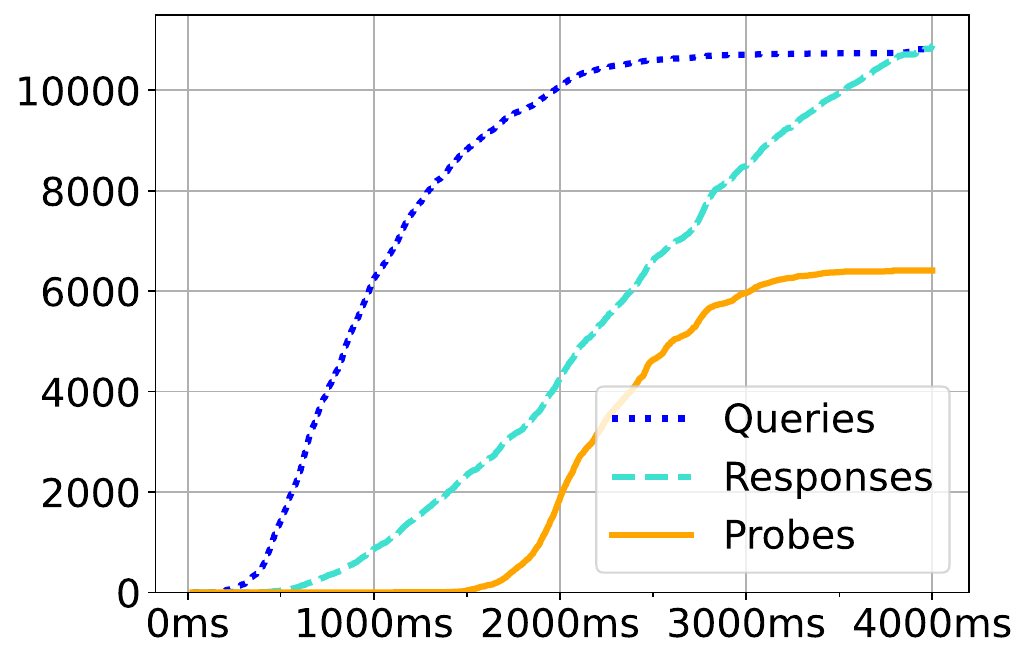}
    \caption{Probe delay = \si{1000}{ms} (orange)}\label{fig:timeseries-delay-1000}
  \end{subfigure}\hfill
  \begin{subfigure}[t]{\FIGtimeseriesscale\textwidth}
    \centering
    \includegraphics[width=\linewidth]{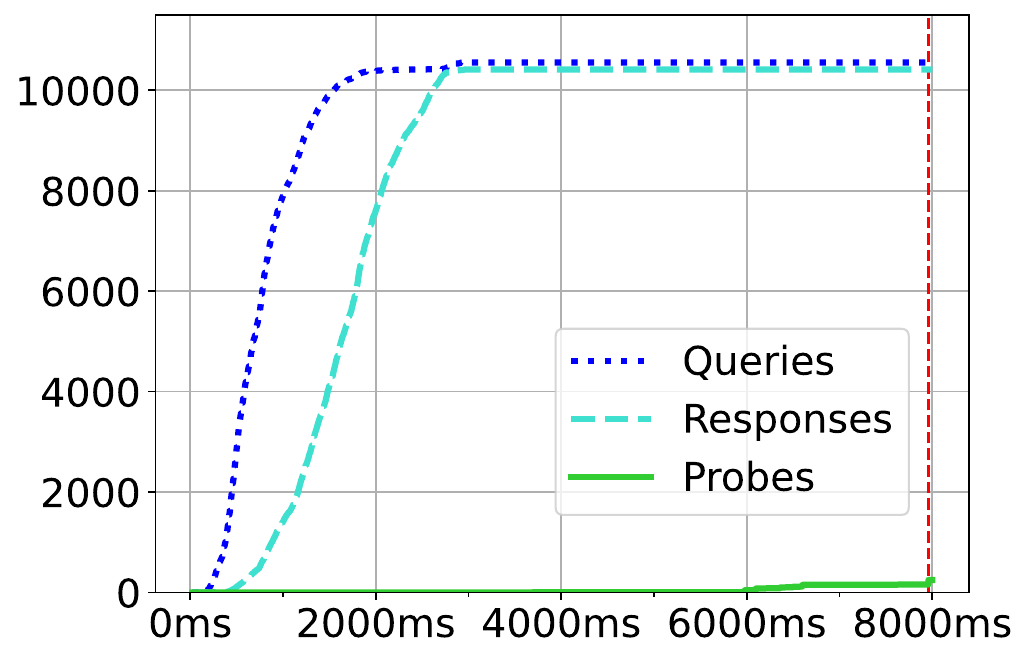}
    \caption{Probe delay = \si{5000}{ms} (green)}\label{fig:timeseries-delay-5000}
  \end{subfigure}\vspace{-2.5mm}\caption{Time series showing executions of monitored systems with 11\,000
    services. The execution in \cref{fig:timeseries-delay-5000}
    ends with a deadlock and its detection is marked with a vertical red line.}
  \label{fig:timeseries}
\end{figure*}
 \changeOurs{change:related-work-overhaul}{Re-organised ``Related work'' section}{\section{Related Work}\label{sec:related}
}

\subsection{Distributed Deadlock Detection Algorithms Based on Edge Chasing}\label{sec:related:edge}

The state of the art in distributed deadlock detection is typically based on the notion of \emph{edge chasing}~\cite{MOSS1985,DBLP:journals/tse/SinhaN85}, used effectively by the seminal work of~\cite{ChandyMisraHaas83,MitchellMerrit84}.
These approaches send probes along the estimated edges of the \emph{wait-for graph}~\cite{DBLP:journals/tse/GligorS80} of the system with the purpose of detecting cycles; different algorithms may send probes in the direction of the wait-for dependency chain or backwards.
The basic ideas and techniques have been improved in terms of correctness~\cite{MitchellMerrit84,DBLP:journals/tse/ChoudharyKST89}, generality~\cite{DBLP:conf/podc/BrachaT84,ChandyMisraHaas83} and performance~\cite{DBLP:journals/tse/SinhaN85,DBLP:journals/sigmod/YeungHL94}.\note{Mention that some work adopts probe delay as optimisation? Which one?}
Compared to this body of work, we show that edge chasing can be leveraged in black-box monitoring, by producing probes and reaching deadlock detections based only on observing incoming and outgoing service messages (see \Cref{fig:example-mon-service}).

Our work distinguishes itself as it contributes:
\begin{enumerate*}
\item\label{it:form}formalised message-passing semantics of
\item\label{it:bbox}black-box monitors for distributed detection, which we
\item\label{it:correct} prove sound and complete,
\item\label{it:mech} via an automated theorem prover.
\end{enumerate*}
None of the distributed deadlock detection algorithms we are aware of address requirement~\ref{it:bbox}.
Moreover, \cite{MitchellMerrit84} provide (non-machine checked) proofs for deadlock detection completeness but not
soundness, potentially allowing for false positives.
Neither of~\cite{DBLP:journals/tse/SinhaN85,DBLP:conf/compsac/IsloorM78,DBLP:journals/tse/ChoudharyKST89,DBLP:journals/tse/GligorS80} proves any correctness results, whereas the proofs in~\cite{MitchellMerrit84,ChandyMisraHaas83,DBLP:journals/tocs/Badal86,DBLP:conf/podc/BrachaT84,DBLP:journals/dpd/SrinivasanR11} are not formal (and even less so mechanisable).
Such proofs are typically based on high-level graph-based arguments without any operational semantics such as ours.

The distributed deadlock detection algorithm closest to ours (\Cref{def:deadlock-algo}) is~\cite{MitchellMerrit84},
which propagates probes backwards similarly to us --- but does not have an equivalent to our \texttt{waiting} set.
As a result, that algorithm requires clients to periodically poll for probe updates from the services they are locked on.
\cite{ChandyMisraHaas83} presents two algorithms: one for \emph{communication (OR) model} and the other for \emph{resource (AND) model}, both of which differ from ours in that they use forward-moving probes.
\change{change:related:and-vs-or}{Better explanation of \emph{AND} and \emph{OR} models}{In the \emph{OR} model, a process (i.e., a service) can send requests to multiple processes and unlocks as soon as it receives a response from any of them; instead, in the \emph{AND} model, a process can be locked on more than one resource and proceeds only when \emph{all} of them are acquired.}
The algorithm for the \emph{OR} model is somewhat closer to ours: when a probe reaches its originator, its propagation is reversed and continues backwards along the wait-for graph (similarly to our~\Cref{def:deadlock-algo}).
Instead, the algorithm for the \emph{AND} model uses forward probes only. \change{change:counter}{Added counterexample to forward-moving probes and decision to change algorithm}{We initially tried to formalise a black-box deadlock monitoring algorithm with forward-moving probes inspired by this approach, but we found a subtle counterexample to its soundness, shown in \Cref{fig:counter}.
Let \(A\), \(B\), \(C\) be monitored services:

\begin{enumerate}
\item Assume that service \(A\) is locked on \(B\); \(B\) is locked on \(C\); \(C\) is working.
\item\label{it:counter:send-probe} \(B\)'s monitor sends a (forward-moving) probe to \(C\).
\item\label{it:counter:send-resp} Meanwhile, service \(C\) sends a response to \(B\).
\item Service \(C\) sends a query to \(A\) (hence, \(C\) becomes locked on \(A\)).
\item \(C\)'s monitor receives the probe from \(B\) (sent at step \ref{it:counter:send-probe}) and propagates it to \(A\).
\item \(A\)'s monitor propagates \(B\)'s probe to \(B\).
\item\label{it:counter:report} \(B\)'s monitor receives its own probe from \(A\); since \(B\) is locked, \(B\)'s monitor reports a deadlock.
\item Service \(B\) receives the response from \(C\) (sent at step \ref{it:counter:send-resp}), so \(B\) is not locked anymore. Therefore, the deadlock reported at step \ref{it:counter:report} was false!
\end{enumerate}
The \emph{AND} model algorithm of \cite{ChandyMisraHaas83} seems to sidestep this issue via some implicit assumption:
e.g., the counterexample above does not arise if probes propagate instantaneously or communications ``freeze'' while probes circulate --- but such assumptions would be unrealistic in a distributed setting. Therefore, that algorithm does not seem obviously adaptable to distributed black-box deadlock monitoring. For this reason, we changed our monitor algorithm design from forward-moving to backward-moving probes (more in the style of~\cite{MitchellMerrit84}, which however was not proven sound, as mentioned above). This
allowed us to formulate~\Cref{def:deadlock-algo} and prove our results.
}

\begin{figure}
  \changeNoMargin{{\scalebox{0.63}{\hspace{-6mm}\definecolor{lgray}{RGB}{180,180,180}
\newcommand{\countersetdim}{\useasboundingbox (-2.7,-2.5) rectangle (2.7,0.5);}
\newcommand{\countercap}[1]{\node[draw] at (-2,0) {#1};}
\begin{tabular}{cccc}
\begin{tikzpicture}[->,>=stealth',auto,node distance=2.5cm, semithick,ser/.style={circle, draw, minimum size=20pt},every label/.style={align=left},every node/.style={align=center}]
  \countersetdim
  \countercap{1-3}

  \node[ser] (C) [] {\(C\)};
  \node[ser] (A) [below left of=C] {\(A\)};
  \node[ser] (B) [below right of=C] {\(B\)};

\path[
  draw=black, fill=black, line width=1pt, minimum width=10pt, minimum height=10pt,
  -{Triangle[length=6mm, bend]}
  ]
  (A) edge node {} (B);

  \path[
  draw=lgray, fill=white, line width=1pt, minimum width=10pt, minimum height=10pt,
  -{Triangle[length=6mm, bend]}
  ]
  (B) edge node {} (C);

  \path
  (B) edge[bend right, shorten >=15mm] node[right] {\footnotesize 2: \(B\)'s probe} (C)
  (C) edge[bend right=50, shorten >=15mm] node[below=-10pt] {\footnotesize 3: Response} (B);
\end{tikzpicture}
&
\begin{tikzpicture}[->,>=stealth',auto,node distance=2.5cm, semithick,ser/.style={circle, draw, minimum size=20pt},every label/.style={align=left},every node/.style={align=center}]
  \countersetdim
  \countercap{4-5}

  \node[ser] (C) [] {\(C\)};
  \node[ser] (A) [below left of=C] {\(A\)};
  \node[ser] (B) [below right of=C] {\(B\)};

\path[
  draw=black, fill=black, line width=1pt, minimum width=10pt, minimum height=10pt,
  -{Triangle[length=6mm, bend]}
  ]
  (A) edge node {} (B)
  (C) edge node {} (A);

  \path[
  draw=lgray, fill=white, line width=1pt, minimum width=10pt, minimum height=10pt,
  -{Triangle[length=6mm, bend]}
  ]
  (B) edge node {} (C);

  \path
  (C) edge[bend right] node[left] {\footnotesize 4: Query} (A)
  (B) edge[bend right] node[right] {\footnotesize 5: \(B\)'s probe} (C)
  (C) edge[bend right=50, shorten >=15mm] node[below=-10pt] {\footnotesize Response} (B)
  ;
\end{tikzpicture}
&
\begin{tikzpicture}[->,>=stealth',auto,node distance=2.5cm, semithick,ser/.style={circle, draw, minimum size=20pt},every label/.style={align=left},every node/.style={align=center}]
  \countersetdim
  \countercap{5-7}

  \node[ser] (C) [] {\(C\)};
  \node[ser] (A) [below left of=C] {\(A\)};
  \node[ser,label=below:{\footnotesize 7: DEADLOCK}] (B) [below right of=C] {\(B\)};

\path[
  draw=black, fill=black, line width=1pt, minimum width=10pt, minimum height=10pt,
  -{Triangle[length=6mm, bend]}
  ]
  (A) edge node {} (B)
  (C) edge node {} (A);

  \path[
  draw=lgray, fill=white, line width=1pt, minimum width=10pt, minimum height=10pt,
  -{Triangle[length=6mm, bend]}
  ]
  (B) edge node {} (C);

  \path
  (C) edge[bend right] node[left] {\footnotesize 5: \(B\)'s probe} (A)
  (A) edge[bend right] node[above] {\footnotesize 6: \(B\)'s probe} (B)
  (C) edge[bend right=50, shorten >=15mm] node[below=-10pt] {\footnotesize Response} (B)
  ;
\end{tikzpicture}
&
\begin{tikzpicture}[->,>=stealth',auto,node distance=2.5cm, semithick,ser/.style={circle, draw, minimum size=20pt},every label/.style={align=left},every node/.style={align=center}]
  \countersetdim
  \countercap{8}

  \node[ser] (C) [] {\(C\)};
  \node[ser] (A) [below left of=C] {\(A\)};
  \node[ser] (B) [below right of=C] {\(B\)};

\path[
  draw=black, fill=black, line width=1pt, minimum width=10pt, minimum height=10pt,
  -{Triangle[length=6mm, bend]}
  ]
  (A) edge node {} (B)
  (C) edge node {} (A);

  \path
  (C) edge[bend right=20] node {\footnotesize 8: Response} (B)
  ;
\end{tikzpicture}
\end{tabular}
\hspace{-4mm}
 }}
\vspace{-3mm}\caption{\changeNoMargin{Counterexample for a distributed deadlock monitoring algorithm with forward-moving probes, with a deadlock reported
      for a non-deadlocked service. The numbers indicate the order of events as listed in \Cref{sec:related:edge}. The straight black arrows depict the wait-for graph. The straight gray arrow
      shows that $B$ is locked on $C$, but an incoming response from $C$ is about to unlock $B$. Bent arrows show the message
      flow.}}\label{fig:counter}
  }\end{figure}

\subsection{Monitoring and Other Approaches to Deadlock Detection}

\change{change:cite-hyperprop}{Cite recent work on monitoring hyper-properties}{\noindent There exists recent research on monitoring hyper-properties, although none of it addresses deadlock detection specifically.
For instance,~\cite{Finkbeiner2019,DBLP:conf/forte/AcetoAAF22} provide frameworks for monitoring formulae in different variants of HML logic, but do not discuss correctness of particular applications.
In~\cite{DBLP:conf/concur/AcetoAAFGW24}, the authors synthesise decentralised monitors from hyper-property specifications and guarantee that these monitors are deadlock-free by construction.
Our work achieves this result through transparency (\Cref{lem:net-transp-compl}), but the main target is the detection of deadlocks in the \emph{system being monitored}.
}

The GoodLock algorithm~\cite{DBLP:conf/spin/Havelund00} is one of the first works to propose the detection of potential deadlocks of Java threads \emph{at runtime};
this influential work is extended in \cite{DBLP:conf/hvc/BensalemH05,DBLP:conf/issta/BensalemFHM06} and combined with static analysis techniques in \cite{DBLP:conf/hvc/AgarwalWS05}.
More recently, \cite{DBLP:conf/ifm/AlbertGI16} combined static analysis and testing (often considered as a runtime analysis technique) for deadlock detection.
Neither of the above approaches are fully black-box.  Moreover, no formal soundness or completeness guarantees are provided.
The authors of~\cite{HSNPBM2013} develop an algorithm for the runtime deadlocks detection of MPI programs.
The approach is mostly black box and its scalability is demonstrated through an extensive empirical evaluation.
The work \cite{ISPDC2018} develops a runtime analysis tool to detect concurrency bugs in embedded software, with deadlocks being one of the classes of bugs detected.
The tool mainly relies on trace log analysis and, in that sense, the approach may be considered as black-box.
Both~\cite{ISPDC2018,HSNPBM2013} are however centralised and are \emph{not} backed up by any soundness or completeness guarantees.

\subsection{Proof Mechanisation for Deadlock Detection and Prevention}

There are efforts that mechanise correctness proofs for deadlocks in distributed systems, but these are considerably different from our work.
E.g., \cite{DBLP:conf/pldi/WilcoxWPTWEA15,DBLP:conf/nsdi/HanceHMP21} consider proofs for static deadlock prevention as opposed to deadlock detection at runtime.
More closely to our work, \cite{DBLP:conf/vortex/AudritoH23} present a Coq formalisations for monitors that do not communicate with each other, thereby limiting their analysis to protocol violations, not deadlock detection (which necessarily relies on probes).

The work of~\cite{DBLP:journals/toplas/CogumbreiroHMY19} presents a method and tool for runtime deadlock detection for systems with distributed barrier synchronisations, such as X10~\cite{CharlesX10OOPSLA2005}.
Their approach employs decentralised runtime entities called distributed barrier programs that can be seen as monitors, but it is unclear to what extent the implementation is (or can be made to be) black-box since this was not one of the requirements of the work.
Their deadlock analysis relies on a different abstraction to wait-for graphs called \emph{task-event graphs}.
This is accompanied by mechanised proofs in Coq showing that detections based on task-event graphs are sound and complete with respect to actual deadlocks.
Despite the similarities, these proofs present graph-based arguments whereas our proofs are closer to the actual implementation, verifying the operational behaviour and verdicts of our decentralised monitors with respect to the deadlocks occurring in the network.
For instance, proofs of monitor transparency do not feature in the mechanised proofs of~\cite{DBLP:journals/toplas/CogumbreiroHMY19}.

\note{There are some notes about related work commented out here}%
 \section{Conclusion and Future Work}
\label{sec:conclusion}

We presented the first formal theory of distributed black-box monitors for
deadlock detection in RPC-based systems. We introduced the general correctness
\Cref{crit:transparency,crit:detection-preciseness} (transparency and detection
preciseness) and we proved that our monitors and algorithm
(\Cref{def:deadlock-algo}) satisfy them
(\Cref{lem:net-transp-compl,lem:net-transp-sound,thm:deadlock-detection-preciseness}).
Our results are mechanised in Coq. We also presented and evaluated \Toolname, a
practical \changeCameraReady{Added for consistency}{prototype} tool based on our theoretical results.

As future work, we plan to instantiate our generic monitoring framework
(\Cref{sec:monitoring}) with other deadlock detection algorithms (e.g.~having
probe propagation strategies closer
to~\cite{ChandyMisraHaas83,MitchellMerrit84}) investigating whether they are sound and
complete. If so, we can easily add such algorithms to \Toolname, and evaluate
and compare their performance.
\change{change:future:delay}{Mention formalisation of delayed probes}{We plan to formalise and prove the correctness of the delayed probe emission strategy used in \Cref{sec:tool:evaluation:overhead}:
  we do not expect conceptual difficulties, although the mechanisation may require a large number of small changes.}
\change{change:future:fairness}{Mention mechanisation of fairness}{We also plan to mechanise \emph{fairness of components}~\cite{Glabbeek_2019} and thus the final part of~\Cref{thm:detect-completeness-upgrade}
  --- although mechanising fairness appears challenging and we are not aware of other attempts in the literature.}

\change{change:and-or-preliminary}{Mention preliminary results of experiments with AND and OR models}{
We have preliminary results indicating that our algorithm can support the \emph{AND}
resource model of \cite{ChandyMisraHaas83} (i.e., sending multiple queries and then awaiting all responses)
with minimal adjustments. The \emph{OR} model is more challenging: it is
not obvious how to correctly identify ``\emph{OR} queries'' with a black-box monitoring
approach; after this design question is solved, its formalisation will likely
lead to different fundamental definitions and invariants w.r.t.~the algorithm in
this paper. We have already tested some ideas and a preliminary implementation
of a deadlock monitoring algorithm (inspired by~\cite{ChandyMisraHaas83}), but the formalisation is
future work.}

We also plan to build upon our results to develop a theory for distributed
deadlock detection via \emph{observing monitors}, which only receive
\emph{notifications} about messages sent/received by a service (akin to the
\texttt{trace} mechanism in Erlang/OTP~\cite{erlang-doc-trace}).
The advantages of this approach are \changeCameraReady{Toning down a bit}{potentially} lower overhead w.r.t.~our proxy monitors, as well as transparency by construction.
The disadvantage is that the delays introduced by notifications can
cause monitor states and probes to become subtly outdated, leading to false or
missed deadlock reports. We have already found counterexamples showing that our
algorithm (\Cref{def:deadlock-algo}) and~\cite{MitchellMerrit84} become
unsound if na\"ively used in this setting. Therefore, we expect that
designing a distributed deadlock detection algorithm for observing monitors that
can be proven sound and complete will be a non-trivial effort.

\change{change:timeouts}{Mention handling timeouts as future work}{Another topic worth investigating is how to handle RPC calls with timeouts. For that, we need to
  address a key design question: how can a black-box monitor become aware of a timeout occurring
  within the overseen service? If the monitor is somehow notified about the timeout, then we believe
  our algorithm can be adapted and remain sound and complete (albeit its ``black-box'' nature might be
  questioned); otherwise, fundamental definitions of ``lock'' and ``deadlock'' may need to be
  changed, and sound and complete black-box monitoring may turn out to be impossible.}

\note{There are some notes about the conclusion commented out here}%
 
\section*{Acknowledgements}
We thank Denis Drobny and Motorola Solutions Danmark A/S for providing examples and for the insightful discussion that started this research.
This work was partially supported by the Independent Research Fund Denmark project \emph{``Hyben''} and by the University of Malta Research Seed Fund grant no.~CPSRP01-25 \emph{``Runtime Monitoring for IoT.''}

\pagebreak
\section*{Data Availability Statement}

The companion artifact of this work is \Toolname~\cite{rowicki_2025_16909304}, a distributed deadlock
detection monitoring tool for Erlang and Elixir programs based on the
\texttt{gen\_server} behaviour. \Toolname is described in
\Cref{sec:tool:implementation} and \iftoggle{techreport}{\Cref{app:implementation}}{\cite[Appendix A]{TECHREPORT}}. The artifact
replicates the experiments and measurements described in
\Cref{sec:tool:evaluation}. 

As supplementary material, the Coq mechanisation of our main results is
available in~\cite{rowicki_2025_16909482}. The mechanised results are marked
throughout the paper with the symbol \coqed.

\bibliography{main}

\iftoggle{techreport}{\newpage
    {\huge{Appendices}}

    \appendix

    \section{Additional Information about \toolname}
\label{app:implementation}

In Elixir applications, a programmer can alias \texttt{gen\_server} with
\toolname as shown in \Cref{fig:instr-code} (top lines; the rest of the figure
shows unchanged application code).

\noindent
\begin{minipage}{\textwidth}
  \centering
  \begin{minipage}{0.485\linewidth}
    \lstinputlisting[language=Ruby,xrightmargin=0pt,framexrightmargin=0pt]{code/instr.ex}
  \end{minipage}
  \captionof{figure}{Example usage of \toolname in Elixir.}\label{fig:instr-code}\end{minipage}

The instrumentation of \toolname is a bit more verbose in Erlang w.r.t.~Elixir,
because Erlang lacks Elixir's \texttt{alias} feature: this is shown in
\Cref{fig:instr-code-erlang} (compare with \Cref{fig:instr-code}). To better support Erlang, \Toolname could adopt the instrumentation strategies
discussed in \cref{footnote:alternative-instrumentation}, or an automatic AST
rewriting to replace every use of \texttt{gen\_server} APIs with the
corresponding \Toolname APIs. (At the time of writing, we have not yet
implemented these facilities.)

\noindent
\begin{minipage}{\textwidth}
  \centering
  \begin{minipage}{0.5\linewidth}
    \lstinputlisting[language=Erlang,escapechar=\&,xrightmargin=0pt]{code/instr.erl}
\end{minipage}
  \captionof{figure}{Example of usage of \toolname in Erlang.}
  \label{fig:instr-code-erlang}
\end{minipage}

\subsection{Testing Framework and DSL}
\label{app:implementation:testing-dls}

\Toolname comes with a generic server implementation which interprets queries as expressions of a
simple DSL\@. The DSL allows programming entire behaviours of the system by specifying topologies of
nested \emph{query sessions}. Session is a collection of all queries that were sent as a consequence
of a single initial call from outside the network; for example, in \Cref{fig:deadlock-envelope}
three sessions can be distinguished: (1,4,7), (2,5,8) and (3,6,9). Each query contains information
on how its session should be executed further, most importantly which other services to call and
with what instructions. This design allows defining various scenarios very flexibly, without the
need of implementing each service individually.

\Cref{lst:envelope-scenario} shows how the example from \Cref{fig:deadlock-envelope} is implemented
as a test case for \toolname. Services are identified by integers, of which 0, 2 and 4 correspond to
endpoints (\(E\)), while 1, 3 and 5 to proxies (\(P\)). The program starts three similarly
programmed sessions: for example, \texttt{s1} begins with a query to the service number \texttt{0}
instructing it to first wait for some unspecified time and then call the service number \texttt{3},
ordering it to send its query to \texttt{5}. If \texttt{5} receives it while not busy with a
different query, it replies immediately to \texttt{3} as there are no more instructions to follow.

To assist in diagnosing and fixing deadlocks, \toolname allows to optionally collect logs using the Erlang/OTP
\texttt{trace} mechanism, and visualise them as shown in \Cref{lst:envelope-log-deadlock}, where
services fell into the deadlock as pictured in \Cref{fig:deadlock-envelope}. Each column lists
events happening in each service chronologically from top to bottom: services exchange queries (e.g.
\texttt{M5~!~Q(s3)}) and monitors conclude that they are locked and activate fresh probes
(\texttt{=>~LOCK(P0)}). As monitors exchange probes (\texttt{M1~!~P3}), one of them eventually
receives its own active probe (\texttt{?~P3} in \texttt{M3}) and reports a deadlock.

\Cref{lst:more-tool-logs} shows two logs from the same application, with a
non-deterministic deadlock that occurs in one of the two runs.

\lstinputlisting[language=Erlang, escapechar=\&,caption={Scenario from \Cref{fig:deadlock-envelope} implemented as a test case for \toolname.},label={lst:envelope-scenario}]{code/session.erl}

\noindent
\begin{minipage}{\textwidth}
  \lstlog{
    \lstinputlisting[language=Log,xrightmargin=0pt,xleftmargin=2.5em,framexleftmargin=0pt,framexleftmargin=0pt,caption={Log produced by \toolname on a \texttt{gen\_server}-based application based on \Cref{fig:deadlock-envelope-code}. Column headers are service names,
      and each column shows events in chronological order. Queries (\qtag) are marked with tags (\texttt{s1, s2, s3}) to visualise their causality. This run ends with a deadlock, which is reported. Notice that the last line reports the exact set of deadlocked services,
      which in this example does \emph{not} involve services \texttt{M2} nor \texttt{M4}:
      those services are locked on the deadlocked services, but are not deadlocked themselves.},label={lst:envelope-log-deadlock} ]{code/envelope-log-deadlock.txt}
  }\end{minipage}

\noindent
\begin{minipage}{\textwidth}
  \begin{minipage}[t]{0.49\linewidth}
    \lstlog{\lstinputlisting[language=Log,xrightmargin=0pt,xleftmargin=2.5em,framexleftmargin=0pt,framexleftmargin=0pt,caption={Successful execution.}]{code/session-success.txt}}\end{minipage}\hfill
  \begin{minipage}[t]{0.49\linewidth}
\lstlog{\lstinputlisting[language=Log,xrightmargin=0pt,xleftmargin=1em,framexleftmargin=0pt,framexleftmargin=0pt,numbers=none,caption={Deadlocking execution.}]{code/session-deadlock.txt}}\end{minipage}
  \captionof{figure}{Two \Toolname logs from two executions of the same \texttt{gen\_server}-based application, with two different results.}
  \label{lst:more-tool-logs}\end{minipage}

 \section{Implementing Replicated SRPC Services}\label{app:replicated}

\definecolor{workbusycolor}{rgb}{1,0.8,0.8}
\definecolor{workfreecolor}{rgb}{0.9,1.0,0.9}
\definecolor{workcolor}{rgb}{0.9,0.9,0.9}
\definecolor{replcolor}{rgb}{0.95,0.95,0.95}
\definecolor{clientcolor}{rgb}{1,1,1}

\newcommand{\repltikz}[1]{
  \begin{tikzpicture}[->,>=stealth', shorten >=2pt, auto, node distance=2cm, semithick,work/.style={circle, fill=workcolor, draw, minimum size=10pt, node distance=1cm},repl/.style={circle, fill=replcolor, draw, node distance=2cm, minimum size=10pt},client/.style={circle, fill=clientcolor, draw, node distance=2cm, minimum size=10pt},initedge/.style={gray},every label/.style={align=left},every node/.style={align=center}]

    \node[repl] (r) {\(R\)};

    \node[client] (i0) [above left of=r] {\(C_0\)};
    \node[client] (i1) [below left of=r] {\(C_1\)};

    \ifthenelse{\equal{#1}{\string 0}}
    {
      \node[work, fill=workfreecolor, node distance=1.7cm] (w0) [above right of=r] {\(W_0\)};
      \node[work, fill=workbusycolor, node distance=1.7cm] (w1) [right of=r] {\(W_1\)};
      \node[work, fill=workfreecolor, node distance=1.7cm] (w2) [below right of=r] {\(W_2\)};

      \path
      (i0) edge[bend left] node[near start] {1} (r) (r) edge[bend left] node[near start, below left] {2 {\scriptsize(\(W_0\))}} (i0) (i0) edge[bend left] node[near start] {3} (w0) ;
    }{}
    \ifthenelse{\equal{#1}{\string 1}}
    {
      \node[work, fill=workbusycolor, node distance=1.7cm] (w0) [above right of=r] {\(W_0\)};
      \node[work, fill=workbusycolor, node distance=1.7cm] (w1) [right of=r] {\(W_1\)};
      \node[work, fill=workfreecolor, node distance=1.7cm] (w2) [below right of=r] {\(W_2\)};

      \path
      (i1) edge[bend left] node[near start] {4} (r) (r) edge[bend left] node[near start, right] {5 {\scriptsize(\(W_2\))}} (i1) (i1) edge[bend right] node[near start] {6} (w2) ;
    }{}
    \ifthenelse{\equal{#1}{\string 2}}
    {
      \node[work, fill=workfreecolor, node distance=1.7cm] (w0) [above right of=r] {\(W_0\)};
      \node[work, fill=workfreecolor, node distance=1.7cm] (w1) [right of=r] {\(W_1\)};
      \node[work, fill=workbusycolor, node distance=1.7cm] (w2) [below right of=r] {\(W_2\)};

      \path
      (w0) edge[bend left] node[near start] {6} (i0) (i0) edge[out=240, in=170, dashed] node[left, near start] {7} (r) ;
    }{}
  \end{tikzpicture}
}

Despite its streamlined design, our SRPC service model is sufficiently
expressive to capture \emph{service replication}, where one logical
\emph{replicating server} distributes queries among a number of \emph{workers}.
A client of such a server needs to first obtain the name of an available worker
by querying the server. The server replies with the first free worker in its
pool, and if all are busy, picks one of them. The client then queries the
received worker directly; after receiving a response, the client releases the
worker by sending a cast to the replicating server.

\medskip \scalebox{0.8}{\begin{minipage}{1.2\linewidth}
\repltikz{0}\hfill
\repltikz{1}\hfill
\repltikz{2}\end{minipage}
}\medskip 

The diagrams above illustrate this on an example involving a replicating server
\(R\) with three workers \(W_{0,1,2}\) and two clients \(C_{0,1}\). Initially,
the worker \(W_1\) is busy, while \(W_0\) and \(W_2\) are available. The client
\(C_0\) calls \(R\) and receives a reference to \(W_0\), to which it sends its
query. \(R\) considers \(W_0\) busy. In the meantime, \(C_1\) also requests a
worker, and gets assigned \(W_2\). Finally, \(C_0\) receives a response from
\(W_0\) and sends a cast to \(R\), so that it considers \(W_0\) available again.

Notice that replicated services can be used to attenuate deadlocks, but they
cannot always prevent them.
A deadlock caused by circular queries involving services $S_1,\ldots,S_n$ may be
avoided if the deadlocking RPC query from \(S_n\) to \(S_1\) is assigned to
service \(S'_1\) that is a replica of \(S_1\).
However, even if \(S_1\) has \(m\) replicas, there is a chance that the request
from \(S_n\) is assigned to \(S_1\) anyway --- e.g., if all replicas are
busy.

\section{Proof of \Cref{thm:detect-completeness-upgrade}}
\label{app:proof-completeness-upgraded}

\newcommand{\mpathLength}[1]{\operatorname{length}({#1})}\change{change:proof:eventual-deadlock}{Proof updated with a simpler distance measure, and clarified.}{From the statement and proof of \Cref{thm:deadlock-detection-preciseness} we
  know that there is a monitor-flushing path \(\mpath'\) (constructed by
  \Cref{thm:alarm-condition}) that begins with \(\MNet'\) and reaches a network
  state where a monitor reports a deadlock; also, in \(\MNet'\) there is an
  active probe \(\Probe\) that causes a deadlock alarm as \(\Probe\) reaches
  that monitor through \(\mpath'\). The monitor-flushing path \(\mpath'\) built
  by \Cref{thm:alarm-condition} has the shape:

  \smallskip \centerline{\(\mpath' \;=\; \MAct_1 \cdot \ldots \cdot \MAct_m \cdot \mpath''_1 \cdot \ldots \cdot \mpath''_k\qquad \text{where \(\forall i \in 1..k\):}\quad \mpath''_i \;=\; \mnetcomm{\Name_i}{\Name'_i}{\Probe} \cdot \MAct'_{1,i} \cdot \ldots \MAct'_{j_i,i}
\)}\smallskip 

  \noindent where \(\MAct_i\) are monitor-flushing actions of the monitor that has the probe \(\Probe\) in its
  monitor queue (those actions do \emph{not} forward \(\Probe\)). In other words, the
  path \(\mpath'\) says that the active probe \(\Probe\) will be forwarded \(k\) times
  (once per sub-path \(\mpath''_i\), for \(i \in 1..k\)) after \(m\)
  monitor-flushing actions occur (which empty the monitor queue containing \(\Probe\)); each
  forwarding may be followed by more actions (denoted as \(\MAct'_{j,i}\) above).
We can then define \(\mpathLength{\mpath'}\) as the pair \((k, m)\): intuitively, this represents the distance between the active probe \(\Probe\) and the monitor that owns (i.e., created) \(\Probe\).

  We then prove that such a distance never increases as the network \(\MNet'\) runs,
  and can be decreased toward \((0, 0)\). More precisely, for any \(\MNet''\) reachable from \(\MNet'\), if we take the
  monitor-flushing path \(\mpath''\) constructed by \Cref{thm:alarm-condition}
  in \(\MNet''\), we have \(\mpathLength{\mpath''} \leq \mpathLength{\mpath'}\)
  by standard lexicographic ordering. Moreover, in any such \(\MNet''\) where \(\mpathLength{\mpath''} > (0, 0)\),
  there is always a monitor with an
  enabled monitor-flushing action \(\MAct\) such that \(\MNet'' \trans{\MAct}
  \MNet'''\) and, if we take the  monitor-flushing path \(\mpath'''\) constructed
  by \Cref{thm:alarm-condition} in \(\MNet'''\), we have \(\mpathLength{\mpath'''} < \mpathLength{\mpath''}\).

  Hence, assuming fairness of components~\cite{Glabbeek_2019}, every
  monitor that reduces the current monitor-flushing path constructed by
  \Cref{thm:alarm-condition} will be eventually scheduled to execute an action.
  Therefore, the length of the path constructed by
  \Cref{thm:alarm-condition} will eventually reach \((0, 0)\), and the
  monitor that owns \(\Probe\) will raise an alarm.
  \emph{(Note: this last paragraph from ``Hence\ldots'' is not part of our Coq mechanisation.)}
} }{}

\end{document}